\newtheorem{theorem}{Theorem}[section]
\newtheorem{definition}[theorem]{Definition}
\newtheorem{lemma}[theorem]{Lemma}
\newtheorem{proposition}[theorem]{Proposition}
\newtheorem{remark}[theorem]{Remark}
\newtheorem{cond}[theorem]{Condition}
\newcommand{\be}{\begin{equation}}
\newcommand{\ee}{\end{equation}}
\newcommand{\bea}{\begin{align}}
\newcommand{\ea}{\end{align}}
\newcommand{\nn}{\nonumber}
\newcommand{\prob}{\mathbb P}
\newcommand{\expec}{\mathbb E}
\newcommand{\atanh}{{\rm atanh}}
\newcommand{\asinh}{{\rm asinh}}
\newcommand{\dint}{{\rm d}}
\newcommand{\bbR}{\mathbb{R}}
\newcommand{\bbN}{\mathbb{N}}
\newcommand{\sss}{\scriptscriptstyle}
\numberwithin{equation}{section}
\newcommand{\e}{{\mathrm e}}
\newcommand{\sN}{n}
\newcommand{\rem}[1]{}
\newcommand{\tb}{\tilde{\beta}}
\newcommand{\bft}{{\bf t}}
\newcommand{\bfs}{{\bf s}}
\newcommand{\cF}{{\cal F}}
\newcommand{\tS}{{\tt S}}
\newcommand{\tI}{{\tt I}}
\newcommand{\mwn}{m^{\sss(w)}_{\sN}}
\newcommand{\psiqe}{\psi^{\mathrm{qe}}}
\newcommand{\psian}{\psi^{\mathrm{an}}}
\newcommand{\betacqe}{\beta^{\mathrm{qe}}_c}
\newcommand{\betacan}{\beta^{\mathrm{an}}_c}
\newcommand{\muqe}{\mu^{\mathrm{qe}}}
\newcommand{\muan}{\mu^{\mathrm{an}}}
\newcommand{\Zqe}{Z^{\mathrm{qe}}}
\newcommand{\Zan}{Z^{\mathrm{an}}}
\newcommand{\ZICW}{Z^{\sss {\mathrm{ICW}}}}
\newcommand{\muICWt}{\mu^{\sss {\mathrm{ICW}}}_{n, {\mathbf{s}}}}
\def\1{{\mathchoice {1\mskip-4mu\mathrm l}      
{1\mskip-4mu\mathrm l}
{1\mskip-4.5mu\mathrm l} {1\mskip-5mu\mathrm l}}}
\newcommand{\indic}[1]{\1_{\{#1\}}}
\newcommand{\m}{\mu_{{\sN}}} 
\newcommand{\GRGnw}{\mathrm{GRG}_{\sN}(\boldsymbol{w})}
\newcommand{\col}[1]{\textcolor[rgb]{0,0,0}{#1}}
\newcommand{\eqn}[1]{\begin{equation}#1\end{equation}}
\newcommand{\eqan}[1]{\begin{align}#1\end{align}}
\newcommand{\convp}{\stackrel{\sss {\mathbb P}}{\longrightarrow}}
\newcommand{\convd}{\stackrel{\sss {\mathcal D}}{\longrightarrow}}
\newcommand{\E}{\mathbb E}
\newcommand{\R}{\mathbb R}
\newcommand{\bm}{{\bf m}}
\newcommand{\bp}{{\bf p}}
\newcommand{\bz}{{\bf z}}
\newcommand{\by}{{\bf y}}
\newcommand{\bJ}{{\bf J}}
\newcommand{\bt}{{\bf t}}
\newcommand{\bX}{{\bf X}}
\newcommand{\bx}{{\bf x}}
\newcommand{\bu}{{\bf u}}
\newcommand{\mnw}{{m^{\sss(w)}_{\sN}}}
\newcommand{\rnw}{{r^{\sss(w)}_{\sN}}}
\newcommand{\an}{\mathrm{an}}
\newcommand{\ch}[1]{\color{black}#1\color{black}}
\begin{document}
\title{Large deviations for the annealed Ising model\\on inhomogeneous random graphs: spins and degrees}
\author{
Sander Dommers
\footnote{Ruhr-University Bochum, Universit\"atsstra\ss e 150, 44780 Bochum, Germany. {\tt Sander.Dommers@ruhr-uni-bochum.de}}
\and
Cristian Giardin\`a
\footnote{University of Modena and Reggio Emilia, via G. Campi 213/b, 41125 Modena, Italy. {\tt cristian.giardina@unimore.it}}
\and
Claudio Giberti
\footnote{University of Modena and Reggio Emilia, via Amendola 2, 42122 Reggio Emilia, Italy. {\tt claudio.giberti@unimore.it}}
\and
Remco van der Hofstad
\footnote{Eindhoven University of Technology, P.O.\ Box 513, 5600 MB Eindhoven, The Netherlands. {\tt rhofstad@win.tue.nl}}
}

\maketitle

\pagenumbering{arabic}

\begin{abstract}
We prove a large deviations principle for the total spin and the number of edges under the annealed Ising measure on generalized random graphs. We also give detailed results on how the annealing over the Ising model changes the degrees of the vertices in the graph and show how it gives rise to interesting {\em correlated} random graphs.
\end{abstract}

\section{Introduction and main results}

Recently, there has been substantial work on Ising models on random graphs, as a paradigmatic model for dependent random variables on complex networks.
While much work exists on random graphs with {\em independent} randomness on the edges or vertices, such as percolation and first-passage percolation (see \cite{SaintFlour-vdH} for a substantial overview of results for these models on random graphs), the dependence of the random variables on the vertices raises many interesting new questions. We refer to 
\cite{Can17a, Can17b, DM, DomGiaGibHofPri16, DGH, DomGiaHof12, GGvdHP, GGvdHP2} for recent results on the Ising model on random graphs, as well as \cite[Chapter 5]{SaintFlour-vdH} and \cite{DM2} for overviews. The crux about the Ising model is that the variables that are assigned to the vertices of the random graph wish to be {\em aligned}, thus creating positive dependence. Since the Ising model lives on a random graph, we are dealing with non-trivial {\em double randomness}  of both the spin system as well as the random environment. While \cite{DM, DGH, DomGiaHof12, GGvdHP2} study the {\em quenched} setting, in which the random graph is either fixed (random-quenched) or the Boltzmann-Gibbs measure is averaged out with respect to the random medium (averaged-quenched), recently the {\em annealed} setting, in which both the partition function and the Boltzmann weight are averaged out separately has attracted substantial attention \cite{Can17a, Can17b, DomGiaGibHofPri16, GGvdHP}. The random graph models investigated are rank-1 inhomogeneous random graphs \cite{DomGiaGibHofPri16,GGvdHP}, as well as random regular graphs and configuration models  \cite{Can17a, Can17b, GGvdHP2}. Depending on the setting, the annealed setting may have a different critical temperature. However, as predicted by the non-rigorous physics work \cite{LeoVazVesZec02, DorGolMen08}, the annealed Ising model turns out to be in the same universality class as the quenched model for all settings investigated \cite{Can17b, DomGiaGibHofPri16, DomGiaHof12}.

In this paper, we extend the analysis of the annealed Ising model on inhomogeneous random graphs to their {\em large deviation properties}. We investigate both the large deviations of the total spin, which is a classical problem dating back at least to Ellis \cite{Ell06, Ellis}, but we also consider the large deviation properties under the annealed measure of purely graph quantities, such as the number of edges or the vertex degrees.  Such problems are in general difficult since the rate function is not convex at low temperatures ($\beta>\beta_c$), so the G\"artner-Ellis theorem cannot be used directly. 

Our main results provide a formula for the large deviation function of the total spin that holds true even when the hypothesis of the theorem are not satisfied, i.e., at {\em low} temperatures.
This formula is indeed valid for all values of the parameters determining the phase diagram. To overcome the lack of differentiability of the annealed pressure (which is a
necessary condition for the application of the G\"artner-Ellis theorem) at low temperatures, we shall use the key property that the annealed Ising model on the
generalized random graph can be mapped to an inhomogeneous mean-field (Curie-Weiss) model. As a consequence, the large deviation function of the total spin 
can be deduced from classical results for independent variables and application of the Varadhan's lemma.

The study of large deviations for the number of edges brings the fact to light that, if one focuses solely on graph observables
and properties, then annealing can be described in terms of a {\em modified law} for the graph.
Our results  show that in the annealed setting, the typical number of edges present is substantially larger than the typical value under the original law
of the graph, thus quantifying the effect that the annealing has on the structure of the random graph involved. 
As explained in more detail below, one could think of the annealed Ising model on a random graph as giving rise to a random graph with an 
interesting {\em correlation structure} between the edges. To gain more understanding on this correlation structure we also
investigate the degrees distribution under the annealed Ising measure. Again we find that the degree of a fixed vertex
(or the degree of a uniformly chosen vertex) under the modified graph law has a distribution with a larger mean.

\subsection{The annealed Ising model on generalized random graphs}
\label{sec-model}
We now introduce the model. We first define the specific random graph model, the so-called {\em generalized random graph}, and then define the (annealed) Ising model.

\subsubsection{Generalized random graph}
\label{sec-GRG}
To construct the {\em generalized random graph} \cite{BriDeiMar-Lof05}, let $I_{ij}$ denote the Bernoulli indicator that the edge between vertex $i$ and vertex $j$ is present and let  
$p_{ij} = \mathbb{P}\left(I_{ij} = 1\right)$ be the edge probability, where different edges are present independently. Further, consider a sequence of non-negative weights
$\boldsymbol{w} = (w_i)_{i \in [\sN]}$ whose label $i$ runs through the vertex set $[\sN]=\{1,\ldots,\sN\}$.
Then, the generalized random graph,  denoted by $\GRGnw$,
is defined by 
	\be
	\label{eq-prob-edge}
	p_{ij} \, = \, \frac{w_i w_j}{\ell_{\sN} + w_i w_j},
	\ee
where $\ell_{\sN}= \sum_{i\in[n]} w_i$ is the total weight of all vertices. 
Denote the law of $\GRGnw$ by $\prob$ and its expectation by $\expec$. 
There are many related random graph models (also called rank-1 inhomogeneous random graphs \cite{BolJanRio07}), such as the random graph with specified expected degrees or Chung-Lu model \cite{ChuLu02b, ChuLu06c} and the Poisson random graph or Norros-Reittu model \cite{NorRei06}. Janson \cite{Jans08a} shows that many of these models are asymptotically equivalent. Even though his results do not apply to the large deviation properties of these random graphs, all our results also apply to these other models.

We need to assume that the vertex weight sequences $\boldsymbol{w} = (w_i)_{i \in [\sN]}$ are sufficiently nicely behaved. 
Let $U_{\sN} \in [\sN]$ denote a uniformly chosen vertex in $\GRGnw$ and $W_{\sN} = w_{U_{\sN}}$ its weight. Then, the following condition defines the asymptotic weight $W$ and set the convergence properties of  $(W_{\sN})_{\sN\ge 1}$ to $W$:

\begin{cond}[Weight regularity]  There exists a random variable $W$  such that, as $\sN\rightarrow\infty$,
\label{cond-weightreg}
\begin{enumerate}[(a)]
\item $W_{\sN} \stackrel{\cal D}{\longrightarrow} W$, where $ \stackrel{\cal D}{\longrightarrow}$ denotes convergence in distribution;
\item $\mathbb{E}[W_{\sN}]=\frac{1}{\sN}\sum_{i\in [\sN]} w_{i} \rightarrow \mathbb{E}[W]< \infty$;
\item $\mathbb{E}[W_{\sN}^2]=\frac{1}{\sN}\sum_{i\in [\sN]} w_{i}^2 \rightarrow \mathbb{E}[W^2]< \infty$;
\end{enumerate}
Further, we assume that $\mathbb{E}[W]>0$.
\end{cond}

\medskip

As explained in more detail in \cite[Chapter 6]{vdH}, conditions (a)--(b)  imply that the empirical degree distribution of the random graph converges to a mixed Poisson distribution 
with mixing distribution $W$, i.e., the proportion of vertices with degree $k$ is close to the probability that a Poisson random variable with random parameter $W$ equals $k$.\\
We note also that, by uniform integrability,  Condition~\ref{cond-weightreg}(c)  implies (b).\\ \\
\noindent 
{\bf Notation.} Throughout this paper, given a probability measure $\mu$ we denote by $\expec_\mu$ the average w.r.t.\ $\mu$. 
\subsubsection{Annealed Ising model}
\label{sec-AIM}
Let $\sigma=(\sigma_i)_{i\in[\sN]} \in \{-1,+1\}^{\sN} =: \Omega_{\sN}$ be a spin configuration. Then, for a given graph $G_{\sN}=([\sN],E_{\sN})$, where $E_n  \subset [\sN] \times [\sN]$ denotes the edge set, the Ising model is defined by the following Boltzmann-Gibbs measure
	\be
	\label{eq-quenchedIsing}
	\muqe_{\sN}(\sigma) = \frac{1}{\Zqe_{\sN}(\beta,B)} \exp\left\{\beta \sum_{(i,j)\in E_{\sN}} \sigma_i \sigma_j + B\sum_{i\in[\sN]} \sigma_i\right\},
	\ee
where
	\be
	\Zqe_{\sN}(\beta,B) = \sum_{\sigma\in \Omega_{\sN}} \exp\Big\{\beta \sum_{(i,j)\in E_{\sN}} \sigma_i \sigma_j + B\sum_{i\in[\sN]} \sigma_i\Big\}\nn
	\ee
is the quenched partition function. Here $\beta\geq0$ is the inverse temperature and $B\in \mathbb{R}$ is the external field. When $G_{\sN}$ is a random graph, this is known as the {\em random quenched} Ising model~\cite{GGvdHP2}.

To obtain the {\em annealed model}, we take expectations with respect to the random graph measure in both the numerator and denominator of~\eqref{eq-quenchedIsing}, i.e., we define the {\em annealed} Ising measure by
	\be
	\label{eq-annealedIsing}
	\muan_{\sN}(\sigma) = \frac{\expec\biggl[\exp\biggl\{\beta \sum_{(i,j)\in E_{\sN}} \sigma_i \sigma_j + B\sum_{i\in[\sN]} \sigma_i\biggr\}\biggr]}{\Zan_{\sN}(\beta,B)} ,
	\ee
where the annealed partition function $\Zan_{\sN}(\beta,B)$ is equal to
	\be
	\Zan_{\sN}(\beta,B)= \expec[\Zqe_{\sN}(\beta,B)] =  \sum_{\sigma\in \Omega_{\sN}} \expec\biggl[\exp\biggl\{\beta \sum_{(i,j)\in E_{\sN}} \sigma_i \sigma_j \nn 
	+ B\sum_{i\in[\sN]} \sigma_i\biggr\}\biggr].
	\ee

\subsubsection{Previous results for the annealed Ising model on the generalized random graph}
\label{sec-previous-AIM}
In this section, we describe some important results about the annealed Ising model that have been derived previously.  An important quantity in the study of the annealed Ising model is the {\em annealed pressure} defined by
	\be
	\psian_{\sN}(\beta,B) = \frac{1}{\sN}\log  \Zan_{\sN}(\beta,B).\nn
	\ee
The thermodynamic limit of this quantity $\psian(\beta,B):=\lim_{\sN\to\infty} \psian_{\sN}(\beta,B)$ is determined in the following theorem:

\begin{theorem}[Annealed pressure \cite{GGvdHP}]\label{thm-annealedpressure}
Suppose that 
\col{Condition~\ref{cond-weightreg}
holds}. Then for all $0\le \beta < \infty$ and all $B \in \R$,
	\be
	\label{eq-annpressure}
	\psian(\beta,B) = \log 2 + \alpha(\beta) + \expec\biggl[\log\cosh\biggl(\sqrt{\frac{\sinh(\beta)}{\expec[W]}} W z^\star(\beta,B) + B\biggr)\biggr] - z^\star(\beta,B)^2/2,
	\ee
where $\alpha(\beta)=\lim_{\sN\to\infty} \alpha_{\sN}(\beta)$ with $\alpha_{\sN}(\beta)$  defined in~\eqref{eq-def-alpha} below is given by
	\be
	\alpha(\beta) = \tfrac{1}{2}(\cosh(\beta)-1)\mathbb{E}[W],
	\ee
and $z^\star(\beta,B)$ is, for $B\neq0$, given by the unique solution with the same sign as $B$ of the fixed-point equation
	\be
	\label{z-fixed-point}
	z=\expec\biggl[\tanh\biggl(\sqrt{\frac{\sinh(\beta)}{\expec[W]}} W z + B\biggr)\sqrt{\frac{\sinh(\beta)}{\expec[W]}} W\biggr],
	\ee
whereas for $B=0$, $z^\star(\beta,0) = \lim_{B\searrow 0}z^\star(\beta,B)$.
\end{theorem}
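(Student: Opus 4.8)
The plan is to reduce the annealed model to an inhomogeneous Curie--Weiss (rank-$1$ mean-field) model by integrating out the graph, and then to extract the pressure by Laplace's method. Since the edge indicators are independent, for fixed $\sigma\in\Omega_{\sN}$,
\[
\expec\Big[\exp\Big\{\beta\!\!\sum_{(i,j)\in E_{\sN}}\!\!\sigma_i\sigma_j\Big\}\Big]=\prod_{i<j}\big(1-p_{ij}+p_{ij}\e^{\beta\sigma_i\sigma_j}\big),
\]
and, as $\sigma_i\sigma_j\in\{-1,+1\}$, each factor equals $a_{ij}\,\e^{b_{ij}\sigma_i\sigma_j}$ with $a_{ij}^2=(1-p_{ij}+p_{ij}\e^{\beta})(1-p_{ij}+p_{ij}\e^{-\beta})$ and $\e^{2b_{ij}}=(1-p_{ij}+p_{ij}\e^{\beta})/(1-p_{ij}+p_{ij}\e^{-\beta})$. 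Using \eqref{eq-prob-edge}, i.e.\ $1-p_{ij}=\ell_{\sN}/(\ell_{\sN}+w_iw_j)$, this gives $b_{ij}=\tfrac12\log\big[(\ell_{\sN}+w_iw_j\e^{\beta})/(\ell_{\sN}+w_iw_j\e^{-\beta})\big]$ and $a_{ij}^2=1+2\ell_{\sN}w_iw_j(\cosh\beta-1)/(\ell_{\sN}+w_iw_j)^2$. Hence, writing $\alpha_{\sN}(\beta):=\tfrac1{\sN}\sum_{i<j}\log a_{ij}$,
\[
\psian_{\sN}(\beta,B)=\alpha_{\sN}(\beta)+\frac1{\sN}\log\sum_{\sigma\in\Omega_{\sN}}\exp\Big\{\sum_{i<j}b_{ij}\sigma_i\sigma_j+B\sum_{i\in[\sN]}\sigma_i\Big\}.
\]

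Next I would replace $b_{ij}$ by the rank-$1$ coupling $\tilde b_{ij}:=\sinh(\beta)\,w_iw_j/\ell_{\sN}$. By Condition~\ref{cond-weightreg}(c), $\sum_i w_i^2=\Theta(\sN)$, hence $\max_i w_i=O(\sqrt{\sN})$, and with $\ell_{\sN}=\sN\,\expec[W_{\sN}]=\Theta(\sN)$ this gives $\max_{i<j}w_iw_j/\ell_{\sN}=O(1)$; a Taylor expansion of $x\mapsto\tfrac12\log\big[(1+x\e^{\beta})/(1+x\e^{-\beta})\big]$ then yields $|b_{ij}-\tilde b_{ij}|=O\big((w_iw_j/\ell_{\sN})^2\big)$ with a uniform constant, so $\sum_{i<j}|b_{ij}-\tilde b_{ij}|=O\big((\textstyle\sum_i w_i^2)^2/\ell_{\sN}^2\big)=O(1)$. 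Since $\big|\sum_{i<j}(b_{ij}-\tilde b_{ij})\sigma_i\sigma_j\big|\le\sum_{i<j}|b_{ij}-\tilde b_{ij}|$ uniformly in $\sigma$, this replacement alters $\psian_{\sN}$ by only $O(1/\sN)$. The same expansion gives $\log a_{ij}=(\cosh\beta-1)w_iw_j/\ell_{\sN}+O\big((w_iw_j/\ell_{\sN})^2\big)$, whence $\alpha_{\sN}(\beta)=\tfrac12(\cosh\beta-1)\,\ell_{\sN}/\sN+o(1)\to\tfrac12(\cosh\beta-1)\,\expec[W]=\alpha(\beta)$ by Condition~\ref{cond-weightreg}(b).

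With the rank-$1$ couplings the energy linearizes, $\sum_{i<j}\tilde b_{ij}\sigma_i\sigma_j=\tfrac{\sinh\beta}{2\ell_{\sN}}\big(\sum_i w_i\sigma_i\big)^2-\tfrac{\sinh\beta}{2\ell_{\sN}}\sum_i w_i^2$, the last (constant) term contributing $O(1/\sN)$ to the pressure. Take $\beta>0$ (for $\beta=0$ the statement is immediate). Applying the Gaussian identity $\e^{x^2/2}=(2\pi)^{-1/2}\int_{\R}\e^{-t^2/2+tx}\dint t$ to the square, interchanging sum and integral, and factorizing over the $\sN$ vertices gives
\[
\sum_{\sigma}\exp\Big\{\tfrac{\sinh\beta}{2\ell_{\sN}}\Big(\textstyle\sum_i w_i\sigma_i\Big)^2+B\textstyle\sum_i\sigma_i\Big\}=\frac1{\sqrt{2\pi}}\int_{\R}\e^{-t^2/2}\prod_{i\in[\sN]}2\cosh\Big(t\,\sqrt{\tfrac{\sinh\beta}{\ell_{\sN}}}\,w_i+B\Big)\dint t.
\]
Substituting $t=\sqrt{\sN}\,z$ and using $\sN/\ell_{\sN}=1/\expec[W_{\sN}]$, one arrives at $\psian_{\sN}(\beta,B)=\alpha_{\sN}(\beta)+o(1)+\tfrac1{\sN}\log\int_{\R}\e^{\sN\Phi_{\sN}(z)}\dint z$, where
\[
\Phi_{\sN}(z)=-\frac{z^2}{2}+\frac1{\sN}\sum_{i\in[\sN]}\log 2\cosh\Big(\sqrt{\tfrac{\sinh\beta}{\expec[W_{\sN}]}}\,w_i\,z+B\Big).
\]

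Finally I would run a Laplace (Varadhan-type) argument. From $0\le\log\cosh(x)\le|x|$ one obtains the $\sN$-uniform bound $\Phi_{\sN}(z)\le\log 2+|B|+\sqrt{\sinh(\beta)\,\expec[W_{\sN}]}\,|z|-z^2/2$, which makes the integral finite and confines its $\sN\to\infty$ asymptotics to a compact range of $z$; on such a range $\Phi_{\sN}\to\Phi$ uniformly, where $\Phi(z)=\log 2-\tfrac{z^2}{2}+\expec\big[\log\cosh\big(\sqrt{\sinh(\beta)/\expec[W]}\,Wz+B\big)\big]$, by $W_{\sN}\convd W$, $\expec[W_{\sN}]\to\expec[W]$ and the uniform integrability of $(W_{\sN})_{\sN\ge1}$ (from Condition~\ref{cond-weightreg}(c)) applied to the continuous, at most linearly growing function $w\mapsto\log\cosh(\sqrt{\sinh(\beta)/\expec[W]}\,wz+B)$. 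Hence $\tfrac1{\sN}\log\int_{\R}\e^{\sN\Phi_{\sN}(z)}\dint z\to\sup_{z\in\R}\Phi(z)$, so $\psian(\beta,B)=\alpha(\beta)+\sup_{z\in\R}\Phi(z)$, which is \eqref{eq-annpressure} once one checks that the supremum is attained and that its stationarity condition $\Phi'(z)=0$ coincides with the fixed-point equation \eqref{z-fixed-point}. For $B\neq0$ a convexity/monotonicity argument for $\Phi'$ gives a unique maximizer, which has the same sign as $B$; for $B=0$ one invokes continuity of $\psian$ in $B$ and the definition $z^\star(\beta,0)=\lim_{B\searrow0}z^\star(\beta,B)$. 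The main obstacle is to make this Laplace step rigorous \emph{uniformly in $\sN$} — controlling the tails and the convergence of the $\sN$-dependent exponent $\Phi_{\sN}$ — which is delicate precisely because $\Phi$ is in general \emph{not} concave when $\beta>\beta_c$, so one genuinely works with $\sup_z\Phi(z)$ rather than with a single critical point; these points are settled using the weight regularity encoded in Condition~\ref{cond-weightreg}.
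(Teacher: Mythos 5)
Your argument is correct in outline, but it is not the proof this paper gives: it is essentially the original Hubbard--Stratonovich argument of \cite{GGvdHP}, which the theorem statement cites. You reduce the annealed partition function to a rank-$1$ inhomogeneous Curie--Weiss model (this part coincides with the preparatory computation in Section~\ref{sec-overview}, including the identification of $\alpha(\beta)=\tfrac12(\cosh\beta-1)\E[W]$), then linearize the square by a Gaussian integral and apply Laplace's method to $\int_{\R}\e^{\sN\Phi_{\sN}(z)}\,\dint z$, reading off \eqref{eq-annpressure} and \eqref{z-fixed-point} from the stationarity of the limiting exponent. The paper instead deliberately re-derives Theorem~\ref{thm-annealedpressure} from Theorem~\ref{thm-ldp-weighted}: it first establishes the joint LDP for $(S_{\sN},S^{\sss(w)}_{\sN})$ under the product a priori measure $P^{\sss (B)}_{\sN}$ via G\"artner--Ellis, transfers it to $\muan_{\sN}$ by Varadhan's lemma to obtain the two-dimensional variational formula \eqref{p-annealed-2dimB}, contracts onto the weighted magnetization to reach the one-dimensional problem \eqref{eq-pressureLDPq}, and finally checks that its stationarity condition \eqref{eq-fixedpointx} turns into \eqref{z-fixed-point} under the substitution $z^\star=\sqrt{\sinh(\beta)/\E[W]}\,x^\star$. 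Your route buys directness and an explicit integral representation (useful elsewhere in the paper for the degree computations); the paper's route buys the identification of the $2$-dimensional large-deviation expression for the pressure with the $1$-dimensional fixed-point expression, which is the point of Section~\ref{seq-LDP-weighted}. Two places where you should add detail: the uniqueness and sign of the maximizer should be argued as in the paper's final step (for $B>0$ the function $z\mapsto\Phi'(z)+z$ is increasing, bounded, concave on $z\ge0$ and positive at $z=0$, so there is a unique positive critical point $z^+$, and any negative critical point $z^-$ satisfies $\Phi(z^-)<\Phi(-z^-)\le\Phi(z^+)$, so the flat Legendre-type comparison rules it out as the global maximizer); and the Laplace upper bound should record that the polynomial prefactor arising from the substitution $t=\sqrt{\sN}\,z$ is harmless at exponential scale.
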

\medskip


\col{This theorem is proved in~\cite[Thm~1.1]{GGvdHP}. 
In Section~\ref{seq-LDP-weighted} we provide an alternative expression for the annealed pressure
that is instrumental for our large deviation analysis.}

In~\cite[Thm~1.1]{GGvdHP} it is also proved that the annealed Ising model on the generalized random graph has a second order 
phase transition at a critical inverse temperature $\betacan$ given by
\be
\betacan = \asinh\left(\frac{\mathbb{E}[W]}{\mathbb{E}[W^2]}\right)\;.
\ee
Denote by 
\be
S_{\sN} = \sum_{i\in[\sN]} \sigma_i,\nn
\ee
the {\em total spin}, and by
\be
M^{\mathrm{an}}_n(\beta,B)=\mathbb{E}_{\muan_n} \biggl(\frac{S_{\sN}}{{\sN}} \biggr),\nn
\ee
the {\em finite-volume annealed magnetization}.
 It is show in~\cite[Thms.~1.2,~1.3]{GGvdHP} that a strong law of large numbers (SLLN) and central limit theorem (CLT) holds for the total spin:
\begin{theorem}[SLLN and CLT \cite{GGvdHP}]
Suppose that Condition~\ref{cond-weightreg} (a)--(c) hold. Define the {\em uniqueness regime} of the parameters $(\beta,B)$ by
	\be
	\mathcal{U} = \left\{(\beta,B) \,:\, \beta\geq0, B\neq0 {\rm \ or\ } 0<\beta<\betacan, B=0 \right\},\nn
	\ee
and suppose that $(\beta,B)\in \mathcal{U} $. Then, for all $\varepsilon>0$ there exists a constant $L=L(\varepsilon)>0$ such that, for all $n$ sufficiently large,
	\be
	\mathbb{P}_{\muan_{\sN}} \biggl(\Bigl|\frac{1}{\sN} S_{\sN} - M^{\mathrm{an}} \Bigr|\biggr) \leq \e^{-\sN L},\nn
	\ee
where
	\be
	M^{\mathrm{an}}(\beta,B) =\expec\biggl[\tanh\biggl(\sqrt{\frac{\sinh(\beta)}{\expec[W]}} W z^\star(\beta,B) + B\biggr)\biggr],\nn
	\ee
being $z^\star(\beta,B)$  the solution of \eqref{z-fixed-point}, equals the annealed magnetization, that is $\lim_{n\to \infty} M^{\mathrm{an}}_n(\beta,B)$.\\
Furthermore,
	\be
	\frac{S_{\sN} - \mathbb{E}_{\muan_n} (S_{\sN})}{\sqrt{\sN}} \stackrel{\cal D}{\longrightarrow} \mathcal{N}(0,\chi^{\mathrm{an}}), \qquad {\rm w.r.t.\ }\quad \muan_{\sN} {\rm \ as\ } \sN\to\infty,\nn
	\ee
where $\chi^{\mathrm{an}}(\beta,B)=\frac{\partial}{\partial B} M^{\mathrm{an}}(\beta,B)$ is the annealed susceptibility and $\mathcal{N}(0,\sigma^2)$ denotes a centered normal random variable with variance $\sigma^2$.
\end{theorem}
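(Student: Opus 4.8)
The plan is to reduce the annealed Ising measure $\muan_n$ to an \emph{inhomogeneous Curie--Weiss} (rank-one mean-field) Ising measure, linearise the resulting rank-one interaction by a Hubbard--Stratonovich transform, and then run a Laplace/saddle-point analysis that is legitimate precisely when $(\beta,B)\in\mathcal U$. Since the edge indicators $I_{ij}$ are independent under $\prob$,
\[
\expec\Big[\e^{\beta\sum_{(i,j)\in E_n}\sigma_i\sigma_j}\Big]=\prod_{i<j}\big(1-p_{ij}+p_{ij}\e^{\beta\sigma_i\sigma_j}\big)
=\e^{\sum_{i<j}G_{ij}}\,\e^{\sum_{i<j}\beta_{ij}\sigma_i\sigma_j},
\]
where, using $\sigma_i\sigma_j\in\{-1,+1\}$, one has $\beta_{ij}=\tfrac12\log\frac{1-p_{ij}+p_{ij}\e^{\beta}}{1-p_{ij}+p_{ij}\e^{-\beta}}$ and $G_{ij}$ is spin-independent (these spin-independent terms account for the $\log2+\alpha(\beta)$ in~\eqref{eq-annpressure} but cancel in $\muan_n$). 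Hence $\muan_n$ is \emph{exactly} an Ising measure with pair couplings $\beta_{ij}$ and field $B$. From $p_{ij}=w_iw_j/(\ell_n+w_iw_j)$ one gets $\beta_{ij}=\sinh(\beta)\,w_iw_j/\ell_n+O\big((w_iw_j/\ell_n)^2\big)$, so, up to errors controlled by Condition~\ref{cond-weightreg}(b)--(c), the effective exponent is the rank-one form $\tfrac{\sinh(\beta)}{2\ell_n}\big(\sum_i w_i\sigma_i\big)^2+B\sum_i\sigma_i$ (the diagonal piece $\tfrac{\sinh(\beta)}{2\ell_n}\sum_i w_i^2$ is an $O(1)$ constant by Condition~\ref{cond-weightreg}(c), hence harmless).

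Next, linearise the rank-one term with a Gaussian auxiliary variable and rescale it so that, writing $\kappa_\beta:=\sqrt{\sinh(\beta)/\expec[W]}$ and $\kappa_{\beta,n}$ for its analogue with $\expec[W_n]=\ell_n/n$,
\[
\Zan_n(\beta,B)=C_n\int_{\R}\e^{-n\Phi_n(z)}\,\dint z,\qquad
\Phi_n(z)=\frac{z^2}{2}-\frac1n\sum_{i\in[n]}\log\!\Big[2\cosh\big(\kappa_{\beta,n}\,w_i z+B\big)\Big],
\]
and, crucially, the pair $(\sigma,z)$ has joint law proportional to $\e^{-nz^2/2}\prod_i\e^{(\kappa_{\beta,n}w_iz+B)\sigma_i}$; thus under $\muan_n$ the auxiliary variable has density $\propto\e^{-n\Phi_n(z)}$ and, \emph{conditionally on its value $z$, the spins $\sigma_i$ are independent} with $\expec_{\muan_n}[\sigma_i\mid z]=\tanh(\kappa_{\beta,n}w_iz+B)$. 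By Condition~\ref{cond-weightreg}, $\Phi_n\to\Phi$ locally uniformly with good growth at $\pm\infty$, where $\Phi(z)=z^2/2-\expec[\log(2\cosh(\kappa_\beta Wz+B))]$; its stationarity equation is exactly~\eqref{z-fixed-point}, and by Theorem~\ref{thm-annealedpressure} its unique relevant critical point $z^\star=z^\star(\beta,B)$ is, in the uniqueness regime $\mathcal U$, the unique global minimiser, with $\Phi''(z^\star)>0$. A standard Laplace estimate then yields exponential concentration: $\prob_{\muan_n}(|z-z^\star|>\delta)\le\e^{-c(\delta)n}$.

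For the exponential SLLN bound, condition on $z$ and split on $\{|z-z^\star|\le\delta\}$ and its complement; the complement costs $\e^{-c(\delta)n}$. On $\{|z-z^\star|\le\delta\}$, $S_n$ is a sum of independent $\{\pm1\}$ variables with conditional mean $\sum_i\tanh(\kappa_{\beta,n}w_iz+B)$, which by Condition~\ref{cond-weightreg}(a)--(b) and Lipschitz continuity in $z$ lies within $(o(1)+O(\delta))n$ of $nM^{\an}(\beta,B)$; Hoeffding's inequality applied conditionally gives $\prob_{\muan_n}(|S_n/n-M^{\an}|>\varepsilon\mid z)\le\e^{-c'n}$ uniformly over that event once $\delta$ is small in terms of $\varepsilon$, and combining the two contributions proves the bound with some $L=L(\varepsilon)>0$. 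For the CLT, track the joint fluctuations: writing $z=z^\star+Y/\sqrt n$, the localisation of $\e^{-n\Phi_n(z)}$ gives $Y\convd\mathcal N(0,1/\Phi''(z^\star))$; conditionally on $z$, the Lindeberg--Feller CLT gives $\big(S_n-\sum_i\tanh(\kappa_{\beta,n}w_iz+B)\big)/\sqrt n\convd\mathcal N\big(0,\expec[1-\tanh^2(\kappa_\beta Wz^\star+B)]\big)$, asymptotically independent of $Y$, while a first-order Taylor expansion of the conditional mean about $z^\star$ contributes an extra $Y\cdot\kappa_\beta\expec[W(1-\tanh^2(\kappa_\beta Wz^\star+B))]$ to $(S_n-nM^{\an})/\sqrt n$. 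Summing the two independent Gaussian contributions and simplifying with the help of implicit differentiation of~\eqref{z-fixed-point} (which relates $\Phi''(z^\star)$ to these susceptibilities) identifies the total variance as $\partial_B M^{\an}(\beta,B)=\chi^{\an}$; since $\expec_{\muan_n}(S_n)=nM^{\an}+o(\sqrt n)$ in $\mathcal U$, this yields the stated CLT. (Equivalently, the CLT follows from a second-order Taylor expansion in $B$ of $\psian_n(\beta,\cdot)$ together with $C^2$-convergence of $\psian_n$ near $B$, using $\chi^{\an}=\partial_B^2\psian(\beta,B)$.)

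\textbf{Main obstacle.} The two delicate points are: (i) quantifying the error between the true annealed model and its rank-one approximation $\beta_{ij}\approx\sinh(\beta)w_iw_j/\ell_n$ \emph{uniformly} enough to transport both the $\e^{-Ln}$ bound and the $\sqrt n$-scale CLT --- the $O((w_iw_j/\ell_n)^2)$ corrections must be summed and shown to perturb $\Phi_n$ and the conditional spin laws negligibly, which is exactly where finiteness of $\expec[W^2]$ (Condition~\ref{cond-weightreg}(c)) is used; and (ii) justifying the Laplace and local-CLT asymptotics for the $z$-integral when $\Phi_n$ is only asymptotically equal to $\Phi$, which requires --- and only works under --- the assumption $(\beta,B)\in\mathcal U$ that guarantees a unique non-degenerate minimiser of $\Phi$. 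Outside $\mathcal U$ (e.g.\ $B=0$, $\beta>\betacan$) the function $\Phi$ has two symmetric minima $\pm z^\star\neq0$, the auxiliary variable no longer concentrates at a single point, and both statements genuinely break down.
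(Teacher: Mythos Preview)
The paper does not contain its own proof of this theorem: it is stated in Section~1.3.3 as a previous result, quoted from \cite{GGvdHP} (Theorems~1.2 and~1.3 there). There is therefore no in-paper proof to compare against.

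That said, your outline is correct and is precisely the strategy of \cite{GGvdHP}, whose key ingredients the present paper recapitulates in Section~1.4 (the ``enlightening computation''): averaging over the independent edge indicators yields the inhomogeneous Curie--Weiss model with couplings $\beta_{ij}$ (your first display is exactly \eqref{def-betaij-Cij}--\eqref{eq-rewriteQNexp}); the rank-one approximation $\beta_{ij}\approx\sinh(\beta)w_iw_j/\ell_n$ with error controlled by $\expec[W^2]<\infty$ is the content of the $o(n)$ in \eqref{eq-rewriteQNexp}; the Hubbard--Stratonovich linearisation and subsequent Laplace analysis of the $z$-integral is how \cite{GGvdHP} and \cite{DomGiaGibHofPri16} proceed (cf.\ also the proof of Lemma~\ref{lem-Fnt-asy} here, which runs the same machinery). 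Your identification of the two delicate points --- uniform control of the rank-one error at both the LDP and $\sqrt n$ scales, and the need for a unique non-degenerate minimiser of $\Phi$, which is exactly what $(\beta,B)\in\mathcal U$ guarantees --- is accurate and matches what \cite{GGvdHP} has to handle. The one place where you are a bit quick is the replacement of the centering $\expec_{\muan_n}(S_n)$ by $nM^{\an}$ up to $o(\sqrt n)$; this is true in $\mathcal U$ but does require a separate (though routine) argument via the same Laplace expansion carried to the next order.
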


Analogously one can define the random quenched pressure:
	\be
	\psiqe(\beta,B) = \lim_{\sN\to\infty} \psiqe_{\sN}(\beta,B) = \lim_{\sN\to\infty} \frac{1}{\sN} \log \Zqe_{\sN}(\beta,B).\nn
	\ee 
This has been determined for the GRG as well as other locally tree-like random graph models in~\cite{DM,DGH}, where it is also proven that $\psiqe(\beta,B)$ is a non-random quantity. An SLLN and CLT  for the total spin 
w.r.t.\ $\muqe_{\sN}$ have been obtained in~\cite{GGvdHP2}. In general, the quenched and annealed pressures are different, and also the critical temperatures of the models are different. The only exception that we are aware of is the random regular graph (see \cite{Can17a}). The critical temperature in the quenched setting will be denoted by 
$\betacqe$.

%

\subsection{Main results}
\label{sec-results}
In this paper, we study the spin sum in more detail (i.e. beyond the CLT scale) and prove a large deviation principle for $S_{\sN}$, \ch{as well as a weighted version that plays a crucial role in the annealed Ising model}. Let us start by recalling what a large deviation principle is. Given a sequence of random 
variables $(X_n)_{n\ge 1}$  taking values in the measurable space $({\cal X}, {\cal B})$, with ${\cal X}$ a topological space and ${\cal B}$ a $\sigma$-field of subsets of ${\cal X}$, then the large deviation principle is defined as follows:
\begin{definition}[Large deviation principle (LDP) \cite{DemZei09}]
We say that $(X_n)_{n\ge 1}$ satisfies an LDP with rate function $I(x)$ and speed $\sN$ w.r.t.\ a probability measure $\ch{(\prob_{\sN})_{n\geq 1}}$ if,
for all $F\in  {\cal B}$,
	\eqn{
	-\inf_{x\in F^o} I(x)\leq \liminf_{n\rightarrow \infty} \frac{1}{n} \log \prob_{\ch{\sN}}(X_n\in F)\leq \limsup_{n\rightarrow \infty} 
	\frac{1}{n} \log \prob_{\ch{\sN}}(X_n\in F)\leq -\inf_{x\in \bar{F}} I(x),\nn
	}
where $F^o$ denotes the interior of $F$ and $\Bar{F}$ its closure.
\end{definition}
\medskip

In this definition $I\colon {\cal X}\to [0, \infty]$ is a lower semicontinuous function. Our first main result is an LDP for the total spin in the high-temperature regime for both the random quenched and the annealed Ising model:
\begin{theorem}[\col{Total spin} LDPs in high-temperature regime]\label{thm-LDP-HT}
In the annealed Ising model, under 
\col{Condition~\ref{cond-weightreg}},
 the total spin $S_{\sN}$ satisfies an LDP w.r.t.\ $\mu^{\mathrm{an}}_{\sN}$ for $\beta\leq\betacan$ and $B\in \mathbb{R}$, with rate function
	\be\label{rate2}
	{{I}^{\mathrm{an}} }(x)=  \sup_{t}\left \{ x\, t - {\psian(\beta,B+t)}\right\}+ \psian(\beta,B).
	\ee
In the random quenched Ising model, 
\col{under Condition~\ref{cond-weightreg}},
the total spin $S_{\sN}$ also satisfies an LDP w.r.t.\ $\muqe_{\sN}$ for $\beta\leq\betacqe$  and $B\in \mathbb{R}$, with rate function
	\be
	{{I}^{\mathrm{qe}}}(x)=  \sup_{t}\left \{ x\, t - {\psiqe(\beta,B+t)}\right\}+ \psiqe(\beta,B).\nn
	\ee
\end{theorem}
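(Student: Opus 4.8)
The plan is to derive both LDPs from the G\"artner--Ellis theorem \cite{DemZei09}, applied to $S_{\sN}/\sN$. The starting point is the field-shift identity: for every $t\in\R$, inserting $\e^{tS_{\sN}}=\e^{t\sum_{i}\sigma_i}$ into the definition~\eqref{eq-annealedIsing} of $\muan_{\sN}$ shifts the external field from $B$ to $B+t$ in both numerator and denominator, so that
\be
\E_{\muan_{\sN}}\bigl[\e^{tS_{\sN}}\bigr]=\frac{\Zan_{\sN}(\beta,B+t)}{\Zan_{\sN}(\beta,B)}.\nn
\ee
Taking logarithms, dividing by $\sN$ and letting $\sN\to\infty$, Theorem~\ref{thm-annealedpressure} --- which holds for all $\beta$ and \emph{all} $B\in\R$, hence at field $B+t$ for every $t$ --- shows that the scaled cumulant generating function of $S_{\sN}/\sN$ exists and equals
\be
\Lambda^{\an}(t):=\lim_{\sN\to\infty}\tfrac{1}{\sN}\log\E_{\muan_{\sN}}\bigl[\e^{tS_{\sN}}\bigr]=\psian(\beta,B+t)-\psian(\beta,B),\qquad t\in\R.\nn
\ee

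Next I would verify the hypotheses of the G\"artner--Ellis theorem for $\Lambda^{\an}$. Since $\Zan_{\sN}(\beta,\cdot)$ is a sum of exponentials $\e^{B\sum_i\sigma_i}$ with non-negative coefficients, it is log-convex in $B$, so $\psian(\beta,\cdot)$ and hence $\Lambda^{\an}$ are convex; moreover $z^\star(\beta,\cdot)$ is bounded (the right-hand side of~\eqref{z-fixed-point} is bounded by $\sqrt{\sinh(\beta)/\E[W]}\,\E[W]$), so from~\eqref{eq-annpressure} one reads off $\psian(\beta,B)=|B|+O(1)$ as $B\to\pm\infty$; thus $\Lambda^{\an}$ is finite on all of $\R$ with $\Lambda^{\an}(t)/|t|\to1$, its effective domain is $\R$, $0$ lies in its interior, and $(\Lambda^{\an})^\ast$ is finite exactly on $[-1,1]$. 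The one model-dependent point is \emph{differentiability}: a convex function is differentiable at a point iff its left and right derivatives agree there, i.e.\ iff $\partial_B\psian(\beta,B)=M^{\an}(\beta,B)=\E[\tanh(\sqrt{\sinh(\beta)/\E[W]}\,Wz^\star(\beta,B)+B)]$ is continuous in $B$. For $\beta\le\betacan$ this holds on all of $\R$: away from the degenerate point $(\betacan,0)$ the implicit function theorem applied to~\eqref{z-fixed-point} makes $z^\star(\beta,\cdot)$ smooth, while at $(\betacan,0)$ one has $z^\star=0$ with $z^\star(\betacan,B)\to0$ as $B\to0$, so $M^{\an}(\beta,\cdot)$, hence $\psian(\beta,\cdot)$, has no kink. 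Therefore $\Lambda^{\an}$ is finite and differentiable on $\R$, hence essentially smooth, and the G\"artner--Ellis theorem yields that $S_{\sN}/\sN$ satisfies an LDP with speed $\sN$ and good rate function $(\Lambda^{\an})^\ast$. Finally, since $\psian(\beta,B)$ is constant in $t$,
\be
(\Lambda^{\an})^\ast(x)=\sup_{t}\bigl\{xt-\psian(\beta,B+t)\bigr\}+\psian(\beta,B)={I}^{\mathrm{an}}(x),\nn
\ee
which is exactly~\eqref{rate2}.

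For the random quenched statement the argument is the same after replacing $(\muan_{\sN},\Zan_{\sN},\psian)$ by $(\muqe_{\sN},\Zqe_{\sN},\psiqe)$: the identical field-shift identity gives $\E_{\muqe_{\sN}}[\e^{tS_{\sN}}]=\Zqe_{\sN}(\beta,B+t)/\Zqe_{\sN}(\beta,B)$, and the $\prob$-a.s.\ convergence $\tfrac{1}{\sN}\log\Zqe_{\sN}(\beta,B)\to\psiqe(\beta,B)$ to the non-random limit established for the GRG in \cite{DM,DGH} upgrades --- first along a countable dense set of $t$, then everywhere by convexity of $t\mapsto\tfrac{1}{\sN}\log\Zqe_{\sN}(\beta,B+t)$ --- to $\Lambda^{\mathrm{qe}}(t)=\psiqe(\beta,B+t)-\psiqe(\beta,B)$ for all $t$, $\prob$-almost surely. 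Differentiability of $\psiqe(\beta,\cdot)$ for $\beta\le\betacqe$ follows from uniqueness of the quenched Gibbs measure in that regime (cf.\ \cite{DM,DGH,GGvdHP2}), so G\"artner--Ellis applies $\prob$-a.s.\ and gives the LDP w.r.t.\ $\muqe_{\sN}$ with rate function ${I}^{\mathrm{qe}}$.

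I expect the main obstacle to be the differentiability input: checking that the annealed and quenched pressures have no kink in $B$ throughout the high-temperature regime, including the borderline case $\beta=\betacan$ (resp.\ $\betacqe$) at $B=0$, where the fixed-point (resp.\ cavity) analysis of the order parameter is most delicate. Everything else --- the field-shift identity, finiteness and convexity of $\Lambda^{\an}$, and the Legendre bookkeeping --- is routine; and the restriction $\beta\le\betacan$ is precisely what makes $\Lambda^{\an}$ essentially smooth, since for $\beta>\betacan$ the shifted pressure $t\mapsto\psian(\beta,B+t)$ inherits the kink at $B+t=0$, G\"artner--Ellis fails, and the low-temperature regime must instead be treated through the Curie--Weiss mapping and Varadhan's lemma announced in the introduction.
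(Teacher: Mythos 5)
Your proposal is correct and follows essentially the same route as the paper's proof: the field-shift identity giving $\E_{\muan_{\sN}}[\e^{tS_{\sN}}]=\Zan_{\sN}(\beta,B+t)/\Zan_{\sN}(\beta,B)$, identification of the scaled cumulant generating function as a difference of pressures via Theorem~\ref{thm-annealedpressure}, differentiability of $B\mapsto\psian(\beta,B)$ for $\beta\le\betacan$ from the continuity of $M^{\an}$ through $B=0$, and then the G\"artner--Ellis theorem, with the quenched case handled analogously. The only difference is that you spell out the routine hypotheses (convexity, linear growth, essential smoothness) in more detail than the paper does.
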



\ch{The proof of Theorem \ref{thm-LDP-HT} is highly general, and applies to settings where the pressure is known to exist and to be differentiable. 
As such, the proof is basically identical for the annealed and quenched Ising models on $\GRGnw$.}

For the annealed Ising model we also prove an LDP for all positive temperatures. For this, we also introduce the total {\em weighted} spin 
	$$
	S^{\sss(w)}_{\sN}=\sum_{i\in [\sN]}  w_{i}\sigma_{i}.
	$$

\begin{theorem}[\col{Alternative \ch{form of the}{} pressure and} LDPs for the annealed Ising model]\label{thm-ldp-weighted}
For all $\beta\geq0$ and $B\in \mathbb{R}$, 
\col{under Condition~\ref{cond-weightreg}, the annealed pressure is given by
	\be
	\label{p-annealed-2dim}
         \psian(\beta,B) = -\inf_{\ch{(x_1,x_2)}} \left(I(x_1,x_2) - \frac{\sinh(\beta)}{2\E[W]}x_2^{2}-  B x_1 - \log 2 - \alpha(\beta)\right)
	\ee
	where
	\be
	I(x_1,x_2)= \sup_{(t_1,t_2)} \left ( t_1 x_1 + t_2 x_2 - \E [\log \cosh (t_1 + W t_2)]\right).\nn
	\ee
and the couple
}
$(S_{\sN}, S^{\sss(w)}_{\sN})$  satisfies an LDP w.r.t.\ $\muan_{\sN}$ with rate function
	\be
	\label{eq-jointLDPrate1}
	I^{\mathrm{an}}_{\beta,B}(x_1,x_2)=I(x_1,x_2) - \frac{\sinh(\beta)}{2\E[W]}x_2^{2}-  B x_1 - \log 2 - \alpha(\beta)+ \psian(\beta,B).
	\ee
\col{
Furthermore the annealed pressure has the alternative expression	
	\be
	\label{p-annealed-2dimB}
	\psian(\beta,B) = -\inf_{\ch{(x_1,x_2)}} \left(I^{\sss (B)}(x_1,x_2) -   \frac{\sinh(\beta)}{2\E[W]} x_2^{2} -\log\cosh B-\log 2-\alpha(\beta) \right),
	\ee
where
	\be
	I^{\sss (B)}(x_1,x_2) = \sup_{(t_1,t_2)} \left(t_1 x_1+t_2 x_2 - \E[ \log \cosh  (B + t_1 + W t_2)] \right) + \log \cosh B,\nn
	\ee
}		
and also with the alternative expression of the rate function given by
	\be\label{eq-jointLDPrate2}
	I^{{\mathrm{an}}\sss (B)}_{\beta,B}(x_1,x_2)= I^{\sss (B)}(x_1,x_2) -   \frac{\sinh(\beta)}{2\E[W]} x_2^{2} -\log\cosh B-\log 2-\alpha(\beta) + \psian(\beta,B).
	\ee
\end{theorem}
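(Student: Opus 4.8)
The plan is to exploit the exact reduction of the annealed Ising model on $\GRGnw$ to an inhomogeneous Curie--Weiss model \cite{GGvdHP}, in the form recalled in Section~\ref{seq-LDP-weighted}, so that the joint LDP of $(S_{\sN},S^{\sss(w)}_{\sN})$ becomes a routine consequence of Cram\'er's theorem / the G\"artner--Ellis theorem for the underlying independent spins, tilted by the quadratic mean-field interaction via Varadhan's lemma. First I would make the reduction precise. Since edges are independent and $\e^{\beta\sigma_i\sigma_j}=\cosh\beta+\sigma_i\sigma_j\sinh\beta$, one has
$$
\E\big[\e^{\beta I_{ij}\sigma_i\sigma_j}\big]=\big(1+p_{ij}(\cosh\beta-1)\big)\,\frac{\e^{\tb_{ij}\sigma_i\sigma_j}}{\cosh\tb_{ij}},\qquad \tanh\tb_{ij}=\frac{p_{ij}\sinh\beta}{1+p_{ij}(\cosh\beta-1)}.
$$
Inserting $p_{ij}=w_iw_j/\ell_{\sN}+O((w_iw_j/\ell_{\sN})^2)$ and using Condition~\ref{cond-weightreg}(c) to control $\sum_{i<j}(w_iw_j/\ell_{\sN})^2$ and the number of pairs for which $p_{ij}$ is bounded away from $0$, one gets, uniformly in $\sigma\in\Omega_{\sN}$, $\sum_{i<j}\tb_{ij}\sigma_i\sigma_j=\frac{\sinh\beta}{2\ell_{\sN}}\big(S^{\sss(w)}_{\sN}\big)^2+O(1)$, while the configuration-independent prefactor $\prod_{i<j}\big(1+p_{ij}(\cosh\beta-1)\big)/\cosh\tb_{ij}$ has logarithm $\sN\alpha(\beta)+o(\sN)$ (using $\alpha(\beta)=\tfrac12(\cosh\beta-1)\E[W]$ from Theorem~\ref{thm-annealedpressure}). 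Hence, up to multiplicative factors $\e^{\pm O(1)}$ invisible on the exponential scale,
$$
\muan_{\sN}(\sigma)\ \propto\ \exp\Big\{\tfrac{\sinh\beta}{2\ell_{\sN}}\big(S^{\sss(w)}_{\sN}\big)^2+B\,S_{\sN}\Big\},\qquad \psian(\beta,B)=\alpha(\beta)+\lim_{\sN\to\infty}\tfrac1{\sN}\log\sum_{\sigma\in\Omega_{\sN}}\exp\Big\{\tfrac{\sinh\beta}{2\ell_{\sN}}\big(S^{\sss(w)}_{\sN}\big)^2+B\,S_{\sN}\Big\}.
$$

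\textbf{LDP for the reference product spins.} Let $\prob^0_{\sN}$ be the uniform product measure on $\Omega_{\sN}$. Under $\prob^0_{\sN}$ the $\sigma_i$ are i.i.d.\ Rademacher, $(S_{\sN}/\sN,S^{\sss(w)}_{\sN}/\sN)$ takes values in the fixed compact rectangle $[-1,1]\times[-\E[W]-1,\E[W]+1]$ for $\sN$ large, and its scaled cumulant generating function is $\frac1{\sN}\sum_{i\in[\sN]}\log\cosh(t_1+w_it_2)$, which converges to $\Lambda(t_1,t_2):=\E[\log\cosh(t_1+Wt_2)]$ by Condition~\ref{cond-weightreg} (uniform integrability of $\log\cosh(t_1+W_{\sN}t_2)$ follows from $0\le\log\cosh x\le|x|$ and $\E[W_{\sN}]\to\E[W]$). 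Since $\Lambda$ is finite and differentiable on all of $\R^2$, it is essentially smooth, so the G\"artner--Ellis theorem \cite{DemZei09} gives an LDP for $(S_{\sN}/\sN,S^{\sss(w)}_{\sN}/\sN)$ under $\prob^0_{\sN}$ with good convex rate function equal to the Legendre transform of $\Lambda$, i.e.\ exactly $I(x_1,x_2)=\sup_{(t_1,t_2)}(t_1x_1+t_2x_2-\E[\log\cosh(t_1+Wt_2)])$; the sublinear growth of $\Lambda$ forces $I\equiv+\infty$ off the above rectangle, so $I$ is a good rate function.

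\textbf{Tilting by the mean-field term.} Put $f(x_1,x_2)=\frac{\sinh\beta}{2\E[W]}x_2^2+Bx_1$, a bounded continuous function on the rectangle, and note that since $\ell_{\sN}/\sN\to\E[W]$ the exponent in the last display equals $\sN f(S_{\sN}/\sN,S^{\sss(w)}_{\sN}/\sN)+o(\sN)$ uniformly in $\sigma$, so neither the $o(\sN)$ nor the $O(1)$ remainders affect the exponential asymptotics. Varadhan's lemma applied to $\E_{\prob^0_{\sN}}[\e^{\sN f}]$ gives $\lim_{\sN}\frac1{\sN}\log\sum_{\sigma}\e^{\sN f}=\log2+\sup_{(x_1,x_2)}(f-I)$, whence $\psian(\beta,B)=\alpha(\beta)+\log2+\sup_{(x_1,x_2)}(f-I)=-\inf_{(x_1,x_2)}\big(I(x_1,x_2)-f(x_1,x_2)-\log2-\alpha(\beta)\big)$, which is \eqref{p-annealed-2dim}. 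For the LDP, the standard tilted-LDP principle \cite{DemZei09} — comparing $\E_{\prob^0_{\sN}}[\e^{\sN f}\ind_A]$ with $\E_{\prob^0_{\sN}}[\e^{\sN f}]$ on closed, respectively open, sets $A$ via the LDP under $\prob^0_{\sN}$ and the boundedness of $f$ — shows that $(S_{\sN}/\sN,S^{\sss(w)}_{\sN}/\sN)$ obeys an LDP under $\muan_{\sN}$ with good rate function $I(x_1,x_2)-f(x_1,x_2)-\inf_{(x_1',x_2')}(I-f)(x_1',x_2')$; substituting $\inf(I-f)=\log2+\alpha(\beta)-\psian(\beta,B)$ from \eqref{p-annealed-2dim} yields precisely \eqref{eq-jointLDPrate1}.

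\textbf{The $B$-shifted representation, and the main obstacle.} For \eqref{p-annealed-2dimB}--\eqref{eq-jointLDPrate2} I would rerun the last two steps with the reference product measure $\prob^B_{\sN}$ under which $\sigma_i$ has law $\prob^B_{\sN}(\sigma_i=s)=\e^{Bs}/(2\cosh B)$: now the field term $B\,S_{\sN}$ is absorbed into the reference measure, the tilt is by $\exp\{\frac{\sinh\beta}{2\ell_{\sN}}(S^{\sss(w)}_{\sN})^2\}$ alone, the reference cumulant generating function becomes $\E[\log\cosh(B+t_1+Wt_2)]-\log\cosh B$ (giving rate function $I^{\sss(B)}$), and the normalizing constant turns $\log2$ into $\log2+\log\cosh B$; this produces \eqref{p-annealed-2dimB} and \eqref{eq-jointLDPrate2}. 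That these agree with the unshifted formulas is the change of variables $t_1\mapsto t_1-B$ in the Legendre transform, which gives $I^{\sss(B)}(x_1,x_2)=I(x_1,x_2)-Bx_1+\log\cosh B$. The only genuinely technical point is the first step: establishing that the Curie--Weiss reduction holds with a remainder that is $O(1)$ \emph{uniformly in the spin configuration} — this is exactly where the second-moment assumption Condition~\ref{cond-weightreg}(c) is used (to bound $\sum_{i<j}(w_iw_j/\ell_{\sN})^2$ and to show that only $O(1)$ pairs carry an edge-probability bounded away from $0$). Once that uniform control is in hand everything else is automatic: the state space is compact so exponential tightness is free, $\Lambda$ is finite and smooth everywhere so G\"artner--Ellis applies without fuss, and Varadhan's lemma together with the tilted-LDP principle deliver both pressure formulas and both joint LDPs, with the resulting rate functions good since $I$ is good and the subtracted terms are finite and continuous.
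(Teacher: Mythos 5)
Your proposal is correct and follows essentially the same route as the paper's proof: establish the reduction of the annealed measure to a rank-$1$ inhomogeneous Curie--Weiss form with a negligible error in the exponent, apply the G\"artner--Ellis theorem to $(S_{\sN}/\sN,S^{\sss(w)}_{\sN}/\sN)$ under the product reference measure (uniform, then $B$-tilted), and then obtain both pressure formulas and both rate functions via Varadhan's lemma and the tilted-LDP principle (which is exactly what the paper invokes via \cite[Thms.~II.7.1--II.7.2]{Ell06}). The only minor departures are cosmetic: you sharpen the paper's $o(\sN)$ exponential remainder to a uniform $O(1)$ (a correct strengthening under Condition~\ref{cond-weightreg}(c), though $o(\sN)$ already suffices), and you conclude the equality of the two rate-function representations directly from the identity $I^{\sss(B)}(x_1,x_2)=I(x_1,x_2)-Bx_1+\log\cosh B$ instead of appealing to the uniqueness of the rate function as the paper does.
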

\medskip

Naturally, in the high-temperature setting, the large deviation rate functions in \eqref{rate2} and \eqref{eq-jointLDPrate1} (\col{or \eqref{eq-jointLDPrate2}}) coincide after the application
of a contraction principle. 
{\col{Combining Theorem \ref{thm-annealedpressure} and Theorem \ref{thm-ldp-weighted} we see that
the annealed pressure is either given by the optimization of a real function (as in \eqref{eq-annpressure}) or it can be expressed as the solution of a two-dimensional variational problem (as in \eqref{p-annealed-2dim} or \eqref{p-annealed-2dimB}). In Section \ref{seq-LDP-weighted} we shall prove Theorem \ref{thm-annealedpressure} starting from Theorem \ref{thm-ldp-weighted}, thus obtaining that the expressions
for the annealed pressure do coincide.}

We next discuss the LDP for 
the total number of edges in the annealed Ising model on $\GRGnw$:

\begin{theorem}[LDPs for the edges in the annealed Ising model]
\label{thm-LDP-edges}
\col{Suppose that \ch{Condition~\ref{cond-weightreg} holds.}} 
For all $\beta\geq0$ and $B\in \mathbb{R}$, the total number of edges $|E_{\sN}|$  satisfies an LDP w.r.t.\ $\muan_{\sN}$ with rate function that is the Legendre transform of 
the function which is explicitly computed in \eqref{ann_cgf_edges} below. 
Further, the number of edges under the annealed Ising model on $\GRGnw$ satisfies
	\be
	\label{number-edges-AIM}
	\frac{1}{\sN}|E_{\sN}|\convp \frac12 {z^\star(\beta,B)}^2  +  \frac 1 2 \cosh(\beta) \expec{[W]}.
	\ee
\end{theorem}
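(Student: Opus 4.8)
The plan is to exploit the same mapping between the annealed Ising model and an inhomogeneous Curie--Weiss model that underlies Theorem~\ref{thm-annealedpressure} and Theorem~\ref{thm-ldp-weighted}. First I would introduce a variable coupled to the number of edges: for $\gamma \in \mathbb{R}$ consider the tilted partition function
\be
\Zan_{\sN}(\beta,B;\gamma) = \sum_{\sigma \in \Omega_{\sN}} \expec\biggl[\e^{\gamma |E_{\sN}|} \exp\biggl\{\beta \sum_{(i,j)\in E_{\sN}} \sigma_i\sigma_j + B\sum_{i\in[\sN]}\sigma_i\biggr\}\biggr],\nn
\ee
and set $\varphi_{\sN}(\gamma) = \tfrac{1}{\sN}\log\bigl(\Zan_{\sN}(\beta,B;\gamma)/\Zan_{\sN}(\beta,B)\bigr)$, the scaled cumulant generating function of $|E_{\sN}|$ under $\muan_{\sN}$. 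The key observation is that $\e^{\gamma}$ enters multiplicatively on every edge, so in the computation of the annealed partition function — where one expands $\expec[\cdot]$ over the independent edge indicators $I_{ij}$ — the effect of the tilt is simply to replace $p_{ij}$ by a new edge probability $p_{ij}(\gamma,\sigma_i\sigma_j)$ in which the weight $\e^{\beta\sigma_i\sigma_j}$ is multiplied by $\e^{\gamma}$. Carrying the standard computation from \cite{GGvdHP} through with this modification gives an explicit formula for $\lim_{\sN\to\infty}\varphi_{\sN}(\gamma)$: one obtains a shifted version of \eqref{eq-annpressure}, with $\cosh(\beta)$ and $\sinh(\beta)$ replaced by $\e^{\gamma}\cosh(\beta)$ and $\e^{\gamma}\sinh(\beta)$ respectively (up to the $\log 2$ term), and with the fixed point $z^\star$ replaced by the corresponding $\gamma$-dependent fixed point. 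This explicit limit is the function referred to as \eqref{ann_cgf_edges}; call it $\varphi(\gamma)$.

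Next I would establish the LDP. Since $\varphi(\gamma)$ is obtained as a smooth, convex function of $\gamma$ — it is a limit of convex functions and, being built from $\alpha(\beta)\e^\gamma$ plus an expectation of $\log\cosh$ composed with the smooth fixed point $z^\star(\cdot)$, it is finite and differentiable in $\gamma$ on all of $\mathbb{R}$ — the G\"artner--Ellis theorem applies directly and yields an LDP for $\tfrac{1}{\sN}|E_{\sN}|$ with rate function equal to the Legendre transform $\varphi^\ast$. (Differentiability here is genuine and does \emph{not} require the uniqueness regime, because the non-analyticity in the Ising problem comes from the spontaneous magnetization in the spin variable, whereas the tilt $\gamma$ moves the edge density, which remains a smooth function of $\gamma$ even at $\beta > \betacan$; I would verify this by checking that the relevant fixed point equation has a non-degenerate, smoothly varying solution, distinguishing the sign-of-$B$ branch issue from the differentiability-in-$\gamma$ issue.) This gives the first assertion of the theorem.

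For the law of large numbers \eqref{number-edges-AIM}, I would simply read off $\varphi'(0)$: by convexity and differentiability of $\varphi$, the LDP rate function vanishes only at $\varphi'(0)$, so $\tfrac{1}{\sN}|E_{\sN}| \convp \varphi'(0)$. Differentiating the explicit formula for $\varphi(\gamma)$ at $\gamma = 0$ and using the fixed-point equation \eqref{z-fixed-point} to discard the terms coming from $\partial_\gamma z^\star$ (which vanish because $z^\star$ is a critical point of the relevant functional, a standard envelope-theorem cancellation), the derivative collapses to the two surviving contributions: $\tfrac12 \cosh(\beta)\expec[W]$ from differentiating $\alpha_{\gamma}(\beta) = \tfrac12(\e^\gamma\cosh(\beta)-1)\expec[W]$-type term, and $\tfrac12 z^\star(\beta,B)^2$ from the term where the $\e^\gamma$ multiplies $\sinh(\beta)$ inside the argument of $\log\cosh$, evaluated via the fixed point. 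This reproduces \eqref{number-edges-AIM}.

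The main obstacle I anticipate is not the structure of the argument but the bookkeeping in the first step: carrying the annealed partition function computation of \cite{GGvdHP} through with the extra $\e^{\gamma}$ factor and correctly identifying which occurrences of $\cosh(\beta)$, $\sinh(\beta)$ get the tilt — in particular keeping straight the $\alpha(\beta)$ normalization term (which comes from $\prod_{i<j}(\ell_{\sN}+w_iw_j\e^{\beta})/\ell_{\sN}$-type products and is \emph{edge-counting}, hence gets tilted) versus the $\log\cosh$ term — and then verifying that the resulting $\varphi(\gamma)$ is differentiable for \emph{all} $\beta$, including past $\betacan$, which is exactly the point that makes G\"artner--Ellis applicable where it was not for the total spin. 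The envelope-theorem cancellation in the LLN computation is routine once the formula is in hand.
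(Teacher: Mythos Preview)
Your proposal is correct and follows essentially the same route as the paper: tilt the annealed partition function by $\e^{\gamma|E_{\sN}|}$, carry through the reduction to the inhomogeneous Curie--Weiss model (the effect of the tilt is exactly to replace $\sinh(\beta)$ by $\e^{\gamma}\sinh(\beta)$ in the effective ICW coupling and $\cosh(\beta)$ by $\e^{\gamma}\cosh(\beta)$ in the $\alpha$-type normalisation), identify the limiting cumulant generating function $\varphi(\gamma)$, apply G\"artner--Ellis, and obtain the LLN from $\varphi'(0)$ via the envelope-theorem cancellation you describe.

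One point where your argument is slightly imprecise and the paper is more careful: you write that the fixed point varies smoothly in $\gamma$ even past $\betacan$. That is not quite true when $B=0$. As $\gamma$ varies, the effective ICW inverse temperature $\e^{\gamma}\sinh(\beta)$ may cross its critical value, and at that crossing $z^\star(\gamma,\beta,0)$ is \emph{not} differentiable in $\gamma$ (this is a genuine phase transition in the tilted model, not merely the branch-selection issue at $B=0$). The paper resolves this by observing that the left and right derivatives of $z^\star$ in $\gamma$ both exist (finiteness of specific heat), and that the envelope cancellation kills the $\partial_\gamma z^\star$ contribution in either one-sided derivative, so the left and right derivatives of $\varphi$ agree and $\varphi$ is in fact differentiable. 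Your envelope argument thus does the work, but the reason is the cancellation itself, not smoothness of $z^\star$.
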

\medskip

We continue by investigating the limiting distribution of the degrees of vertices. Our main result is as follows:
\begin{theorem}[Degrees in the annealed Ising model]
\label{thm-degrees-AIM}
Suppose that \ch{Condition~\ref{cond-weightreg} holds}. For all $\beta\geq0$ and $B\in \mathbb{R}$, the moment generating function of the degree $D_j$ of vertex $j$ under $\muan_n$ satisfies
	\be
	\label{eq-MGF-degree}
 	\expec_{\muan_n}\big[\e^{tD_j}\big]=  (1+o(1)) \e^{\cosh(\beta) w_j (\e^t-1)}\frac{\cosh\Big(z^\star(\beta,B)\e^t w_j \sqrt{\frac{\sinh(\beta)}{\expec[W]}} +B \Big)} 
	{\cosh\Big(z^\star(\beta,B) w_j \sqrt{\frac{\sinh(\beta)}{\expec[W]}} +B \Big)}.
	\ee
Consequently, the degree $D_{U}$ of a uniformly chosen vertex satisfies
	\be
	\label{eq-MGF-degree-unif}
 	\lim_{\sN\rightarrow \infty} \expec_{\muan_n}\big[\e^{tD_{U}}\big]= \expec\bigg[\e^{\cosh(\beta) W (\e^t-1)}\frac{\cosh\Big(z^\star(\beta,B) \e^t W \sqrt{\frac{\sinh(\beta)}{\expec[W]}} +B \Big)} 
	{\cosh\Big(z^\star(\beta,B) W\sqrt{\frac{\sinh(\beta)}{\expec[W]}} +B \Big)}\bigg].
	\ee
\ch{In the above, $z^\star(\beta,B)$ is the solution to \eqref{z-fixed-point}.}
\end{theorem}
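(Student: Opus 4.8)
\textbf{Proof strategy for Theorem~\ref{thm-degrees-AIM}.}
The idea is to first integrate out the graph so that the moment generating function of $D_j$ becomes an expectation over spins alone, and then to feed in the inhomogeneous Curie--Weiss representation of the annealed measure that already underlies Theorems~\ref{thm-annealedpressure} and~\ref{thm-ldp-weighted}. For the first step, note that in $\GRGnw$ the edge indicators are independent and $D_j=\sum_{i\neq j}I_{ij}$, so $\e^{tD_j}=\prod_{i\neq j}\e^{tI_{ij}}$; performing the graph expectation edge by edge in the numerator and denominator of \eqref{eq-annealedIsing} gives
\[
\expec_{\muan_n}\bigl[\e^{tD_j}\bigr]=\expec_{\muan_n}\bigl[R^{(t)}_{n,j}(\sigma)\bigr],\qquad
R^{(t)}_{n,j}(\sigma)=\prod_{i\neq j}\frac{1-p_{ij}+p_{ij}\e^{t}\e^{\beta\sigma_i\sigma_j}}{1-p_{ij}+p_{ij}\e^{\beta\sigma_i\sigma_j}},
\]
where on the right $\muan_n$ is the marginal annealed Ising measure on spins; intuitively, weighting by $\e^{tD_j}$ amounts to multiplying the probabilities of the edges incident to $j$ by $\e^{t}$ in \eqref{eq-annealedIsing}.

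Next I would linearise $R^{(t)}_{n,j}$. Writing $\e^{\beta\sigma_i\sigma_j}=\cosh\beta+\sigma_i\sigma_j\sinh\beta$ and using $p_{ij}\le w_iw_j/\ell_n$, $\ell_n\to\infty$ and Condition~\ref{cond-weightreg}(c), a second-order expansion of $\log R^{(t)}_{n,j}$ gives $\log R^{(t)}_{n,j}(\sigma)=(\e^{t}-1)\bigl(\cosh\beta\sum_{i\neq j}p_{ij}+\sinh\beta\,\sigma_j\sum_{i\neq j}p_{ij}\sigma_i\bigr)+o(1)$, with a deterministic error of order $\sum_{i\neq j}p_{ij}^{2}=O(1/n)$; moreover $\sum_{i\neq j}p_{ij}=w_j+o(1)$ and $\sum_{i\neq j}p_{ij}\sigma_i=\tfrac{w_j}{\ell_n}S^{\sss(w)}_n+o(1)$, again with deterministic errors, so the only randomness left is in $\sigma_j$ and in the weighted spin $S^{\sss(w)}_n/n$. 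Here I would invoke the Hubbard--Stratonovich/Curie--Weiss description of $\muan_n$: there is an auxiliary field $Z_n$ under which the spins become conditionally independent with $\expec_{\muan_n}[\sigma_i\mid Z_n=z]=\tanh\bigl(\sqrt{\sinh(\beta)/\expec[W]}\,w_iz+B\bigr)+o(1)$, and $Z_n\convd\delta_{z^\star(\beta,B)}$ when $B\neq0$ or $\beta<\betacan$, while $Z_n\convd\tfrac12(\delta_{z^\star}+\delta_{-z^\star})$ when $B=0$, $\beta>\betacan$. Conditionally on $Z_n=z$, a weak law of large numbers (variance $\le\expec[W_n^2]/n\to0$) gives $\tfrac1n S^{\sss(w)}_n\to\expec[W\tanh(\sqrt{\sinh(\beta)/\expec[W]}\,Wz+B)]$, which at $z=\pm z^\star$ equals $z\sqrt{\expec[W]/\sinh(\beta)}$ by the fixed-point equation \eqref{z-fixed-point}; hence $\sinh\beta\,\tfrac{w_j}{\ell_n}S^{\sss(w)}_n\to w_j Z_n\sqrt{\sinh(\beta)/\expec[W]}$ and, on $\{Z_n\to z\}$, $R^{(t)}_{n,j}(\sigma)\to\exp\{(\e^{t}-1)(\cosh(\beta)w_j+w_jz\sqrt{\sinh(\beta)/\expec[W]}\,\sigma_j)\}$.

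To conclude for a fixed vertex, observe that $R^{(t)}_{n,j}\le\exp\{(\e^{t}-1)_+\e^{\beta}w_j\}$ deterministically (each factor is at most $1+\tfrac{w_iw_j}{\ell_n}(\e^{t}-1)_+\e^{\beta}$, since $p_{ij}/(1-p_{ij})=w_iw_j/\ell_n$), so dominated convergence applies; combining it with the previous step and the elementary identity $\expec[\e^{L\sigma_j}\mid Z_n=z]=\cosh(h_j+L)/\cosh(h_j)$ with $h_j=\sqrt{\sinh(\beta)/\expec[W]}\,w_jz+B$ and $L=(\e^{t}-1)\sqrt{\sinh(\beta)/\expec[W]}\,w_jz$ (so that $h_j+L=\e^{t}\sqrt{\sinh(\beta)/\expec[W]}\,w_jz+B$), and noting that the resulting ratio is even in $z$ (hence the average over $Z_n\in\{\pm z^\star\}$ at $B=0$ is immaterial, while $Z_n=z^\star$ is deterministic at $B\neq0$), one recovers \eqref{eq-MGF-degree}, with the stated $(1+o(1))$ precision by tracking the errors above; this evenness is also the reason the positive branch $z^\star(\beta,0)$ is the right object in the low-temperature formula. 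For the uniform vertex, since $U$ is independent of $(\GRGnw,\sigma)$ one has $\expec_{\muan_n}[\e^{tD_{U}}]=\tfrac1n\sum_{j\in[n]}\expec_{\muan_n}[\e^{tD_j}]$, each term converges to the right-hand side of \eqref{eq-MGF-degree} evaluated at $w_j$, and a uniform-integrability argument together with Condition~\ref{cond-weightreg} (so that the empirical law of the weights converges to that of $W$) yields \eqref{eq-MGF-degree-unif}; this last interchange is immediate for $t\le0$, whereas for $t>0$ the identity should be read in $[0,\infty]$ when $\expec[\e^{cW}]=\infty$.

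The routine combinatorics and Taylor estimates above are not the difficulty; the crux is the Curie--Weiss input, i.e.\ establishing the conditional independence of the spins given a one-dimensional macroscopic field, and the concentration of that field, with error control strong enough to survive being inserted into the exponent after the linearisation, and \emph{uniformly} over all $\beta\ge0$, $B\in\mathbb{R}$ --- in particular at the critical point $\beta=\betacan$ and in the non-unique low-temperature regime $B=0$, $\beta>\betacan$, where the field concentrates on the two values $\pm z^\star$. A secondary technical nuisance is justifying the interchange of limit and average, and the uniform integrability, in the passage from a fixed vertex to a uniformly chosen one when $t>0$ and the weights are unbounded.
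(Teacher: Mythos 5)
Your proposal is correct and rests on the same core mechanism as the paper's proof --- integrate out the edges first, then exploit the rank-1 inhomogeneous Curie--Weiss / Hubbard--Stratonovich structure of $\muan_n$ and the concentration of the auxiliary field at $z^\star(\beta,B)$ --- but it is organized differently. The paper absorbs the tilt $\e^{tD_j}$ into the couplings, producing a modified model with $\beta_{ij}(t_{ij})$, $C_{ij}(t_{ij})$ and the prefactor $G_{\sN}(\bft,\beta)$ as in \eqref{A-form-general}; the answer then comes from the explicit ratio $G_{\sN}(\bft(\bfs_1),\beta)/G_{\sN}({\bf 0},\beta)$ (which yields the Poisson factor) together with a Laplace-method analysis of the tilted Gaussian measure $\gamma_{\beta,B,\sN}$ (which yields the $\cosh$-ratio). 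You instead keep the original annealed measure and linearize the Radon--Nikodym functional $R^{(t)}_{n,j}(\sigma)$, reducing everything to the joint asymptotic law of $(\sigma_j, S^{\sss(w)}_{\sN}/\sN)$ under $\muan_n$ plus dominated convergence; your identity $\expec[\e^{L\sigma_j}\mid Z]=\cosh(h_j+L)/\cosh(h_j)$ with $h_j+L=\e^t\sqrt{\sinh(\beta)/\expec[W]}\,w_jz+B$ reproduces the $\cosh$-ratio, and the deterministic part $\cosh(\beta)\sum_{i\ne j}p_{ij}\to w_j\cosh(\beta)$ the Poisson factor. Your route is more probabilistic and makes transparent why annealing acts as a reweighting of the edges incident to $j$, and it handles the symmetric low-temperature phase cleanly via the evenness observation; the paper's route is more systematic, since the same machinery with general $\bft$ also delivers Theorem \ref{thm-degrees-m-AIM} and the edge LDP. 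The one step you invoke but do not supply is the multiplicative $(1+o(1))$ control of the discrepancy between $\muan_n$ and the exact product-mixture given the Hubbard--Stratonovich field: your ``$\expec_{\muan_n}[\sigma_i\mid Z_n=z]=\tanh(\cdots)+o(1)$ with conditional independence'' hides the correction term that the paper isolates as $F_{\sN}(\bfs)$ and controls in Lemma \ref{lem-Fnt-asy} (showing $\expec_{\muICWt}[\e^{F_{\sN}}]\to\e^{-\kappa}$ with $\kappa(\bfs)=\kappa({\bf 0})$, so that it cancels in the ratio). You correctly flag this as the crux rather than the routine Taylor estimates, so the proposal is a sound blueprint, but a complete write-up would need that lemma (or an equivalent) to justify the stated precision uniformly in $(\beta,B)$.
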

\medskip

We remark that in \eqref{eq-MGF-degree-unif} we both take the average w.r.t.\ the annealed measure $\muan_n$ {\em as well as} with the uniform vertex $U\in[n]$. 

\begin{remark}[Degree distribution annealed Ising model]
\label{rem-degree-ann-Ising}
We can restate \eqref{eq-MGF-degree-unif} as
	\eqn{
	\frac{1}{n}\sum_{v\in[n]} \expec_{\muan_n}\big[\e^{tD_{v}}\big]\rightarrow \expec\bigg[\e^{\cosh(\beta) W (\e^t-1)}\frac{\cosh\Big(z^\star(\beta,B) \e^t W \sqrt{\frac{\sinh(\beta)}{\expec[W]}} +B \Big)} 
	{\cosh\Big(z^\star(\beta,B) W\sqrt{\frac{\sinh(\beta)}{\expec[W]}} +B \Big)}\bigg].
	}
In \eqref{eq-MGF-degree}, we see that the moment generating function of a vertex having weight $w$ is close to 
	\[
	\e^{\cosh(\beta) w (\e^t-1)}\frac{\cosh\Big(z^\star(\beta,B)\e^t w \sqrt{\frac{\sinh(\beta)}{\expec[W]}} +B \Big)} 
	{\cosh\Big(z^\star(\beta,B) w\sqrt{\frac{\sinh(\beta)}{\expec[W]}} +B \Big)}.
	\]
We recognize $\e^{\cosh(\beta) w (\e^t-1)}$ as the moment generating function of a Poisson random variable with mean $\cosh(\beta) w$, which is multiplied by another function. However, this factor does not turn out to be a moment generating function.

By setting $a(\beta)=\sqrt{\frac{\sinh(\beta)}{\expec[W]}}$ for the sake of notation, we can rewrite the product of the second and third factors in the r.h.s.\ of \eqref{eq-MGF-degree} as
	$$
	\frac{ \e^{ (w_j+B) a(\beta) z^\star}   \e^{w_j (\cosh(\beta) + a(\beta)z^\star) (\e^t-1)} 
	+ \e^{  -(w_j+B) a(\beta) z^\star }   \e^{w_j (\cosh(\beta) - a(\beta)z^\star) (\e^t-1)}    }{2 \cosh\Big(\left (w_j+B \right ) a(\beta) z^\star  \Big)}.
	$$
This shows that the limiting moment generating function of $D_j$ is a mixed Poisson random variables with parameters  $w_j (\cosh(\beta) +Ya(\beta)z^\star)$,  where 
	\[
	\prob(Y=1)=1-\prob(Y=-1)=\frac{\e^{ (w_j+B) a(\beta) z^\star }}{2 \cosh\Big( \left (w_j+B \right ) a(\beta) z^\star \Big)},
	\]
provided $w_j (\cosh(\beta)\pm a(\beta)z^\star)$ are both positive. We lack a more detailed interpretation of the above two realizations.
\end{remark}
\medskip

Let us next relate Theorem \ref{thm-degrees-AIM} to Theorem \ref{thm-LDP-edges}.
We can use \eqref{eq-MGF-degree-unif} to show that, as in \eqref{number-edges-AIM}, 
	\eqn{
	\expec_{\muan_n} \big[\frac{1}{\sN}|E_{\sN}|\big]\rightarrow \frac 1 2 {z^\star(\beta,B)}^2  +  \frac 1 2 \cosh(\beta) \expec{[W]}.\nn
	}
Indeed, note that
	\eqn{
	\label{lim-number-edges}
	\expec_{\muan_n} \big[\frac{1}{\sN}|E_{\sN}|\big]=\tfrac{1}{2}\expec_{\muan_n}[D_{U}]=\frac12 \frac{d}{dt} \expec_{\muan_n}\big[\e^{tD_{U}}\big]\Big|_{t=0}.\nn
	}
Here, in the middle formula, we again take the average w.r.t.\ both $\muan_n$ as well as the uniform vertex $U\in[n]$.
Convergence of the moment-generating function implies convergence of all moments, so that
	\eqan{\label{lim-number-edges2}
	\lim_{n\rightarrow\infty} \expec_{\muan_n} \big[\frac{1}{\sN}|E_{\sN}|\big]&=\frac12\frac{d}{dt} \expec\bigg[\e^{\cosh(\beta) W (\e^t-1)}\frac{\cosh\Big(z^\star(\beta,B)\e^t W\sqrt{\frac{\sinh(\beta)}{\expec[W]}}+B\Big)} 
	{\cosh\Big(z^\star(\beta,B) W\sqrt{\frac{\sinh(\beta)}{\expec[W]}}+B\Big)}\bigg]\Big|_{t=0}\nn\\
	&=\frac 1 2 \cosh(\beta) \expec{[W]} +\frac12z^\star(\beta,B)\expec\bigg[W\sqrt{\frac{\sinh(\beta)}{\expec[W]}}\tanh\Big(z^\star(\beta,B) W\sqrt{\frac{\sinh(\beta)}{\expec[W]}}+B\Big)\bigg]\nn\\
	&=\frac 1 2 \cosh(\beta) \expec{[W]} + \frac12 z^\star(\beta,B)^2,
	}
as required, where we have made use of \eqref{z-fixed-point} in the last step. Thus, for \eqref{number-edges-AIM}, it suffices to prove that $\frac{1}{\sN}|E_{\sN}|$ is concentrated.
\qed 
\medskip

In the next theorem, we extend Theorem \ref{thm-degrees-AIM} to several vertices:
\begin{theorem}[Degrees of $m$ vertices in the annealed Ising model]
\label{thm-degrees-m-AIM}
Suppose that \ch{Condition~\ref{cond-weightreg} holds}. For all $\beta\geq0$ and $B\in \mathbb{R}$ and $m\in \mathbb{N}$, the moment generating function of the degrees $(D_1,D_2,\ldots, D_m)$ under $\muan_n$ satisfies
$$
\expec_{\muan_n}\Big[\e^{\sum_{i=1}^m t_i D_i} \Big] = \prod_{i=1}^m  \e^{ \cosh(\beta) w_i  (\e^{t_i}-1)  }\prod_{i=1}^m\frac{\cosh\Big(z^\star(\beta,B) \e^{t_i} w_i\sqrt{\frac{\sinh(\beta)}{\expec[W]}} +B \Big)} 
	{\cosh\Big(z^\star(\beta,B) w_i\sqrt{\frac{\sinh(\beta)}{\expec[W]}} +B \Big)} (1+o(1)).
$$
\end{theorem}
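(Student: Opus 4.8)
The plan is to follow the same route as the proof of Theorem~\ref{thm-degrees-AIM}, now carrying along $m$ marking parameters $t_1,\dots,t_m$, and to read off the product structure from the mean-field (inhomogeneous Curie--Weiss) representation of the annealed model. First I would write
\[
\expec_{\muan_n}\Big[\e^{\sum_{i=1}^m t_iD_i}\Big]
=\frac{1}{\Zan_n(\beta,B)}\sum_{\sigma\in\Omega_n}\expec\Big[\e^{\sum_{i=1}^m t_i D_i}\,\e^{\beta\sum_{k<l}I_{kl}\sigma_k\sigma_l+B\sum_k\sigma_k}\Big],
\]
and, using independence of the edge indicators and $\e^{\beta\sigma_k\sigma_l}=\cosh\beta+\sigma_k\sigma_l\sinh\beta$, factorise the expectation over edges. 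Since $\e^{\sum_i t_iD_i}=\prod_{i\in[m]}\e^{t_i\sum_{k\ne i}I_{ik}}$, the tilt only affects edges incident to $[m]$. For an edge $\{i,i'\}$ with $i,i'\in[m]$ the tilted weight differs from the untilted one by a factor $1+O(1/n)$ (there are only $\binom m2=O(1)$ such edges, each with $p_{ii'}=O(1/n)$), so these contribute an overall factor $1+o(1)$ and may be dropped. The remaining tilt enters only through the $m$ ``stars'' from $i\in[m]$ to $[n]\setminus[m]$, where the edge weight $1-p_{ik}+p_{ik}\e^{\beta\sigma_i\sigma_k}$ is replaced by $1-p_{ik}+p_{ik}\e^{t_i}\e^{\beta\sigma_i\sigma_k}$.

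Next I would reduce to the Curie--Weiss picture exactly as in the case $m=1$. Expanding $\log\!\big(1+p_{ik}(\cosh\beta-1)+p_{ik}\sigma_i\sigma_k\sinh\beta\big)$ and its tilted analogue to first order in $p_{ik}=w_iw_k/\ell_n+O(1/n)$ and summing over $k$, with $\sum_{k\notin[m]}p_{ik}=w_i+O(1/n)$ and $\sum_{k\notin[m]}p_{ik}\sigma_k=\tfrac{w_i}{\ell_n}S^{\sss(w)}_n+O(1/n)$, one obtains that the numerator equals, up to $1+o(1)$,
\[
\e^{\alpha_n(\beta)-\frac{\sinh\beta}{2\ell_n}\sum_kw_k^2}\;\e^{\sum_{i\in[m]}(\e^{t_i}-1)w_i\cosh\beta}\sum_{\sigma}\e^{\frac{\sinh\beta}{2\ell_n}\big(\sum_kw_k\sigma_k\big)^2+B\sum_k\sigma_k+\sum_{i\in[m]}(\e^{t_i}-1)\frac{\sinh\beta}{\ell_n}w_i\sigma_i\sum_kw_k\sigma_k}.
\]
Linearising the quadratic term by a Hubbard--Stratonovich transform introduces a Gaussian variable which, after the rescaling used in the proof of Theorem~\ref{thm-annealedpressure}, concentrates at the saddle point $z^\star=z^\star(\beta,B)$ solving \eqref{z-fixed-point}; at that saddle $\tfrac{\sinh\beta}{\ell_n}S^{\sss(w)}_n$ is effectively replaced by $\sqrt{\sinh\beta/\E[W]}\,z^\star$. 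Because the $m$ extra terms perturb the $\Theta(n)$ exponent by only $O(1)$, they do not move the saddle, and the same computation applied to the denominator $\Zan_n(\beta,B)$ produces the identical Gaussian integral without the extra terms.

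Finally, at $z=z^\star$ the integrand factorises over sites, so the $m$ marked spins $\sigma_1,\dots,\sigma_m$ decouple from each other and from the bulk — this is where $m$ being fixed is used. Summing each $\sigma_i\in\{-1,+1\}$ against its effective single-site weight, which the tilt turns from $\propto\e^{\pm(\sqrt{\sinh\beta/\E[W]}\,w_iz^\star+B)}$ into $\propto\e^{\pm(\e^{t_i}\sqrt{\sinh\beta/\E[W]}\,w_iz^\star+B)}$, and dividing by the untilted single-site partition function $2\cosh(\sqrt{\sinh\beta/\E[W]}\,w_iz^\star+B)$ coming from $\Zan_n(\beta,B)$, yields precisely
\[
\prod_{i=1}^m\e^{\cosh\beta\,w_i(\e^{t_i}-1)}\prod_{i=1}^m\frac{\cosh\big(z^\star(\beta,B)\e^{t_i}w_i\sqrt{\sinh\beta/\E[W]}+B\big)}{\cosh\big(z^\star(\beta,B)w_i\sqrt{\sinh\beta/\E[W]}+B\big)},
\]
which is the claim. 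Equivalently, one may observe that Theorem~\ref{thm-degrees-AIM} already identifies each individual factor, so the only new ingredient is the asymptotic factorisation of the joint law of $(D_1,\dots,D_m)$, which is exactly the decoupling of finitely many spins in the mean-field representation. I expect the main obstacle to be purely technical rather than conceptual: making all the $1+o(1)$'s genuinely uniform, and in particular justifying the Laplace/saddle-point asymptotics for the Gaussian integral with the $O(1)$-perturbed integrand together with the simultaneous decoupling of the $m$ marked spins. Since all of this is already carried out for $m=1$ in the proof of Theorem~\ref{thm-degrees-AIM}, the extension should amount to bookkeeping plus the elementary remark that the edges internal to $[m]$ are negligible.
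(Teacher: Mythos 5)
Your proposal is correct and follows essentially the same route as the paper: reduce to the inhomogeneous Curie--Weiss representation in which the marked vertices carry tilted weights $w_i\e^{t_i}$, extract the Poisson prefactors $\e^{\cosh(\beta)w_i(\e^{t_i}-1)}$ from the first-order expansion of the edge normalizations, discard the $O(1)$ edges internal to $[m]$ and the diagonal corrections, and obtain the cosh ratios from the Hubbard--Stratonovich/Laplace concentration at $z^\star$, where the finitely many marked spins decouple. The one technical point you defer to the $m=1$ case — that the accumulated second-order corrections $O(\sum_{i,j}p_{ij}^2)$ in the exponent are $O(1)$ rather than $o(1)$ and must be shown to produce the \emph{same} constant in numerator and denominator — is exactly what the paper's Lemma~\ref{lem-Fnt-asy} (the identity $\kappa(\bfs)=\kappa(\mathbf{0})$ for finitely supported $\bfs$) supplies, so your sketch is complete once that lemma is invoked.
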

\medskip

Theorem \ref{thm-degrees-m-AIM} implies that the degrees of different vertices under the annealed measure are approximately {\em independent.}
	
\subsection{Discussion}
\label{sec-disc}
In this section, we discuss our results and state some further conjectures. 

\paragraph{Random-quenched LDP.}
For the random-quenched model we only obtain an LDP in the high-temperature regime. The difficulty in this analysis is that the rate function is non-convex at low temperature. This means that the usual technique relying on the G\"artner-Ellis theorem, by taking the Legendre transform of the cumulant generating function, does not work. The cumulant generating function can easily be expressed in terms of the difference of the pressure for different values of the external field $B$. However, this Legendre transform is the convex envelope of the cumulant  generating function. This raises the question how to do this for all inverse temperatures $\beta$.

\paragraph{Averaged-quenched LDP.}
The averaged quenched measure is defined as $\expec\big[\muqe_{\sN}(\sigma)\big]$ (recall \eqref{eq-quenchedIsing}). Here, even in the high-temperature regime, we are in trouble since the averaged quenched cumulant generating function is {\em not} a difference of pressures. Independently of the explicit computation, an interesting question is whether it is possible to relate the random-quenched and the averaged-quenched large deviation rate functions.


\paragraph{Large deviations of random graph quantities.}
As already mentioned in the introduction, if one is interested only in graph quantities, then the effect of the annealing
amounts to changing the graph law from $\mathbb{P}$ (the law of of $\GRGnw$) to a new law $\mathbb{P}_{\beta,B}$
depending on the two parameters $\beta$ and $B$. Evidently $\lim_{\beta\to 0, B\to 0} \mathbb{P}_{\beta,B} = \mathbb{P}$.
We know that under the law $\mathbb{P}$ a uniform degree has \ch{an asymptotic}{} mixed Poisson distribution with mixing distribution $W$.
From formula \eqref{eq-MGF-degree-unif} we see that in zero external field $B=0$, the moment generating function of a uniform 
degree changes in two ways:
firstly, in the high-temperature regime, the mixing distribution changes to  $W\cosh(\beta)$ (since $z^{\star}(\beta,0)=0$ there);
secondly, in the low-temperature region a new effect appears due to the non-zero value of $z^{\star}(\beta,0)$.
It would be of interest to invert the moment generating function \eqref{eq-MGF-degree-unif} and thus explicitly
characterize the distribution of a uniform degree at low temperatures. \ch{This can be done once we know that 
 $\cosh(\beta)-\sinh(\beta)z^\star(\beta,0)/\expec[W]$ is non-negative (see Remark \ref{rem-degree-ann-Ising}), but we do not know this to be true in general.
Also, as of yet, we have no interpretation for this novel mixed Poisson distribution for the degrees}.
It might also be interesting to investigate other properties of the random graph under the annealed Ising model.
\ch{An example would be the distribution of triangles, for which the positive dependence of edges enforced by the annealed Ising model might have a pronounced effect. 
A further interesting problem is to identify the large deviation rate function in a {\em joint} LDP for both the spin as  
well as the total number of edges.}

\paragraph{Organisation of this paper.} We start in Section~\ref{sec-overview} by describing an enlightening computation that is at the heart of our analysis. In Section~\ref{sec-LDP-spin}, we derive the LDP for the total spin \col{and the total weighted spin}. In Section~\ref{sec-LDP-edges}, we investigate the large deviation properties, as well as the weak convergence, of the number of edges in the annealed Ising model, thus quantifying the statement that under the annealed Ising model, there are more edges in the graph than for the typical graph. In Section~\ref{sec-degree-distr-AIM}, we investigate the degree distribution under the annealed Ising model. \col{Finally in the Appendix we re-derive the LDP for the total spin by combinatorial
arguments.}

\subsection{Preliminaries: an enlightening computation}
\label{sec-overview}

Our large deviations results are obtained from exact expressions for moment generating functions of spin or of edge variables under the annealed $\GRGnw$ measure.
Such exact expressions follow from the observation (already contained  in~\cite[Sec.~2.1]{GGvdHP}) that the annealed $\GRGnw$ measure can be
\ch{identified as an {\em inhomogeneous} Ising model on the complete graph, which is called the {\em rank-1 inhomogeneous Curie-Weiss model} in \cite{GGvdHP}.}{} In this paper, we will extend such computations significantly, for example by also including the edge statuses. We can write the numerator in the definition \eqref{eq-annealedIsing} of $\muan_{\sN}$ as
	\begin{align}
	\expec\biggl[\exp\biggl\{\beta \sum_{(i,j)\in E_{\sN}} \sigma_i \sigma_j + B\sum_{i\in[\sN]} \sigma_i\biggr\}\biggr]
	& =\expec\biggl[\exp\biggl\{\beta \sum_{1\leq i < j \leq {\sN}}I_{ij} \sigma_i \sigma_j + B\sum_{i\in[\sN]} \sigma_i\biggr\}\biggr]\nn\\
	&= \e^{B\sum_{i\in[\sN]} \sigma_i} \prod_{i<j} \expec\biggl[\e^{\beta I_{ij} \sigma_i \sigma_j }\biggr] \nn\\
	&= \e^{B\sum_{i\in[\sN]} \sigma_i} \prod_{i<j} \biggl[\e^{\beta \sigma_i \sigma_j }p_{ij} +1-p_{ij}\biggr],\nn
	\end{align}
where we have used the independence of the edges in the second equality. Define
	\be
	\label{def-betaij-Cij}
	\beta_{ij} = \frac12 \log \frac{1+p_{ij}(\e^\beta-1)}{1+p_{ij}(e^{-\beta}-1)}, \qquad {\rm and} \qquad C_{ij} =\frac{1+p_{ij}(\cosh(\beta)-1)}{\cosh(\beta_{ij})}.
	\ee
Then, we can write
	\be
	\e^{\beta \sigma_i \sigma_j }p_{ij} +1-p_{ij} = C_{ij}\e^{\beta_{ij}\sigma_i\sigma_j}.\nn
	\ee
Hence, also using the symmetry $\beta_{ij}=\beta_{ji}$,
	\begin{align}
	\expec\biggl[\exp\biggl\{\beta \sum_{(i,j)\in E_{\sN}} \sigma_i \sigma_j + B\sum_{i\in[\sN]} \sigma_i\biggr\}\biggr]
	&= G_n(\beta) \; \e^{\frac12 \sum_{i,j\in[\sN]} \beta_{ij} \sigma_i\sigma_j+B\sum_{i\in[\sN]} \sigma_i},\nn
	\end{align}
where 
$$
G_n(\beta) = \biggl(\prod_{1\le i<j \le n} C_{ij}\biggr) \biggl(\prod_{i\in[\sN]}\e^{-\beta_{ii}/2}\biggr)
$$
and
$\beta_{ii}$ is defined as in~\eqref{def-betaij-Cij} with $p_{ii} = w_i^2 / (\ell_{\sN}+w_i^2)$. Defining
	\be
	\label{eq-def-alpha}
	\alpha_{\sN}(\beta) = \frac{1}{\sN} \log  G_n(\beta)
	\ee
one has
	\be
	\expec\biggl[\exp\biggl\{\beta \sum_{(i,j)\in E_{\sN}} \sigma_i \sigma_j + B\sum_{i\in[\sN]} \sigma_i\biggr\}\biggr]
	= \e^{\sN \alpha_{\sN}(\beta)} \e^{\frac12 \sum_{i,j\in[\sN]} \beta_{ij} \sigma_i\sigma_j+B\sum_{i\in[\sN]} \sigma_i}.\nn
	\ee
We observe that  the quantity  $\e^{\frac12 \sum_{i,j\in[\sN]} \beta_{ij} \sigma_i\sigma_j+B\sum_{i\in[\sN]} \sigma_i}$ can be regarded as the Hamiltonian of an inhomogeneous Curie-Weiss model with couplings 
given by $(\beta_{ij})_{ij}$. Thus, the annealed Ising model on the  $\GRGnw$ is equivalent to such inhomogeneous model, see \cite{GGvdHP, DomGiaGibHofPri16}. Moreover, since $\beta_{ij}$ is close to factorizing 
into a contribution due to $i$ and to $j$,  one can prove  \cite{GGvdHP, DomGiaGibHofPri16} that:\\
	\be
	\label{eq-rewriteQNexp}
	\expec\biggl[\exp\biggl\{\beta \sum_{(i,j)\in E_{\sN}} \sigma_i \sigma_j + B\sum_{i\in[\sN]} \sigma_i\biggr\}\biggr]
	= \e^{\sN \alpha_n(\beta)} \e^{\frac12 \frac{\sinh(\beta)}{\ell_\sN} \left(\sum_{i} w_i \sigma_i\right)^2 +B\sum_{i\in[\sN]} \sigma_i + o(n)}.
	\ee
This computation shows that, in the large $n$-limit, the annealed measure $\muan_{\sN}$ at inverse temperature $\beta$ is close 
to the Boltzmann-Gibbs 
measure ${\mu}^{\sss {\mathrm{ICW}}}_n$ of the rank-1 inhomogeneous Curie-Weiss model 
at inverse temperature $\tilde{\beta}=\sinh(\beta)$
\be
{\mu}^{\sss {\mathrm{ICW}}}_n(\sigma) = \frac{\exp(H^{{\sss {\mathrm{ICW}}}}_n(\sigma))}{Z^{\sss \mathrm{ICW}}_{\sN}(\tb,B)}
\ee 
with Hamiltonian
\be\label{ICWham}
H^{{\sss {\mathrm{ICW}}}}_n(\sigma)= \frac12 \frac{\tilde{\beta}}{\ell_\sN} \left(\sum_{i} w_i \sigma_i\right)^2 +B\sum_{i\in[\sN]} \sigma_i
\ee
and normalizing partition function
	\be
	\label{z-icw}
	Z^{\sss \mathrm{ICW}}_{\sN}(\tb,B)=\sum_{\sigma \in \Omega_{\sN}}
	\e^{B \sum_{i \in[\sN]} {\sigma_i}}\e^{\frac{1}{2}\frac{ \tb}{\ell_{\sN}}\left(\sum_{i\in[\sN]}w_i \sigma_i\right)^2}.
	\ee
The above analysis can be simply extended to moment generating functions involving (some of) the edge variables $(I_{ij})_{1\leq i<j\leq n}$, as these can be incorporated into the exponential term and the expectation w.r.t.\ them can then again be taken. Of course, in such settings, the connection to the rank-1 inhomogeneous Curie-Weiss model is changed as well, and a large part of our paper deals precisely with the description of such changes, as well as their effects.

\section{LDP for the total spin}
\label{sec-LDP-spin}

\subsection{LDP in the high-temperature regime}
We first prove the LDP in the high-temperature regime for the annealed Ising model using the G\"artner-Ellis theorem.

\begin{proof}[Proof of Theorem~\ref{thm-LDP-HT}]
To apply the G\"artner-Ellis theorem we need the thermodynamic limit of the cumulant generating function of $S_{\sN}$ w.r.t.\ $\muan_{\sN}$, given by
	\be
	{c}(t) = \lim_{\sN\to\infty} \frac{1}{\sN} \log{\mathbb{E}_{\muan_{\sN}} \left[\exp\left(t S_{\sN}\right)\right]}.\nn
	\ee
Observe that
	\begin{align}
	\mathbb{E}_{\muan_{\sN}} \left[\exp\left(t S_{\sN}\right)\right] 
	&=\frac{Z^\an_{\sN}(\beta,B+t)}{Z^\an_{\sN}(\beta,B)}.\nn
	\end{align}
Hence,
	\be
	\label{eq-ldp1}
	{c}(t) = \lim_{\sN\to\infty} \frac{1}{\sN} \log{\frac{Z^\an_{\sN}(\beta,B+t)}{Z^\an_{\sN}(\beta,B)}}= \psian(\beta,B+t) - \psian(\beta,B),\nn
	\ee
where the existence of the limit follows from Theorem~\ref{thm-annealedpressure}.
We know that, for $B\neq0$,
	\be
	\frac{\dint}{\dint B} \psian(\beta,B) = M^\an(\beta,B).\nn
	\ee
For $\beta\leq\betacan$,
	\be
	\lim_{B\searrow0} M^\an(\beta,B) = \lim_{B\nearrow0} M^\an(\beta,B) =0,\nn
	\ee
so that $c(t)$ is differentiable in $t$. Hence, it follows from the G\"artner-Ellis theorem~\cite[Thm.~2.3.6]{DemZei09} that $S_{\sN}$ satisfies an LDP with rate function given by the Legendre transform of $c(t)$ which is given by~\eqref{rate2}. The proof for the random quenched Ising model is analogous.
\end{proof}

Let us now elaborate on the interpretation of the above results. The stationarity condition for~\eqref{rate2} is
	\be
	\label{xeqMbetaBplust}
	x= {M^\an}(\beta, B+t),
	\ee
which defines a function $\check{t}=\check{t}(x;\beta, B)$ such that 
	\be
	\label{final1}
	{{I^\an}}(x)= x\, \check{t}(x;\beta, B) - {{\psian}(\beta,B+ \check{t}(x;\beta, B) )}+ \psian(\beta,B).\nn
	\ee
Given $(\beta,B)$, the total spin per particle will concentrate around its typical value $M^\an(\beta,B)$ coinciding with the magnetization. To observe the atypical value $x$ the field must be changed from $B$ to $B+t$, where $t$ is determined by requiring that $x$ is the magnetization $M^\an(\beta,B+t)$. Note that we have not made use of any specifics about the graph sequence, or whether we are in the annealed or quenched setting. Hence, the above holds for Ising models on any graph sequence, as long as the appropriate thermodynamic limit of the pressure exists.

For $\beta>\betacan$, 
	\be
	m^+ := \lim_{B\searrow0} M^\an(\beta,B) > 0 > \lim_{B\nearrow0} M^\an(\beta,B) = -m^+,\nn
	\ee
and hence $c(t)$ is not differentiable for $t=-B$ and the G\"artner-Ellis theorem can no longer be applied. Since the spontaneous magnetization is not zero, it is not possible to find a $t$ such that~\eqref{xeqMbetaBplust} holds for $-m^+<x<m^+$. Therefore, the Legendre transform~\eqref{rate2} has a flat piece.
By the G\"artner-Ellis theorem, this Legendre transform still gives a lower bound on the rate function, but it is only an upper bound for so-called exposed points of the Legendre transform, i.e., for $x$ outside this flat piece. In fact, we show that the Legendre transform in general does {\em not} give the correct rate function, since the Legendre transform of the pressure is convex and we show that the rate function in the low temperature regime in general is not.

\subsection{LDPs for the total spin and weighted spin}\label{seq-LDP-weighted}
In this section we prove Theorem \ref{thm-ldp-weighted} and \col{then we deduce from it a new proof of
Theorem \ref{thm-annealedpressure} (thus by a method different from that of \cite{GGvdHP})}. 
Following Ellis' approach~\cite{Ellis}, we can compute the annealed pressure $\psian(\beta, B)$ and the large deviation function of  
${\bf Y}_{\sN}(\sigma):=(m_{\sN}(\sigma),\mwn(\sigma)) \equiv (\frac{S_n(\sigma)}{n},\frac{S^{(w)}_n(\sigma)}{n})$ w.r.t.\ the annealed measure $\muan_{\sN}$, starting from the LDP of $(m_{\sN},\mwn)$ w.r.t.\ the product measure 
	\be\label{proddelta}
	P_{\sN}=\bigotimes_{i=1}^{N}\left (\frac 1 2 \delta_{-1}+\frac 1 2 \delta_{+1} \right ).
	\ee
The large deviations of ${\bf Y}_{\sN}=(m_{\sN},\mwn)$ w.r.t.\ $P_{\sN}$ can \col{easily} be obtained by applying the G\"artner-Ellis theorem.

\begin{proof}[Proof of Theorem~\ref{thm-ldp-weighted}]
Let ${\bf t}=(t_1,t_2)$ and compute
	\be
	\E_{P_{\sN}} [\exp (\sN\, {\bf t} \cdot {\bf Y}_{\sN})] = \E_{P_{\sN}} [\exp ( t_1 S_{\sN}+ t_2 S^{\sss(w)}_{\sN})] 
	=  \E_{P_{\sN}} [ \Pi_{i\in [\sN]} \exp ( t_1 + w_i t_2)\sigma_i  ] =  \Pi_{i\in [\sN]}  \cosh (t_1 + w_i t_2),\nn
	\ee
where $\E_{P_{\sN}}$ denotes average w.r.t.\ $P_{\sN}$.
Thus, the cumulant generating function of the vector ${\bf Y}_{\sN}=(m_{\sN},\mwn)$ w.r.t.\ $P_{\sN}$  equals
	\be\label{cumgenfY}
	c_{\sN}({\bf t})= \frac{1}{\sN} \log \E_{P_{\sN}} [\exp \sN({\bf  t} \cdot {\bf Y}_{\sN})] = \frac{1}{\sN}\sum_{i\in [\sN]} \log \cosh  (t_1 + w_i t_2)=\E[\log \cosh  (t_1 + W_{\sN} t_2)],\nn
	\ee
here $\E$ represents the average w.r.t.\ the uniformly chosen vertex $W_{\sN}$. 
Since $|\log \cosh  (t_1 + W_{\sN} t_2)| \leq |t_1 + W_{\sN} t_2| \leq |t_1| + W_{\sN} |t_2|$ it follows from Condition~\ref{cond-weightreg}(b) and the dominated convergence theorem that
	\be
 	c({\bf t}):=\lim_{\sN\to \infty} c_{\sN}({\bf t})= \E [ \log \cosh (t_1 + W t_2) ],\nn
	\ee
with $W$ limiting weight of the graph. 
By the G\"artner-Ellis theorem, we conclude that ${\bf Y}_{\sN}$ has a large deviation principle 
with rate function
	\be
	I(x_1,x_2)= \sup_{(t_1,t_2)} \left ( t_1 x_1 + t_2 x_2 - \E [\log \cosh (t_1 + W t_2)]\right ).\nn
	\ee
We have 
	\be\nn
	I(x_1,x_2)=\left \{
	\begin{array}{ll}
	{t^\star_1} x_1 + {t^\star_2} x_2 - \E [\log \cosh ({t^\star_1} + W {t^\star_2})],&\quad  \mbox{if}\; |x_1| < 1, |x_2|< \E[W],\\
	+\infty, & \quad \mbox{otherwise,}
	\end{array}
	\right.
	\ee
where $t^\star_1=t^\star_1(x_1,x_2)$  and $t^\star_2=t^\star_2(x_1,x_2)$ are given by the stationarity condition
	\be\label{eqstatmq}
	\left \{ 
	\begin{array}{l}
	x_1= \E [ \tanh(t_1 + W t_2)],\\
	x_2 =  \E [ W \tanh(t_1 + W t_2)],
	\end{array}
	\right.
	\ee
for $ |x_1| < 1, |x_2|< \E[W]$.

For any function $f \,:\, \Omega_{\sN} \to \bbR$ we can write
	\be
	\sum_{\sigma\in \Omega_{\sN}} f(\sigma) = 2^{\sN} \int_{\Omega_{\sN}} f(\sigma) \dint P_{\sN}(\sigma).\nn
	\ee
Hence, also using~\eqref{eq-rewriteQNexp},
	\begin{align}
	Z^\an_{\sN}(\beta,B) &= 2^{\sN} \e^{\sN \alpha_{\sN}}  \int_{\Omega_{\sN}} \e^{\frac12 \frac{\sinh(\beta)}{\sN\expec[W_{\sN}]} 
	\left(\sum_{i} w_i \sigma_i\right)^2 +B\sum_{i\in[\sN]} \sigma_i + o(n)}\dint P_{\sN}(\sigma) \nn\\
	&= 2^{\sN} \e^{\sN \alpha_{\sN}} \int_{\Omega_{\sN}}   \e^{\frac{\sN}{2} \frac{\sinh(\beta)}{\expec[W_{\sN}]} (\mwn)^2 +\sN B m_{\sN} + o(n)}\dint P_{\sN}(\sigma)\nn
	\end{align}
and, similarly,
	\be
	\muan_{\sN}(\cdot) =  \frac{2^{\sN} \e^{\sN \alpha_{\sN}}}{Z_{\sN}^\an(\beta,B)}\int_{\Omega_{\sN}} (\cdot) \;  
	\e^{\frac{\sN}{2} \frac{\sinh(\beta)}{\expec[W_{\sN}]} (\mwn)^2 +\sN B m_{\sN} + o(n)}\dint 	P_{\sN}(\sigma) .\nn
	\ee

Then, by applying Varadhan's lemma~\cite[Thm.~II.7.1]{Ell06},
	\be
	\label{pressure}
	\psian(\beta,B)=\lim_{\sN\to\infty}\frac{1}{\sN} Z_{\sN}^\an(\beta,B) =\log(2) + \alpha(\beta)+  \sup_{(x_1,x_2)} \left [ \frac{\sinh(\beta)}{2\E[W]}x_2^{2}+Bx_1 - I(x_1,x_2) \right ]\nn
	\ee
\col{which is equivalent to \eqref{p-annealed-2dim}}, and the rate function of $(m_{\sN},\mwn)$ w.r.t.\  the annealed measure is~\cite[Thm.~II.7.2]{Ell06}
	\be
	I^\an_{\beta,B}(x_1,x_2)=I(x_1,x_2) - \frac{\sinh(\beta)}{2\E[W]}x_2^{2}-  B x_1 - \log(2) - \alpha(\beta)+ \psi^\an(\beta,B).\nn
	\ee
This shows that indeed $(S_{\sN}, S^{\sss(w)}_{\sN})$  satisfies an LDP w.r.t.\ $\muan_{\sN}$ with rate function given by~\eqref{eq-jointLDPrate1}.
By applying the contraction principle, we obtain the rate functions $I^\an_{\beta,B}$ of $m_{\sN}$ and  $J^\an_{\beta,B}$ of $\mwn$ as
	\be
	\label{I-contra}
	\quad I^\an_{\beta,B}(x_1)= \inf_{x_2} I^\an_{\beta,B}(x_1,x_2),
	\qquad J^\an_{\beta,B}(x_2)= \inf_{x_1} I^\an_{\beta,B}(x_1,x_2).
	\ee

In a similar way, we can also immediately obtain an LDP by incorporating the magnetic field in the a priori measure on the spins. For this, define 
	\be
	P^{\sss (B)}_{\sN}=\bigotimes_{i=1}^{\sN}\left(\frac{\e^{-B}}{\e^B+\e^{-B}} \delta_{-1}+\frac{\e^{B}}{\e^B+\e^{-B}} \delta_{+1}\right).\nn
	\ee
Then
	\be
	\E_{P^{\sss (B)}_{\sN}} [\exp (\sN\, {\bf t} \cdot {\bf Y}_{\sN})] = \E_{P^{\sss (B)}_{\sN}} [ \prod_{i\in [\sN]} \exp ( t_1 + w_i t_2)\sigma_i  ] 
	=  \prod_{i\in [\sN]}  \frac{\cosh (t_1 + w_i t_2)}{\cosh(B)},\nn
	\ee
where $\E_{P^{\sss (B)}_{\sN}}$ denotes average w.r.t.\ $P_{\sN}^{\sss (B)}$.
Hence, the cumulant generating function is given by
	\be
	\col{c_{\sN}^{\sss (B)}}({\bf t}) =  \E[\log \cosh  (B + t_1 + W_{\sN} t_2) ]- \log \cosh B,\nn
	\ee
(with $\E$ the average w.r.t.\ the uniformly chosen vertex $W_{\sN}$) which, as in the previous case, converges to 
	\be
	c\col{^{\sss (B)}}({\bf t}) = \E[ \log \cosh  (B + t_1+W t_2)] - \log \cosh B.\nn
	\ee
We can apply the G\"artner-Ellis theorem to obtain that $(m_{\sN},\mwn)$ satisfies an LDP w.r.t.\ $P^{\sss (B)}_{\sN}$ with rate function
	\be\label{eq-rateIB}
	I^{\sss (B)}(x_1,x_2) = \sup_{t_1,t_2} \left(t_1 x_1+t_2 x_2 - \E[ \log \cosh  (B + t_1 + W t_2)] \right) + \log \cosh B.
	\ee
The stationarity conditions are given by
	\be
	\label{eqstatmq-rep}
	\left \{ 
	\begin{array}{l}
	x_1= \E [ \tanh(B+t_1 + W t_2)],\\
	x_2 =  \E [ W \tanh(B+t_1 + W t_2)].
	\end{array}
	\right.
	\ee

Note that
	\be
	\sum_{\sigma\in \Omega_{\sN}} f(\sigma)\e^{B\sum_{i\in[\sN]}\sigma_i} = (2\cosh B)^{\sN} \int_{\Omega_{\sN}} f(\sigma) \dint P^{(B)}_{\sN}(\sigma).\nn
	\ee
Hence,
	\be
	\label{approx}
	\muan_{\sN}(\cdot) =  \frac{(2\cosh B)^{\sN} \e^{\sN \alpha_{\sN}}}{\col{Z^{\an}_{\sN}(\beta,B)}}\int_{\Omega_{\sN}} (\cdot) \;  
	\e^{\frac{\sN}{2} \frac{\sinh(\beta)}{\expec[W_{\sN}]} (\mwn)^2 + o(n)}\dint P^{(B)}_{\sN}(\sigma)
	\ee
where
	\be
	Z^\an_{\sN}(\beta,B) = (2\cosh B)^{\sN} \e^{\sN \alpha_{\sN}}\int_{\Omega_{\sN}} 
	\e^{\frac{\sN}{2} \frac{\sinh(\beta)}{\expec[W_{\sN}]} (\mwn)^2 + o(n)}\dint P^{(B)}_{\sN}(\sigma).\nn
	\ee
As above, it immediately follows that $(m_{\sN},\mwn)$ satisfies an LDP w.r.t.\ the annealed measure with rate function
	\be
	I^{\an\sss (B)}_{\beta,B}(x_1,x_2) = I^{\sss (B)}(x_1,x_2) -   \frac{\sinh(\beta)}{2\E[W]} x_2^{2} -\log\cosh B-\log 2-\alpha(\beta) + \psian(\beta,B),\nn
	\ee
where the pressure is given by
	\be
	\label{eq-pressure7}
	\psian(\beta,B) = \sup_{x_1,x_2} \left(  \frac{\sinh(\beta)}{2\E[W]}x_2^{2}- I^{\sss (B)}(x_1,x_2) \right) +\log\cosh B+\log 2+\alpha(\beta).
	\ee
This proves that also~\eqref{eq-jointLDPrate2} is a rate function for the LDP of $(S_{\sN}, S^{\sss(w)}_{\sN})$. The uniqueness of the large deviation function ~\cite[Thm.~II.3.2]{Ell06}
implies that~\eqref{eq-jointLDPrate2} and ~\eqref{eq-jointLDPrate1} coincide.
\end{proof}

We can rewrite the pressure in~\eqref{eq-pressure7} to prove Theorem~\ref{thm-annealedpressure}:

\begin{proof}[Proof of Theorem~\ref{thm-annealedpressure}]
Note that~\eqref{eq-pressure7} is equivalent to
	\be
	\psian(\beta,B) = \sup_{x_2} \left(  \frac{\sinh(\beta)}{2\E[W]} x_2^{2}- \inf_{x_1}I^{\sss (B)}(x_1,x_2) \right)+\log\cosh B+\log 2+\alpha(\beta),
	\ee
where it should be noted that, by the contraction principle, $\inf_{x_1}I^{\sss (B)}(x_1,x_2)$ is equal to the rate function $I^{\sss(w)}$ for the LDP of $\mwn$ w.r.t.\ $P^{\sss (B)}_{\sN}$. Setting $t_1=0$ in the above computations, this can be proved to be
	\be
	\label{eq-rateIw}
	I^{\sss(w)}(x) = \sup_{t} \left(t x - \E[ \log \cosh  (B + W t)] \right) + \log \cosh B,
	\ee
so that
	\be
	\label{eq-pressureLDPq}
	\psian(\beta,B) = \sup_{x_2} \left(  \frac{\sinh(\beta)}{2\E[W]} x_2^{2}-I^{\sss(w)}(x_2) \right)+\log\cosh B+\log 2+\alpha(\beta).
	\ee
The supremum in~\eqref{eq-rateIw} is attained for $t$ satisfying
	\be
	x = \E[W \tanh(B+Wt)] =: f(t).\nn
	\ee
Since $f(t)$ is strictly increasing, its inverse $f^{-1}$ is well defined. Hence, 
	\be
	I^{\sss(w)}(x) = f^{-1}(x) x - \E[ \log \cosh  (B + W f^{-1}(x))] + \log \cosh B,\nn
	\ee
and
	\begin{align}
	\frac{d}{dx}\left(\frac{\sinh(\beta)}{2\E[W]}x^{2}- I^{\sss(w)}(x) \right) &=\frac{\sinh(\beta)}{\E[W]}x- f^{-1}(x) - \left(x-\E[W\tanh(B+W f^{-1}(x))]\right) \frac{d}{dx}f^{-1}(x) \nn\\
	&=\frac{\sinh(\beta)}{\E[W]}x- f^{-1}(x).\nn
	\end{align}
Hence, the supremum in~\eqref{eq-pressureLDPq} for $x$ satisfying $ f^{-1}(x)=\frac{\sinh(\beta)}{\E[W]}x$,  or equivalently,
	\be
	\label{eq-fixedpointx}
	x = f\left(\frac{\sinh(\beta)}{\E[W]}x\right) = \E\left[W\tanh\left(B+\frac{\sinh(\beta)}{\E[W]}W x\right)\right].
	\ee
For any solution $x^\star$ of~\eqref{eq-fixedpointx},
	\begin{align}
	F(x^\star) &:=  \frac{\sinh(\beta)}{2\E[W]} x^{\star 2}- I^{\sss(w)}(x^\star) +\log\cosh B+\log 2+\alpha(\beta) \nn\\
	& = - \frac{\sinh(\beta)}{2\E[W]} x^{\star 2}+\E[ \log \cosh  (B + \frac{\sinh(\beta)}{2\E[W]} W x^\star )] +\log 2+\alpha(\beta).\nn
	\end{align}

For $B>0$, $f(t)$ is an increasing, bounded and concave function for $t\geq0$ with $f(0)>0$, and hence there is a unique positive solution $x^+$ to~\eqref{eq-fixedpointx}. For any negative solution to~\eqref{eq-fixedpointx}, $x^-$ say,
	\be
	F(x^-) < F(-x^-)\leq F(x^+),\nn
	\ee
since $x^+$ is the unique positive local maximum. An analogous argument holds for $B<0$. Hence,
	\be
	\psian(\beta,B) = - \frac{\sinh(\beta)}{2\E[W]} x^{\star 2}+\E[ \log \cosh  (B + \frac{\sinh(\beta)}{2\E[W]} W x^\star )] +\log 2+\alpha(\beta),\nn
	\ee
where $x^\star$ is the unique solution to~\eqref{eq-fixedpointx} with the same sign as $B$. The value for $B=0$ follows from Lipschitz continuity.
This is equivalent to the formulation in~\eqref{eq-annpressure} by making a change of variables $z^\star = \sqrt{\frac{\sinh(\beta)}{\E[W]}} x^\star$.
\end{proof}

\section{LDP for the number of edges: proof of Theorem~\ref{thm-LDP-edges}}
\label{sec-LDP-edges}
So far we have considered large deviations of the total spin. We now  consider observables that depend only on the graph and investigate their large deviation properties w.r.t.\ the
annealed Ising measure. Such an analysis sheds light on what graph structures optimize the Ising Hamiltonian.

\subsection{Strategy of the proof}
In this section, we investigate the large deviation properties for the number of edges $|E_{\sN}| = \sum_{i<j} I_{ij} $ under the annealed Ising model on the generalized random graph, where we recall that $(I_{ij})_{1\leq i<j\leq \sN}$ denote the independent Bernoulli indicators of the event that the edge $ij$ is present in the graph, which occurs with probability $p_{ij}$ in \eqref{eq-prob-edge}. We aim to apply the G\"artner-Ellis theorem, for which we need to compute the generating function of $|E_{\sN}| $ w.r.t.\ the annealed measure $\muan_{\sN}$ given by
	\be\label{gf-edges}
	\expec_{\muan_n}\Big[\e^{t |E_{\sN}|}\Big]
	= \frac{\expec\left[\sum_{\sigma} \e^{\sum_{i<j} I_{ij} (t+\beta \sigma_i\sigma_j) + B \sum_{i\in [\sN]} \sigma_i} \right]}{\expec	
	\left[\sum_{\sigma} \e^{\sum_{i<j} I_{ij} (\beta \sigma_i\sigma_j) + B \sum_{i\in [\sN]} \sigma_i} \right]}.
	\ee
For later purposes, we will generalize the above computation and, introducing the variables $t_{ij}$,  instead compute the generating function of the Bernoulli indicators $(I_{ij})_{ij}$ defined for $\bft=(t_{ij})_{ij}\in {\mathbb R}^{\sN(\sN-1)/2}$
\be
\label{m-gen}
	R_{\beta, B,n}({\mathbf t}) := \expec_{\muan_n}\Big[\e^{\sum_{1\leq i<j\leq \sN} t_{ij} I_{ij}} \Big]
	=\frac{\expec\left[\sum_{\sigma} \e^{\sum_{i<j} I_{ij} (t_{ij}+\beta \sigma_i\sigma_j) + B \sum_{i\in [\sN]} \sigma_i} \right]}{\expec	
	\left[\sum_{\sigma} \e^{\sum_{i<j} I_{ij} (\beta \sigma_i\sigma_j) + B \sum_{i\in [\sN]} \sigma_i} \right]}.
\ee
This can be carried out in a similar way as in \cite{GGvdHP}. Let us focus on the numerator in the previous display, which we denote by ${\cal A}_{\sN}({\mathbf t},\beta, B)$, so that 
\be
\label{mom-gen-function} 
R_{\beta, B,n}({\mathbf t}) = \frac{{\cal A}_{\sN}({\mathbf t},\beta, B)}{{\cal A}_{\sN}({\bf 0},\beta, B)}\;.
\ee 
We have 
	\begin{align}\nonumber
	{\cal A}_{\sN}({\mathbf t},\beta, B)\, &   = \,  \sum_{\sigma \in \Omega_{\sN}} \e^{B \sum_{i \in[\sN]} {\sigma_i}} 
	\expec\left[\e^{ \sum_{i<j}{I_{ij}(t_{ij}+\beta \sigma_i \sigma_j})} \right] \\ \nonumber
	&  = \,  \sum_{\sigma \in \Omega_{\sN}} \e^{B \sum_{i \in[\sN]} {\sigma_i}}
	\prod_{i<j} \expec\big[\e^{I_{ij}(t_{ij}+\beta\sigma_i \sigma_j)}\big]\\ 
	&  = \,  \sum_{\sigma \in \Omega_{\sN}}\e^{B \sum_{i \in[\sN]} {\sigma_i}} \prod_{i<j} \left(\e^{t_{ij}+\beta\sigma_i \sigma_j}p_{ij} + \left(1 - p_{ij}\right)\right).\nn
	\end{align}
We rewrite
	\begin{equation} 
	\e^{t_{ij}+\beta\sigma_i \sigma_j}p_{ij} + \left(1 - p_{ij}\right) \, = \, C_{ij}(t_{ij})\e^{\beta_{ij}(t_{ij})\sigma_i \sigma_j},\nn
	\end{equation} 
where $\beta_{ij}(t_{ij})$ and $C_{ij}(t_{ij})$ are  chosen such that
	\begin{equation} \nn
	\e^{t_{ij}-\beta}p_{ij} + \left(1 - p_{ij}\right) \, =\, C_{ij}(t_{ij})\e^{-\beta_{ij}(t_{ij})}
	\qquad
	\text{and}
	\qquad \e^{t_{ij}+\beta}p_{ij} + \left(1 - p_{ij}\right) 
	\, =\, C_{ij}(t_{ij})\e^{\beta_{ij}(t_{ij})}.
	\end{equation}
From the above system, we get
	\begin{equation}
	\label{defcbeta} 
	\beta_{ij}(t_{ij})\, = \, \frac{1}{2} \log \frac{\e^{t_{ij}+\beta}p_{ij} + \left(1 - p_{ij}\right)}{\e^{t_{ij}-\beta}p_{ij} + \left(1 - p_{ij}\right)},
	\qquad\quad
	C_{ij}(t_{ij}) \, = \, \frac{\e^{t_{ij}} p_{ij} \cosh(\beta) + \left(1 - p_{ij}\right)}{\cosh\left(\beta_{ij}(t_{ij})\right)}.
	\end{equation}
By symmetry $\beta_{ij}(t_{ij})=\beta_{ji}(t_{ij})$. Furthermore, defining $t_{ji} = t_{ij}$ for $1\le i < j \le n$ and 
\be
\label{def-betaii}
\beta_{ii}(t_{ii})=\frac{1}{2} \log \frac{\e^{t_{ii}+\beta}p_{ii} + \left(1 - p_{ii}\right)}{\e^{t_{ii}-\beta}p_{ii} + \left(1 - p_{ii}\right)} \qquad \text{with} \qquad p_{ii}=w_i^2/(\ell_{\sN}+w_i^2)
\ee
we  obtain
	\begin{align} 
	\label{A-form-general}
	{\cal A}_{\sN}({\mathbf{t}},\beta,B) 
	& = \, G_{\sN}({\mathbf{t}},\beta) \sum_{\sigma \in \Omega_{\sN}}
	\e^{B \sum_{i \in[\sN]} {\sigma_i}}\e^{\frac{1}{2}\sum_{i,j \in[\sN]}{\beta_{ij}(t_{ij})\sigma_i \sigma_j}},
	\end{align}
where
	\be
	\label{defG}
 	G_{\sN}({\mathbf{t}},\beta)=\prod_{1\leq i<j\leq \sN}C_{ij} (t_{ij})\prod_{1\leq i \leq \sN}e^{-\beta_{ii}(t_{ii})/2}.
	\ee
The equations \eqref{mom-gen-function} and \eqref{A-form-general} give us an explicit formula for the moment generating function of the edge variables $(I_{ij})_{ij}$ in the annealed $\GRGnw$ that will prove useful throughout the remainder of this paper. 

\subsection{Moment generating function for the number of edges}

Since the moment generating function for the number of edges in \eqref{gf-edges}
can be obtained from $R_{\beta, B,n}({\mathbf t})$ in \eqref{m-gen} by choosing $t_{ij}=t$ for all $1\leq i<j\leq \sN$,
we continue by studying the asymptotics of ${\cal A}_{\sN}({\mathbf{t}},\beta,B)$ for such
case,  which we denote as ${\cal A}_{\sN}(t,\beta,B)$. By a Taylor expansion of $x\mapsto \log(1+x)$,
	\begin{align} 
	\label{betaexpanded}
	\beta_{ij} (t)\, & = \, \frac{1}{2}\log\left(1 + p_{ij}(\e^{t+\beta} -1)\right) 
	- \frac{1}{2}\log\left(1 + p_{ij}(\e^{t-\beta} -1)\right)\nn \\
	& = \,\frac{1}{2}p_{ij}(\e^{t+\beta} -1) - \frac{1}{2}p_{ij}(\e^{t-\beta} -1) + O(p_{ij}^2(\e^{t+\beta} -1)^2 )+ O(p_{ij}^2(\e^{t-\beta} -1)^2 )\nn\\
	& = \, \e^t\sinh(\beta)p_{ij}+ O(p_{ij}^2(\e^{t\pm\beta} -1)^2),
	\end{align}
therefore
	\begin{align} \nn
	{\cal A}_{\sN}(t,\beta,B) 
	& = \, G_{\sN}(t,\beta)  \sum_{\sigma \in \Omega_{\sN}}
	\e^{B \sum_{i \in[\sN]} {\sigma_i}}\exp\Big\{\frac{1}{2} \e^{t}\sinh(\beta)\sum_{i,j \in[\sN]}{ p_{ij}\sigma_i \sigma_j}+O(\sum_{i,j\in [\sN]} p_{ij}^2(\e^{t \pm \beta} - 1)^2)\Big\}.
	\label{N1}
	\end{align}
For any fixed $t$, the term $ O( \sum_{i,j\in [\sN]} p_{ij}^2(\e^{t\pm\beta} -1)^2)$ can be controlled by using  $p_{ij}\le \frac{w_i w_j}{\ell_{\sN}}$ and Condition \ref{cond-weightreg}(c), which implies that
	\begin{equation}\nn
	\Big|\sum_{i,j \in[\sN]}p_{ij}^2\Big|\leq  \sum_{i,j \in[\sN]}\Big(\frac{w_i w_j}{\ell_{\sN}}\Big)^2
	\, = \, \left(\frac{\sum_{i\in[\sN]} w_i^2}{\ell_{\sN}}\right)^2 \, = \, o(\sN),
	\end{equation}
and then, 
	\begin{align} \nn
	{\cal A}_{\sN}(t,\beta,B) 
	& = \, G_{\sN}(t,\beta)  \e^{o(\sN)} \sum_{\sigma \in \Omega_{\sN}}
	\e^{B \sum_{i \in[\sN]} {\sigma_i}}\exp\Big\{\frac{1}{2} \e^{t}\sinh(\beta)\sum_{i,j \in[\sN]}{ p_{ij}\sigma_i \sigma_j}\Big\}.
	\end{align}
We can proceed further and write
	\begin{align}  
	{\cal A}_{\sN}(t,\beta,B) \, 
	&= \,G_{\sN}(t,\beta) \e^{o(\sN)} 
	\sum_{\sigma \in \Omega_{\sN}}\e^{B \sum_{i \in[\sN]} {\sigma_i}}\e^{\frac{1}{2}\e^t \sinh(\beta)
	\sum_{i,j \in[\sN]}\frac{w_i w_j}{\ell_{\sN}}\sigma_i \sigma_j}\nonumber\\
	&= \, G_{\sN}(t,\beta)   \e^{o(\sN)}  \sum_{\sigma \in \Omega_{\sN}}
	\e^{B \sum_{i \in[\sN]} {\sigma_i}}\e^{\frac{1}{2}\frac{\e^t\sinh(\beta)}{\ell_{\sN}}\left(\sum_{i\in[\sN]}w_i \sigma_i\right)^2},\label{N2}\nn
	\end{align}
where we have also used that, under Condition \ref{cond-weightreg}(c),
	\eqn{
	\sum_{i\in[\sN]} \frac{w_i^2}{\ell_{\sN}}=o(n),
	\qquad
	\sum_{i,j\in [\sN]} [\frac{w_iw_j}{\ell_{\sN}}-p_{ij}]=\sum_{i,j\in [\sN]} \frac{w_i^2w_j^2}{\ell_{\sN}(\ell_{\sN}+w_iw_j)}=o(n).\nn
	}
Recalling the definition of the partition function of the Inhomogeneous Curie-Weiss model
we can thus rewrite
	\be
	{\cal A}_{\sN}(t,\beta,B)=G_{\sN}(t,\beta)  \e^{o(\sN)}  \,  Z^{\sss \mathrm{ICW}}_{\sN}(\e^t \sinh(\beta),B)\, ,\nn
	\ee
while the denominator  in \eqref{gf-edges} equals
	\be
	{\cal A}_{\sN}(0,\beta,B)=G_{\sN}(0,\beta)  \e^{o(\sN)}  \,  Z^{\sss \mathrm{ICW}}_{\sN}( \sinh(\beta),B).\nn
	\ee
Therefore, the annealed cumulant generating function of the number of the edges is
	\eqan{
	\label{cum-gen-edges}
	\varphi_{\beta,B,\sN}(t)&:= \frac{1}{\sN} \log \expec_{\muan_n}\big[\e^{t |E_{\sN}|}  \big]\nn\\
	&=\frac{1}{\sN}\log  Z^{\sss \mathrm{ICW}}_{\sN}(\e^t \sinh(\beta),B) - \frac{1}{\sN} \log Z^{\sss \mathrm{ICW}}_{\sN}( \sinh(\beta),B)
	+ \frac{1}{\sN}\log \frac{\,G_{\sN}(t,\beta) }{G_{\sN}(0,\beta)} +o(1).
	}
In order to apply the G\"artner-Ellis theorem, we need to compute the limit of $\varphi_{\beta,B,\sN}(t)$. We can deal with the first and second term in the r.h.s.\ of \eqref{cum-gen-edges} by using the results obtained in \cite{GGvdHP}, in which the limit pressure of the Inhomogeneous Curie-Weiss model has been computed. 
Indeed, from  \cite{GGvdHP}	
	\begin{eqnarray} 
	\label{limannealedp2}
	\psi^{\sss \mathrm{ICW}}(\sinh(\beta),B) & := & \lim_{\sN\rightarrow\infty}\, 
	\frac{1}{\sN} \log Z^{\sss \mathrm{ICW}}_{\sN} \left( \sinh(\beta), B \right) \nn \\
	& = & 
	 \log 2 + \Big[\mathbb{E} \log \cosh \Big(\sqrt{\frac{\sinh(\beta)}{\mathbb{E}\left[W\right]}}W z^\star(\beta,B) + B\Big)\Big] - \frac{z^\star(\beta,B)^2}{2}
	\end{eqnarray}
with $z^\star(\beta,B)$ defined in Theorem \ref{thm-annealedpressure}.
Similarly
	\begin{eqnarray} 
	\label{limannealedp22}
	\psi^{\sss \mathrm{ICW}}(\e^t \sinh(\beta),B) & := & \lim_{\sN\rightarrow\infty}\, 
	\frac{1}{\sN} \log \left(Z^{\sss \mathrm{ICW}}_{\sN} \left( \e^t \sinh(\beta), B \right)\right) \nn \\
	& = & 
	 \log 2 + \Big[\mathbb{E} \log \cosh \Big(\sqrt{\frac{\e^t\sinh(\beta)}{\mathbb{E}\left[W\right]}}W z^\star(t,\beta,B) + B\Big)\Big] - \frac{z^\star(t,\beta,B)^2}{2}.
	\end{eqnarray}
with $z^\star(t,\beta,B)$ be the unique fixed point with the same sign as $B$ of the equation
	\be\label{fixpGRG2}
	z= 
	\mathbb{E}\Big[\tanh \Big(\sqrt{\frac{e^t\sinh(\beta)}{\mathbb{E}[W]}}W z + B\Big) 
	\sqrt{\frac{e^t\sinh(\beta)}{\mathbb{E}[W]}}\,W\Big].
	\ee	
Next, we have to deal with the third term in \eqref{cum-gen-edges} which, recalling \eqref{defG} and \eqref{defcbeta}, we write explicitly as
	\begin{align}
 	&\frac{1}{\sN}\log \frac{\,G_{\sN}(t,\beta) }{G_{\sN}(0,\beta)} =
	\frac{1}{\sN} \sum_{i<j} \log \left ( \frac{\e^t p_{ij}\cosh(\beta) +1-p_{ij}}{p_{ij}\cosh(\beta)+1-p_{ij} } \right )
 	+ \frac{1}{\sN} \sum_{i<j} \log \left ( \frac{\cosh(\beta_{ij}(0))}{\cosh(\beta_{ij} (t))} \right)
	+ \frac{1}{\sN} \sum_{i\in [n]} \left(\frac{\beta_{ii}(0) - \beta_{ii}(t)}{2}\right).\label{treeterms}
	\end{align}
We start by computing the first term in the r.h.s.\ of the previous display, then we show that the remaining terms give a vanishing contribution in the limit. We start by recalling that, on the basis of the Weight Regularity Condition \ref{cond-weightreg}(a) and (c), 
$\ell_{\sN}=\sN(\expec{[W]}+o(1))=O(\sN)$ and $\sum_{1\leq i<j\leq \sN}p_{ij}^2=O(\sN^{-1})$. Thus, we write the first term in \eqref{treeterms} as
	\begin{align}
	&\frac{1}{\sN} \sum_{i<j} \log \left ( \frac{\e^t p_{ij}\cosh(\beta) +1-p_{ij}}{p_{ij}\cosh(\beta)+1-p_{ij} } \right )
	=\frac{1}{\sN} \sum_{i<j} \log \left (1+ \frac{(\e^t-1) p_{ij}\cosh(\beta) }{1+p_{ij}(\cosh(\beta)-1)} \right )\nn\\
	&\qquad=\frac{1}{\sN} \sum_{i<j} \log \left ( 1+ (\e^t-1) p_{ij}\cosh(\beta) +O(p_{ij}^2)\right)= (\e^t-1) \cosh(\beta)  \frac{1}{\sN} \sum_{i<j}  p_{ij} +O(\sN^{-1}),\nn
	\label{firstterm}
	\end{align}
where the Taylor expansions of $1/(1+x)$ and $\log(1+x)$ have been used.  
Therefore,
	\be
	\label{intermediate}
	\lim_{\sN\to \infty } \frac{1}{\sN} \sum_{i<j} \log \left ( \frac{\e^t p_{ij}\cosh(\beta) +1-p_{ij}}{p_{ij}\cosh(\beta)+1-p_{ij} } \right ) = \frac 1 2 (\e^t-1) \cosh(\beta) \expec{[W]},
 	\ee
since $\frac{1}{n} \sum_{i<j} p_{ij} \to \frac{1}{2} \expec [W]$.
By \eqref{betaexpanded} and a Taylor expansion
	\be\nn
	 \log \left ( \frac{\cosh(\beta_{ij}(0))}{\cosh(\beta_{ij} (t))} \right )= O(p^2_{ij}).
	\ee
Then, by Weight Regularity Condition \ref{cond-weightreg}(c) and $p_{ij}\leq w_iw_j/\ell_{\sN}$,
	\be
	\label{doublesumbetaij}
	\frac{1}{\sN} \sum_{i<j} \log \left ( \frac{\cosh(\beta_{ij}(0))}{\cosh(\beta_{ij} (t))} \right ) = \frac{1}{n}\sum_{i<j}  O({p^2_{ij}}) = O(\sN^{-1}).
	\ee
Furthermore
\be
\label{single}
\frac{1}{\sN} \sum_{i\in [n]} \left(\frac{\beta_{ii}(0) - \beta_{ii}(t)}{2}\right) = \frac{1}{\sN} 
\sum_{i\in [n]} O(p_{ii}) =  O(\sN^{-1}),
\ee
where the definition of $\beta_{ii}(t)$ in \eqref{def-betaii} has  been used.
Combining \eqref{treeterms} with the estimates in  \eqref{intermediate}, \eqref{doublesumbetaij}, \eqref{single} leads to
	\be
	\label{using}
	\lim_{\sN\to \infty} \frac{1}{\sN}\log \frac{\,G_{\sN}(t,\beta) }{G_{\sN}(0,\beta)} = \frac 1 2 (\e^t-1) \cosh(\beta) \expec{[W]}.
	\ee
Considering the limit $n\to\infty$ in  \eqref{cum-gen-edges} and
using \eqref{using}, \eqref{limannealedp22} and  \eqref{limannealedp2} , finally gives us
	\begin{align}
	\varphi_{\beta, B}(t)&:=\lim_{\sN\to \infty} \varphi_{\beta, B, \sN}(t)
	=\mathbb{E}\Big[\log \cosh \Big(\sqrt{\frac{\e^t\sinh(\beta)}{\mathbb{E}\left[W\right]}}W z^\star(t,\beta,B) + B\Big)\Big]\label{ann_cgf_edges}\nn\\
	&\qquad- \mathbb{E}\Big[\log \cosh \Big(\sqrt{\frac{\sinh(\beta)}{\mathbb{E}\left[W\right]}}W z^\star(\beta,B) + B\Big)\Big] 
	+ \frac 1 2 \left ( z^\star(\beta,B)^2
	- z^\star(t,\beta,B)^2 \right )\nonumber \\
	&\qquad+ \frac 1 2 (\e^t-1) \cosh(\beta) \expec{[W]}.
	\end{align}
	
\subsection{Conclusion of the proof}	
With \eqref{ann_cgf_edges} in hand, we are finally ready to prove Theorem \ref{thm-LDP-edges}. Equation \eqref{ann_cgf_edges} identifies the infinite-volume limit of the cumulant generating function of the number of edges. By the G\"artner-Ellis theorem, this also identifies the rate function as its Legendre transform, provided that $t\mapsto \varphi_{\beta, B}(t)$ is differentiable. 
We compute the derivative of $t\mapsto \varphi_{\beta, B}(t)$ in \eqref{ann_cgf_edges} explicitly as
	\begin{align}\label{der_cgf_edges}
	\frac{d}{dt}\varphi_{\beta, B}(t)
	&=\mathbb{E}\Big[\tanh \Big(\sqrt{\frac{\e^t\sinh(\beta)}{\mathbb{E}\left[W\right]}}W z^\star(t,\beta,B) + B\Big)\sqrt{\frac{\e^t\sinh(\beta)}{\mathbb{E}\left[W\right]}}W\Big]\times \Big(\frac12 z^\star(t,\beta,B)+\frac{d}{dt}z^\star(t,\beta,B)\Big)\nonumber\\
	&\qquad
	- z^\star(t,\beta,B)\frac{d}{dt}z^\star(t,\beta,B) + \frac 1 2 \e^t\cosh(\beta) \expec{[W]}.
	\end{align}
Since $z^\star(t,\beta,B)$ is the fixed point for the ICW with $\tilde{\beta}=\e^t \sinh(\beta)$, which is an analytic function of $t$, it holds that $z^\star(t,\beta,B)$ is analytic in $t$ for $B\neq 0$ and hence $\frac{d}{dt}z^\star(t,\beta,B)$ exists. 
By \eqref{fixpGRG2}, the first expectation equals $z^\star(t,\beta,B)$, so that the two terms containing the factors $\frac{d}{dt}z^\star(t,\beta,B)$ cancel, and
	\begin{align}
	\frac{d}{dt}\varphi_{\beta, B}(t)
	&=\frac12 z^\star(t,\beta,B)^2
	+ \frac 1 2 \e^t\cosh(\beta) \expec{[W]}.
	\label{der_cgf_edges-fin}
	\end{align}
For $B=0$, $\frac{d}{dt}z^\star(t,\beta,B)$ might not exist in the critical point $\e^t \sinh(\beta) = \tilde{\beta}_c$. However, since the specific heat is finite, both the left and right derivative exist. Therefore, the above argument can be repeated for the left and right derivative, which both give the r.h.s.\ of~\eqref{der_cgf_edges-fin}, so that this equation is also true for $B=0$.


This shows that $t\mapsto \varphi_{\beta, B}(t)$ is 
differentiable and it concludes the proof of the main statement in Theorem \ref{thm-LDP-edges}
about the large deviations function for the number of edges in the annealed $\GRGnw$.
Formula \eqref{number-edges-AIM} for the expected number of edges is immediately
 obtained by evaluating \eqref{der_cgf_edges-fin} in $t=0$.
 
Finally, we note that by the LDP derived in the previous section, and the fact that the limiting rate function is strictly convex (this can be seen by noting that both terms on the r.h.s.\ of \eqref{der_cgf_edges-fin} are strictly increasing) the rate function has a unique minimum, which immediately shows that $|E_{\sN}|/\sN$ is concentrated around its mean, which has already been derived in \eqref{lim-number-edges2} as well as in \eqref{der_cgf_edges-fin}.
\qed

\medskip

\begin{remark}[Moment generating function of total degree for $\GRGnw$]
At  zero magnetic field $B=0$ and infinite temperature $\beta=0$, the annealed average of any function of the graph coincides with  the average with respect to the law of the graph.  Then, $\varphi_{0,0,\sN}(t)$ is the cumulant generating function of the number of edges of the $\GRGnw$. In this case, \eqref{ann_cgf_edges} gives
	\be
	\varphi_{0, 0}(t)=\frac 1 2 (\e^t-1) \expec{[W]},
	\ee
because $z^\star(0,0)=0$, which can also be seen by direct computation.
\end{remark}

\section{Degree distribution under annealed measure: proof of Theorem \ref{thm-degrees-AIM}}
\label{sec-degree-distr-AIM}
Given $(D_i)_{i\in [\sN]}$,  the degree sequence of the $\GRGnw$ we want to compute its moment generating function with respect to the annealed measure  $\muan_{\ch{n}}$, i.e.,
	\be
	g_{\beta,B,\sN}(\bfs)=\expec_{\muan_n}\Big[\e^{\sum_{i \in [\sN]} s_i D_i} \Big],\nn
	\ee
for $\bfs=(s_1,s_2,\ldots, s_{\sN}) \in {\mathbb R}^{\sN}$.
Since $D_i=\sum_{j \neq i} I_{ij}$,  where $(I_{ij})_{1\le i < j \le \sN}$ are the independent  Bernoulli variables  with parameters $p_{ij}$ representing the  indicator that the edge $ij$ exists and $I_{ji}=I_{ij}$,  
we can write $ \sum_{i \in [\sN]} s_i D_i= \sum_{i<j} I_{ij}(s_i+s_j)$, then recalling \eqref{m-gen}
we have
	\be	
	g_{\beta,B,\sN}(\bfs)= R_{\sN,\beta,B}(\bft(\bfs))
	\ee
where we define $t_{ij}(\bfs) := s_i +s_j$ for $1\le i < j \le n$. Furthermore, by \eqref{mom-gen-function},
	\be
	\label{deggenf}
	g_{\beta,B,\sN}(\bfs)
	=\frac{{\cal A}_{\sN}(\bft(\bfs),\beta, B)}{{\cal A}_{\sN}({\mathbf{0}},\beta, B)},
	\ee
where we recall that ${\cal A}_{\sN}(\bft,\beta, B)$ was defined in \eqref{A-form-general}. This is the starting point of our analysis. 
In Section \ref{sec-mom-degrees} we simplify the expression for the
moment generating function of the degrees by using the mapping of the
annealed Ising measure to the rank-1 inhomogeneous Curie-Weiss model.
We then investigate the degree of a fixed vertex under the annealed Ising model
in section \ref{sec-degree-fixed} and we consider finitely many degrees
in section \ref{deg-fin-many}.
\subsection{Moment generating function of the degrees}
\label{sec-mom-degrees}
We start by rewriting the generating function of the degree $g_{\beta,B,\sN}(\bfs)$. To this aim,
due to \eqref{deggenf}, we need to rewrite ${\cal A}_{\sN}({\mathbf{t}}(\bfs),\beta, B)$.
This can be done using again the Hubbard-Stratonovich identity.  
Introducing the standard Gaussian variable $Z$, we will show that we can extend the arguments in \cite{GGvdHP} to show that
	\eqn{
	\label{An-t-asy}
	{\cal A}_{\sN}({\mathbf{t}}(\bfs),\beta, B)=G_{\sN}({\bf t}(\bfs),\beta)\, 2^{\sN}\, \e^{-\kappa({\bf t})} \,\E_Z \left [ \exp\big\{\sum_{i=1}^{\sN} \log\cosh \left (a_{\sN}(\beta)  \e^{s_i} w_i Z+B \right )\big\}\right](1+o(1)),
	}
where $a_{\sN}(\beta) =\sqrt{\frac{\sinh(\beta)}{\ell_{\sN}}}$, $\kappa({\bf t})$ is some appropriate constant and $\E_Z$ denotes the expectation w.r.t.\ the Gaussian variable $Z$. 
This boils down to proving convergence of the moment generating function, which requires sharp asymptotics for ${\cal A}_{\sN}({\mathbf{t}}(\bfs),\beta, B)$, while in  \cite{GGvdHP}, it sufficed to study the logarithmic asymptotics.

To see \eqref{An-t-asy}, we define the $\bfs$-dependent rank-1 inhomogeneous Curie-Weiss model measure as
	\eqn{
	\label{ICW-def}\nn
	\muICWt(\sigma)=\frac{1}{\ZICW_{\sN,\bf{s}}(\sinh(\beta),B)}\e^{\frac12 \sum_{i,j} \sinh(\beta) \e^{s_i}\e^{s_j}\frac{w_iw_j}{\ell_{\sN}} \sigma_i\sigma_j +B\sum_{i\in[\sN]} \sigma_i},
	}
with $\ZICW_{\sN,\bf{s}}(\sinh(\beta),B)$ the appropriate partition function. Then, using \eqref{A-form-general}, we can follow \cite[(4.64)]{DomGiaGibHofPri16} to obtain that
	\be
	\label{dinosaur}
	{\cal A}_{\sN}({\mathbf{t}(\bfs)},\beta, B)=G_{\sN}({\mathbf{t}}(\bfs),\beta) \ZICW_{\sN,\bf{s}}(\sinh(\beta),B) \,\expec_{\muICWt}\Big[\e^{F_{\sN}({\mathbf{s}})}\Big],
	\ee
where now
	\eqn{\nn
	F_{\sN}({\bf s})=\frac12 \sum_{i,j} \big[\beta_{ij}(s_i+s_j)-\e^{s_i + s_j}\sinh(\beta)\frac{w_iw_j}{\ell_{\sN}}\big]\sigma_i\sigma_j,
	}
and we have adapted notation from $E_{\sN}$ in \cite[(4.64)]{DomGiaGibHofPri16} to $F_{\sN}$ here to avoid confusion with the total number of edges. 
To further simplify \eqref{dinosaur}, we observe that, following the proof of \cite[Lemma 4.1]{DomGiaGibHofPri16}, one has
	\eqn{
	 \ZICW_{\sN,\bf{s}}(\sinh(\beta),B) = 2^{\sN}\E_Z \left [ \exp\big\{\sum_{i=1}^{\sN} \log\cosh \left (a_{\sN}(\beta)  \e^{s_i} w_i Z+B \right )\big\}\right].\nn
	}
Further,  under Condition~\ref{cond-weightreg}(a)--(c), we can follow the proof of \cite[Lemma 4.7]{DomGiaGibHofPri16} to identify the limit of $\expec_{\muICWt}\Big[\e^{F_{\sN}({\bfs})}\Big]$, as formulated in the next lemma:
\begin{lemma}[Asymptotics correction term]
\label{lem-Fnt-asy}
Define $W_{\sN}({\bf{s}})=w_U\e^{s_U}$, where $U\in[\sN]$ is a uniform vertex. Assume that ${\bfs}$ is such that $W_{\sN}({\bf{s}})\convd W({\bf{s}})$ and $\expec[W_{\sN}({\bf{s}})^2]\to \expec[W({\bf{s}})^2]$. Then, there exists $\kappa(\bf{s})\geq 0$ such that
	\eqn{
	\lim_{\sN\rightarrow \infty} \expec_{\muICWt}\Big[\e^{F_{\sN}({\bfs})}\Big]=\e^{-\kappa(\bf{s})}.\nn
	}
In particular, $\kappa(\bf{s})=\kappa(\bf{0})$ when $\bfs=(s_1,\ldots,s_n)$ only contains finitely many non-zero coordinates.
\end{lemma}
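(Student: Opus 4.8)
The proof adapts that of \cite[Lemma~4.7]{DomGiaGibHofPri16}, the key tool being the Hubbard--Stratonovich representation of $\muICWt$ already invoked above for $\ZICW_{\sN,\bfs}$: under it the spins $(\sigma_i)_{i\in[\sN]}$ are conditionally independent given a single Gaussian variable $Z$, with $\expec_{\muICWt}[\sigma_i\mid Z]=\tanh(a_{\sN}(\beta)\,\e^{s_i}w_iZ+B)=:m_i(Z)$, while the law of $Z$ has density proportional to $\e^{-Z^2/2}\prod_{i\in[\sN]}\cosh(a_{\sN}(\beta)\e^{s_i}w_iZ+B)$. Writing $F_{\sN}(\bfs)=\tfrac12\sum_{i,j}h_{ij}\sigma_i\sigma_j$ with $h_{ij}:=\beta_{ij}(s_i+s_j)-\e^{s_i+s_j}\sinh(\beta)\tfrac{w_iw_j}{\ell_{\sN}}$, a second-order expansion of $\beta_{ij}$ as in \eqref{betaexpanded} together with $\tfrac{w_iw_j}{\ell_{\sN}}-p_{ij}=O(p_{ij}^2)$ gives the uniform estimate $h_{ij}=-\e^{2(s_i+s_j)}\sinh(\beta)\cosh(\beta)\,p_{ij}^2+O(p_{ij}^3)$, so in particular $|h_{ij}|\le C_{\bfs,\beta}\,w_i^2w_j^2\e^{2(s_i+s_j)}/\ell_{\sN}^2$. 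The assumptions $W_{\sN}(\bfs)\convd W(\bfs)$ and $\expec[W_{\sN}(\bfs)^2]\to\expec[W(\bfs)^2]$ yield $\sum_i w_i^2\e^{2s_i}=O(\sN)$ and, via uniform integrability of $W_{\sN}(\bfs)^2$, $\max_i w_i^2\e^{2s_i}=o(\sN)$, hence $\sum_i w_i^4\e^{4s_i}=o(\sN^2)$; combined with $\ell_{\sN}=\Theta(\sN)$ this gives $\sum_{i,j}|h_{ij}|=O(1)$, $\sum_{i,j}h_{ij}^2=o(1)$ and $\tfrac12\sum_i|h_{ii}|=o(1)$, so that $|F_{\sN}(\bfs)|\le \tfrac12\sum_{i,j}|h_{ij}|$ is bounded uniformly in $\sN$.

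Next I would reduce $F_{\sN}(\bfs)$ to a function of $Z$. The diagonal part $\tfrac12\sum_i h_{ii}$ is deterministic and tends to $0$. For the off-diagonal part, conditioning on $Z$ and writing $\sigma_i=m_i(Z)+\eta_i$ with $(\eta_i)$ centred, bounded and conditionally independent, the terms that are linear, resp.\ quadratic, in the $\eta_i$ have conditional variance at most $\sum_j(\sum_i|h_{ij}|)^2$, resp.\ $C\sum_{i,j}h_{ij}^2$, both $o(1)$; hence $F_{\sN}(\bfs)=\tfrac12\sum_{i\neq j}h_{ij}m_i(Z)m_j(Z)+\op(1)$ under $\muICWt$. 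Inserting the leading term of $h_{ij}$ and $p_{ij}=\tfrac{w_iw_j}{\ell_{\sN}}(1+o(1))$, the surviving term equals $-\tfrac{\sinh(\beta)\cosh(\beta)}{2\ell_{\sN}^{2}}\big(\sum_i w_i^2\e^{2s_i}m_i(Z)\big)^2(1+o(1))$ up to a correction of order $\sum_i w_i^4\e^{4s_i}/\ell_{\sN}^2=o(1)$.

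A Laplace analysis of the $Z$-density then shows $Z/\sqrt{\sN}\convp \tilde z(\bfs)$, where $\tilde z(\bfs)$ is the fixed point of the inhomogeneous Curie--Weiss model with limiting weight $W(\bfs)$, temperature parameter $\sinh(\beta)$ and field $B$ (for $B=0$ the limit is $\pm\tilde z(\bfs)$, which is irrelevant since $m_im_j$ is even in $Z$). Together with $W_{\sN}(\bfs)\convd W(\bfs)$ and the convergence of second moments this gives $\tfrac1{\sN}\sum_i w_i^2\e^{2s_i}m_i(Z)\to\mathcal M(\bfs):=\expec\big[W(\bfs)^2\tanh\big(\sqrt{\tfrac{\sinh(\beta)}{\expec[W]}}\,W(\bfs)\,\tilde z(\bfs)+B\big)\big]$, hence $F_{\sN}(\bfs)\convp-\kappa(\bfs)$ under $\muICWt$ with $\kappa(\bfs)=\tfrac{\sinh(\beta)\cosh(\beta)}{2\expec[W]^{2}}\mathcal M(\bfs)^2\ge 0$. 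Since $|F_{\sN}(\bfs)|$ is uniformly bounded, bounded convergence upgrades this to $\expec_{\muICWt}[\e^{F_{\sN}(\bfs)}]\to\e^{-\kappa(\bfs)}$. Finally, if $\bfs$ has only finitely many non-zero entries then $\prob(s_U\neq0)\to0$, so $W_{\sN}(\bfs)\convd W$ and $\expec[W_{\sN}(\bfs)^2]\to\expec[W^2]$; therefore $\tilde z(\bfs)=z^\star(\beta,B)$, $\mathcal M(\bfs)=\mathcal M({\bf 0})$ and $\kappa(\bfs)=\kappa({\bf 0})$.

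The main obstacle is not the algebra but the sharpening from logarithmic asymptotics (as in \cite{GGvdHP}) to genuine convergence of $\expec_{\muICWt}[\e^{F_{\sN}(\bfs)}]$: this requires establishing concentration of the Hubbard--Stratonovich field $Z$ and carrying the $\e^{s_i+s_j}$-weighted Taylor expansions through uniformly, notably at $B=0$ in the low-temperature phase where $Z$ has two symmetric modes; the $L^2$-control of the $\eta_i$-fluctuations in the second step is the technical engine that makes the reduction to a function of $Z$ possible.
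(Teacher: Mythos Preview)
Your argument is correct and follows essentially the same path as the paper's: both expand $h_{ij}$ to second order, rewrite $F_{\sN}(\bfs)$ as minus $\tfrac12\sinh(\beta)\cosh(\beta)$ times the square of a weighted linear spin statistic plus a deterministic $o(1)$, and then use the Hubbard--Stratonovich representation together with a Laplace argument to obtain convergence in probability under $\muICWt$. The only difference is bookkeeping: the paper proves concentration of the linear statistic $\sum_i \e^{c\,s_i}\sigma_i w_i^2/\ell_{\sN}$ by computing its moment generating function directly, whereas you first condition on the HS field $Z$, remove the spin fluctuations via a conditional $L^2$ bound, and then invoke $Z/\sqrt{\sN}\convp \tilde z(\bfs)$; your route has the minor advantage of making the uniform bound on $|F_{\sN}(\bfs)|$ and the $B=0$ parity explicit.
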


\noindent
{\it Proof of Lemma \ref{lem-Fnt-asy}.} We follow the proof of \cite[Lemma 4.7]{DomGiaGibHofPri16} to obtain that
	\eqn{
	F_{\sN}({\bf s})=-\frac{1}{2} \sinh(\beta)\cosh(\beta) \Big(\sum_{i\in [\sN]} \e^{s_{i}} \sigma_i\frac{w_i^2}{\ell_{\sN}}\Big)^2+o(1).\nn
	}
Due to the negativity of this term, Lemma \ref{lem-Fnt-asy} follows when we prove that, for some $bar{\kappa}({\bfs})$,
	\eqn{
	\label{convYn}
	\sum_{i\in [\sN]} \e^{s_{i}} \sigma_i\frac{w_i^2}{\ell_{\sN}}\convp \bar{\kappa}({\bfs}),
	}
and then Lemma \ref{lem-Fnt-asy} follows with $\kappa({\bfs})=\tfrac{1}{2}(\bar{\kappa}({\bfs}))^2 \sinh(\beta)\cosh(\beta)$. We proceed to prove \eqref{convYn}, which, in turn, is equivalent to proving that as $n\to\infty$
	\eqn{
	\expec_{\muICWt}\Big[\e^{r\sum_{i\in [\sN]} \e^{s_{i}} \sigma_i\frac{w_i^2}{\ell_{\sN}}}\Big]\rightarrow \e^{r\bar{\kappa}({\bf s})}.\nn
	}
Following 	\cite[(4.71)]{GGvdHP} we start by applying again the Hubbard-Stratonovich identity that
gives
	\begin{eqnarray}
	\expec_{\muICWt}\Big[\e^{r\sum_{i\in [\sN]} \e^{s_{i}} \sigma_i\frac{w_i^2}{\ell_{\sN}}}\Big] 
	& = &
	\frac{\sum_{\sigma\in\Omega_n}\mathbb{E}_Z\Big[\exp\Big\{\sum_i\Big(\frac{r}{\ell_n}\e^{s_{i}}w_i^2 + 
	\sqrt{\frac{\sinh(\beta)}{\ell_n}}\e^{s_{i}} w_i Z + B \Big)\sigma_i\Big\}\Big]}
	{\sum_{\sigma\in\Omega_n}\mathbb{E}_Z\Big[\exp\Big\{\sum_i\Big( 
	\sqrt{\frac{\sinh(\beta)}{\ell_n}}\e^{s_{i}} w_i Z + B \Big)\sigma_i\Big\}\Big]}.\nn 
	\end{eqnarray}
The sum over the spins can now be performed yielding
	\begin{eqnarray}
	\expec_{\muICWt}\Big[\e^{r\sum_{i\in [\sN]} \e^{s_{i}} \sigma_i\frac{w_i^2}{\ell_{\sN}}}\Big] 
	& = &
	\frac{\mathbb{E}_Z\Big[\exp\Big\{\sum_i \log\cosh\Big(\frac{r}{\ell_n}\e^{s_{i}}w_i^2 + 
	\sqrt{\frac{\sinh(\beta)}{\ell_n}}\e^{s_{i}} w_i Z + B \Big)\Big\}\Big]}
	{\mathbb{E}_Z\Big[\exp\Big\{\sum_i \log\cosh\Big( 
	\sqrt{\frac{\sinh(\beta)}{\ell_n}}\e^{s_{i}} w_i Z + B \Big)\Big\}\Big]}.\nn
	\end{eqnarray}
By introducing the random variables $W_{\sN}({\bf{s}})=w_U\e^{s_U}$, where $U\in[\sN]$ is a uniform vertex,
the previous expression can be rewritten as
	\begin{eqnarray}
	\expec_{\muICWt}\Big[\e^{r\sum_{i\in [\sN]} \e^{s_{i}} \sigma_i\frac{w_i^2}{\ell_{\sN}}}\Big] 
	& = &
	\frac
	{\int_{\mathbb{R}} \exp\Big\{-z^2/2 + n \mathbb{E}\Big[\log\cosh\Big(\frac{r}{\ell_n} W_n^2(\bfs/2)+ \sqrt{\frac{\sinh(\beta)}{\ell_n}}W_n(\bfs) z + B \Big)\Big]\Big\} dz}
	{\int_{\mathbb{R}}  \exp\Big\{-z^2/2 + n \mathbb{E}\Big[\log\cosh\Big(\sqrt{\frac{\sinh(\beta)}{\ell_n}}W_n(\bfs) z + B \Big)\Big]\Big\} dz}.\nn
	\end{eqnarray}
We do a change of variables replacing $\frac{z}{\sqrt{n}}$ by $z$, so that
	\begin{eqnarray}
	\expec_{\muICWt}\Big[\e^{r\sum_{i\in [\sN]} \e^{s_{i}} \sigma_i\frac{w_i^2}{\ell_{\sN}}}\Big] 
	& = &
	\frac
	{\int_{\mathbb{R}} \exp\Big\{-n z^2/2 + n \mathbb{E}\Big[\log\cosh\Big(\frac{r}{\ell_n} W_n^2(\bfs/2)+ \sqrt{\frac{\sinh(\beta)}{\mathbb{E}[W_n]}}W_n(\bfs) z + B \Big)\Big]\Big\} dz}
	{\int_{\mathbb{R}}  \exp\Big\{-n z^2/2 + n \mathbb{E}\Big[\log\cosh\Big(\sqrt{\frac{\sinh(\beta)}{\mathbb{E}[W_n]}}W_n(\bfs) z + B \Big)\Big]\Big\} dz}.\nn
	\end{eqnarray}
Assuming that $W_{\sN}({\bf{s}})\convd W({\bf{s}})$ for some limiting distribution, as well as 
$\expec[W_{\sN}({\bf{s}})^2]\to\expec[W({\bf{s}})^2]$ (which in fact is a condition on ${\bf{s}}$), 
an application of the  Laplace method yields
	\eqn{
	\expec_{\muICWt}\Big[\e^{r\sum_{i\in [\sN]} \e^{s_{i}} \sigma_i\frac{w_i^2}{\ell_{\sN}}}\Big]
	= \exp\left[{r\expec\Big[\tanh\Big(\sqrt{\frac{\sinh(\beta)}{\expec[W]}}W({\bf{s}})z^\star(\bfs, \beta,B) + B\Big)\frac{W(\frac{{\bf{s}}}{2})^2}{\expec[W]}\Big]}\right](1+o(1))\nn
	}
where $z^\star(\bfs, \beta,B)$ is the solution with the same sign as $B$ of
	$$
	z= 
	\mathbb{E}\Big[\tanh \Big(\sqrt{\frac{\sinh(\beta)}{\mathbb{E}[W]}}W(\bfs) z + B\Big) 
	\sqrt{\frac{\sinh(\beta)}{\mathbb{E}[W]}}\,W(\bfs)\Big].
	$$
All in all, the previous computation shows that \eqref{convYn} holds with
 	\eqn{\nn
	\bar{\kappa}({\bf s})=\expec\Big[\tanh\Big(\sqrt{\frac{\sinh(\beta)}{\expec[W]}}W({\bf{s}})z^\star(\bfs, \beta,B)\Big)\frac{W(\frac{\bf{s}}{2})^2}{\expec[W]}\Big].
	}
When ${\bf{s}}$ only has a finite number of non-zero coordinates, it holds that $W_{\sN}({\bf{s}})\convd W$ and $\expec[W_{\sN}({\bf{s}})^2]\to\expec[W^2]$, so that $\bar{\kappa}({\bf s})=\bar{\kappa}({\bf 0})$, as required.
\qed




\medskip
Armed with \eqref{An-t-asy}, we recall \eqref{deggenf} and thus conclude that
the moment generating function of the degrees is given by
	\be
	\label{gendHS}
	g_{\beta,B,\sN}(\bfs)= (1+o(1))\e^{\kappa(\bf{0})-\kappa(\bf{s})}
	\frac{G_{\sN}(\bft(\bfs),\beta)\E_Z \left [ \exp \sum_{i=1}^{\sN} \log\cosh \left (a_{\sN}(\beta) \e^{s_i} w_i Z+B \right ) \right ] } 
	{G_{\sN}({\bf 0},\beta)\E_Z \left [ \exp \sum_{i=1}^{\sN} \log\cosh \left (a_{\sN}(\beta)  w_i Z+B \right ) \right ]  },
	\ee
with
	\be
	a_{\sN}(\beta) =\sqrt{\frac{\sinh(\beta)}{\ell_{\sN}}}=O\left (\sN^{-\frac 1 2 }\right).\nn
	\ee

\subsection{Degree of a fixed vertex: proof of Theorem \ref{thm-degrees-AIM}}
\label{sec-degree-fixed}

We want to study the distribution of the degree of a fixed vertex. With no loss of generality we can fix, for instance, vertex $i=1$. Thus, we choose $\bfs=\bfs_1$ with $\bfs_1=(s,0,\ldots,0)$, and write
	\be
 	\exp \left [ \sum_{i=1}^{\sN} \log\cosh \left (a_{\sN}(\beta) \e^{s_i} w_i Z+B \right ) \right ]  =\frac{\cosh(a_{\sN}(\beta) \e^s w_1 Z+B)}
	 {\cosh(a_{\sN}(\beta)  w_1 Z+B)}  \exp \left [ \sum_{i=1}^{\sN} \log\cosh \left (a_{\sN}(\beta) w_i Z+B \right ) \right ].\nn
	\ee
Defining 
	\be\label{defh}
	h_{\sN}(Z;\beta,B)
	:= \exp \left \{ \sum_{i=1}^{\sN} \log\cosh \left (a_{\sN}(\beta) w_i Z+B \right ) \right \} 
	= \exp \left \{  \sN \E_{W_n} \left [ \log\cosh \left (a_{\sN}(\beta) W_{\sN} Z+B \right ) \right ] \right \},
	\ee
where $\E_{W_n}$ is the average w.r.t.\ $W_{\sN}=w_U$ being $U$ an uniformly chosen vertex in $[\sN]$, we can introduce the probability measure on $\R$ by
	\be
	\gamma_{\beta,B,\sN}(\cdot):=  \frac{ \E_Z [\;  \cdot  \;  h_{\sN}(Z;\beta,B) ] }{  \E_Z [ h_{\sN}(Z;\beta,B) ]   },\nn
	\ee
and write \eqref{gendHS}  as
	\be\label{genf1}
	g_{\beta,B,\sN}(\bfs_1)=(1+o(1))
	\frac  { G_{\sN}(\bft(\bfs_1),\beta)}{G_{\sN}({\bf 0},\beta)}
	\mathbb{E}_{\gamma_{\beta,B,\sN}}\left (  \frac{\cosh \left (a_{\sN}(\beta) \e^s w_1 Z+B \right )} 
	{\cosh  \left (a_{\sN}(\beta)  w_1 Z+B \right )}   \right),
	\ee
since, by Lemma \ref{lem-Fnt-asy}, $\kappa(\bf{t})=\kappa(\bf{0})$.\\
Now, under the measure $\gamma_{\beta,B,\sN}$, $Z/\sqrt{\sN}\convp z^\star(\beta,B)$, which can be seen by performing a Laplace method on the integral
	\eqn{
	\E_Z [\;  \cdot  \;  h_{\sN}(Z;\beta,B) ] =\int_{-\infty}^{+\infty} \;  \cdot  \; \exp \left [ \sum_{i=1}^{\sN} \log\cosh \left (a_{\sN}(\beta) w_i Z+B \right ) \right ] \e^{-z^2/2}\frac{dz}{\sqrt{2\pi}}.\nn
	} 
In fact, that is precisely the interpretation that $z^\star(\beta,B)$ in Theorem \ref{thm-annealedpressure} has. As a result, 
	\be
	\mathbb{E}_{\gamma_{\beta,B,\sN}}\left (  \frac{\cosh \left (a_{\sN}(\beta) \e^s w_1 Z+B \right )} 
	{\cosh \left (a_{\sN}(\beta)  w_1 Z+B \right )}   \right)\rightarrow \frac{\cosh\Big(z^\star(\beta,B) \e^s w_1\sqrt{\frac{\sinh(\beta)}{\expec[W]}} +B \Big)} 
	{\cosh\Big(z^\star(\beta,B) w_1\sqrt{\frac{\sinh(\beta)}{\expec[W]}} +B \Big)}.\nn
	\ee

Thus,
	\be
	\label{hi}
	\expec_{\muan_n}\big[\e^{s D_1}\big]= (1+o(1)) \frac{G_{\sN}(\bft(\bfs_1),\beta)}{G_{\sN}({\bf 0},\beta)} 
	\frac{\cosh \Big (z^\star(\beta,B)\e^s w_1 \sqrt{\frac{\sinh(\beta)}{\expec[W]}} +B \Big)} 
	{\cosh \Big (z^\star(\beta,B)w_1\sqrt{\frac{\sinh(\beta)}{\expec[W]}} +B \Big)}
	\ee
and we are left with the problem of studying the limit of $G_{\sN}(\bft(\bfs_1),\beta)/G_{\sN}({\bf 0},\beta)$. We have
	\be\label{ratioGGexplicit}
 	\frac  { G_{\sN}(\bft(\bfs_1),\beta) }{G_{\sN}({\bf 0},\beta)\, } = \frac{\prod_{j>1} C_{1j}(s)\, 
	\e^{-\beta_{11}(2s)/2}\, \cdot \, \prod_{1<i<j} C_{ij}(0)\, \prod_{i>1}  \e^{-\beta_{ii}(0)/2}     }
	{\prod_{j>1} C_{1j}(0)\, \e^{-\beta_{11}(0)/2}\, \cdot \, \prod_{1<i<j} C_{ij}(0)\, \prod_{i>1}  \e^{-\beta_{ii}(0)/2} } = \prod_{j>1} \left (\frac{C_{1j}(s)}{C_{1j}(0)} \right)
	\cdot \frac{ \e^{-\beta_{11}(2s)/2}}{\e^{-\beta_{11}(0)/2}},
	\ee
where \eqref{defG} has been used.  From the definition of $C_{ij}(s)$'s, we get
 	\be
	\label{prodCC}
  	\prod_{j>1} \left (\frac{C_{1j}(s)}{C_{1j}(0)} \right) = \prod_{j>1} \frac{\e^{s} \cosh(\beta) p_{1j}+1-p_{1j} }{ \cosh(\beta) p_{1j}+1-p_{1j}}
	\cdot \prod_{j>1} \frac{\cosh(\beta_{1j}(0) )}{\cosh (\beta_{1j}(s))}.
 	\ee
Putting  $p_{ij}=w_i w_j/(\ell_{\sN}+w_i w_j)$, the first term in the l.h.s.\ is rewritten as
 	\be\label{symphfantastique}
  	\prod_{j>1} \frac{\e^s \cosh(\beta) p_{1j}+1-p_{1j} }{ \cosh(\beta) p_{1j}+1-p_{1j}}= \prod_{j>1} \frac{\ell_{\sN} + \e^s \cosh(\beta)w_1 w_j }{\ell_{\sN} + \cosh(\beta) w_1 w_j}  = \e^{\cosh(\beta) w_1 (\e^s-1)}(1+o(1)) \nn
 	\ee
as $\sN\to \infty$. Next, we consider the second factor in the r.h.s.\ of \eqref{prodCC}.  Arguing as in the previous section  for  equation \eqref{doublesumbetaij}, 
 	\be
	\label{berlioz}
  	\sum_{1<j} \log \left ( \frac{\cosh(\beta_{ij}(0))}{\cosh(\beta_{ij} (s))} \right ) = \sum_{1<j}  O(p_{1j}^2)\leq w_1^2\sum_{1<j}  \frac{w_j^2}{\ell_{\sN}^2}=o(1),
 	\ee
since $\max_{j\in[n]} w_j=o(\sN)$. Taking the exponential of the previous relation, we obtain
 	\be
 	\prod_{j>1} \frac{\cosh(\beta_{1j}(0) )}{\cosh (\beta_{1j}(s))} =1+o(1),\nn
 	\ee
as $\sN\to \infty$. Finally, since $\beta_{ij}(s)=o(1)$ as $\sN\to \infty$ (since $p_{ij}\to 0$ in the same limit),  the second factor in the r.h.s.\ of \eqref{ratioGGexplicit} is $1+o(1)$. This proves that
 	\be
  	\frac  { G_{\sN}(\bft(\bfs_1),\beta) }{G_{\sN}({\bf 0},\beta)\, } = \e^{\cosh(\beta) w_1 (\e^s-1)}(1+o(1)).\nn
 	\ee 
 and from  \eqref{hi}, we finally obtain
         \be
 	\expec_{\muan_n}\big[\e^{sD_1}\big]=  (1+o(1)) \e^{\cosh(\beta) w_1 (\e^s-1)}\frac{\cosh\Big(z^\star(\beta,B) \e^s w_1 \sqrt{\frac{\sinh(\beta)}{\expec[W]}} +B \Big)} 
	{\cosh\Big(z^\star(\beta,B) w_1 \sqrt{\frac{\sinh(\beta)}{\expec[W]}} +B \Big)},\nn
 	\ee
as required. \qed

\medskip

\subsection{Degree of a fixed number of vertices: proof of Theorem \ref{thm-degrees-m-AIM}}
\label{deg-fin-many}
We can generalize the previous computation by considering the degrees $(D_1,D_2,\ldots, D_m)$, with $m\in [\sN]$ fixed. The generating function of this random vector 
can be obtained by plugging  $\bfs = \bfs_m$ with ${\bfs}_m=(s_1,s_2,\ldots,s_m,0,\ldots,0)$
into  \eqref{gendHS}. 
By the same arguments of the previous section, we obtain
	\be\label{genf1m}
	g_{\beta,B,\sN}({\bfs}_m)=(1+o(1))
	\frac  { G_{\sN}({\bft}(\bfs_m),\beta)}{G_{\sN}({\bf 0},\beta)}
	\mathbb{E}_{\gamma_{\beta,B,\sN}}\left ( \prod_{i=1}^m  \frac{\cosh \left (a_{\sN}(\beta) \e^{s_i} w_i Z+B \right )} 
	{\cosh  \left (a_{\sN}(\beta)  w_i Z+B \right )}   \right),
	\ee
with 
	\be
	\mathbb{E}_{\gamma_{\beta,B,\sN}}\left ( \prod_{i=1}^m  \frac{\cosh \left (a_{\sN}(\beta) \e^{s_i} w_i Z+B \right )} 
	{\cosh  \left (a_{\sN}(\beta)  w_i Z+B \right )}   \right) \rightarrow \prod_{i=1}^m\frac{\cosh\Big(z^\star(\beta,B) \e^{s_i} w_i\sqrt{\frac{\sinh(\beta)}{\expec[W]}} +B \Big)} 
	{\cosh\Big(z^\star(\beta,B) w_i\sqrt{\frac{\sinh(\beta)}{\expec[W]}} +B \Big)}
	\ee
as $n\to \infty$.
Now we have to study  the limit of $ G_{\sN}({\bft}(\bfs_m),\beta)/ G_{\sN}({\bf 0}_m,\beta)$. From the definition of $G_{\sN}({\mathbf{t}},\beta)$  given in  \eqref{defG} and recalling that $t_{ij}(\bfs)=s_i+s_j$, 
	 \be\label{GoverG3}
	\frac  { G_{\sN}({\bft}(\bfs_m),\beta) }{G_{\sN}({\bf 0},\beta)} = \prod_{ 1\le i < j \le m} \left ( \frac{C_{ij}(s_i+s_j)} {C_{ij}(0)} \right )\cdot \prod_{ 1\le i  \le m \atop j >m } \left (\frac{C_{ij}	
	(s_{i})}{ C_{ij}(0) } \right) \cdot \prod_{i=1}^m \left ( \frac{\e^{-\beta_{ii}(2s_i)/2}}{\e^{-\beta_{ii}(0)/2}} \right).
	\ee
We analyze the three factors separately:\\ \\
$\bullet$ {\em First  and third factors of \eqref{GoverG3}}. By the definition of $C_{ij} (t_{ij})$, 
	\be\label{CoverCfactor1}
	\prod_{ 1\le i < j \le m} \left ( \frac{C_{ij}(s_i+s_j)} {C_{ij}(0)} \right ) 
	= \prod_{ 1\le i < j \le m} \frac{\e^{s_i} \e^{s_j} \cosh(\beta) p_{ij}+1-p_{ij} }{ \cosh(\beta) p_{ij}+1-p_{ij}}\cdot \prod_{ 1\le i < j \le m} \frac{\cosh(\beta_{ij}(0) )}{\cosh (\beta_{ij}(s_i+s_j))},
	\ee
where, by definition of $p_{ij}$,
	\be
 	\prod_{ 1\le i < j \le m} \frac{\e^{s_i} \e^{s_j} \cosh(\beta) p_{ij}+1-p_{ij} }{ \cosh(\beta) p_{ij}+1-p_{ij}}
	= \prod_{1\le i < j \le m} \frac{\ell_{\sN} + \e^{s_i} \e^{s_j} \cosh(\beta)w_i w_j }{\ell_{\sN} + \cosh(\beta) w_i w_j}.\nn
	\ee
We show that this factor is $1+o(1)$. Indeed, following \cite{vdH}, we expand \col{$\log (1+x)$} obtaining:
	\begin{eqnarray}\label{fidelio}
	\log \prod_{1\le i < j \le m} \frac{\ell_{\sN} + \e^{s_i} \e^{s_j} \cosh(\beta)w_i w_j }{\ell_{\sN} + \cosh(\beta) w_i w_j} 
	& = & \frac{\cosh(\beta)}{\ell_{\sN}}  \sum_{1\le i < j \le m} w_i w_j (\e^{s_i} \e^{s_j} -1) \nn \\
	&&\quad\frac{\cosh(\beta)}{\ell_{\sN}^2} \, O(\sum_{1\le i < j \le m} w^2_i w^2_j ) \nn \\
	& = &O(\sN^{-1}) \nn,
	\end{eqnarray}
since $\ell_{\sN}=O(\sN)$ and $m$ is fixed. The second term in the r.h.s.\ of \eqref{CoverCfactor1} and the third factor of  \eqref{GoverG3} converge to 1.
Thus we have shown that  that the first  and third factors of \eqref{GoverG3}  are $1+o(1)$.\\ \\
$\bullet$ {\em Second factor of \eqref{GoverG3}}.  For any fixed $1\le i \le m$,
	\begin{align}
	\label{leonore}
 	\prod_{ j >m } \left (\frac{C_{ij}(s_{i})}{ C_{ij}(0) } \right) = \prod_{j > m} \frac{\ell_{\sN} + \e^{s_i} \cosh(\beta)w_i w_j }{\ell_{\sN} + \cosh(\beta) w_i w_j} 
	\cdot \prod_{j > m} \frac{\cosh(\beta_{ij}(0) )}{\cosh (\beta_{ij}(s_i))}.\nn
	\end{align}

The second factor in the r.h.s.\ of the previous display can be treated as in \eqref{berlioz}, showing that it is $1+o(1)$, while the first factor is close to the generating function of $D_i$ in a GRG  with vertex set $\{i, m+1,\ldots,  \sN\}$ and
weight of vertex $i$ given by $\cosh(\beta) w_i$. We can deal with this term as we have already done, that is,
	\be
	\log  \prod_{j > m} \frac{\ell_{\sN} + \e^{s_i} \cosh(\beta)w_i w_j }{\ell_{\sN} + \cosh(\beta) w_i w_j} =\cosh(\beta) w_i  (\e^{s_i}-1)  \frac{1}{\ell_{\sN}}  \sum_{j > m} w_j + \frac{\cosh(\beta)}	{\ell^2_{\sN}}\,  O( \sum_{j>m} w_j^2).\nn
	\ee
Since $m$ is fixed $ \frac{1}{\ell_{\sN}}  \sum_{j > m} w_j =1+o(1)$, and $\frac{1}{\ell^2_{\sN}}\,  O( \sum_{j>m} w_j^2)=o(1)$, for sufficiently fast decay of $w_i$'s.
Then,
	\be
	 \prod_{j > m} \frac{\ell_{\sN} + \e^{s_i} \cosh(\beta)w_i w_j }{\ell_{\sN} + \cosh(\beta) w_i w_j} = \e^{ \cosh(\beta) w_i  (\e^{s_i}-1)  }(1+o(1)),\nn
	\ee
and the second factor in \eqref{GoverG3} is 
$
\prod_{i=1}^m  \e^{ \cosh(\beta) w_i  (\e^{s_i}-1)  }(1+o(1)).\nn
$
Thus we conclude that
	\be
	\frac  { G_{\sN}({\bft}(\bfs_m),\beta) }{G_{\sN}({\bf 0},\beta)} = \prod_{i=1}^m  \e^{ \cosh(\beta) w_i  (\e^{s_i}-1)  }(1+o(1)).\nn
	\ee
Going back to \eqref{genf1m}, we finally obtain that
	$$
	\expec_{\muan_n}\Big[\e^{\sum_{i=1}^m s_i D_i} \Big] = \prod_{i=1}^m  \e^{ \cosh(\beta) w_i  (\e^{s_i}-1)  }
	\prod_{i=1}^m\frac{\cosh\Big(z^\star(\beta,B) \e^{s_i} w_i\sqrt{\frac{\sinh(\beta)}{\expec[W]}} +B \Big)} 
	{\cosh\Big(z^\star(\beta,B) w_i\sqrt{\frac{\sinh(\beta)}{\expec[W]}} +B \Big)} (1+o(1)),
	$$
as required. \qed

\appendix
\section{Appendix: \col{LDP for the total spin using combinatorial arguments}}
\label{sect_como}
\col{In this appendix, we obtain the large deviation function of the total spin in the
rank-1 inhomogeneous Curie-Weiss model (and thus in the annealed Ising model)
by employing direct combinatorial arguments.
We will restrict to 
the {\em finite-type setting} in which, roughly, there is 
a finite set of values for $w_i$'s.} More precisely,  we define this setting as follows: 
\begin{cond}[Finite-type setting] The vertex weight sequences $\boldsymbol{w} = (w_i)_{i \in [\sN]}$ satisfy the following conditions:
\label{cond-weightfinitetype}
\begin{enumerate}[(a)]
\item There exists a $K\in\mathbb{N}$ and a set of positive numbers $ {\tt A}=\{a_1,a_2,\ldots, a_K\}$, with $a_1<a_2<\ldots < a_K$,  such that  $w_i\in {\tt A}$ for all $i\in[n]$;
\item Denoting by $\hat{\sN}_k (n)$  the number of weights $(w_i)_{i \in [\sN]}$ such that $w_i = a_k$, then  the following limits exist 
	\be
	\lim_{\sN\to\infty} \frac{\hat{\sN}_k (n)}{\sN} = p_k\, ,\quad  k=1,\ldots,K,\nn
	\ee
(obviously $p=(p_1,\ldots,p_K)$ is a probability vector). We define also $\hat{p}_k(n):=\frac{\hat{\sN}_k (n)}{\sN}$ and $e_k(n):=\hat{p}_k(n)-p_k $.
\end{enumerate}
Hereafter, for the sake of notation we drop $n$ from the notation of  $\hat{\sN}_k (n)$, $\hat{p}_k(n)$, $e_k(n)$.
\end{cond}
In this finite-type setting, the previous Condition \ref{cond-weightfinitetype}  is equivalent to Condition \ref{cond-weightreg} in which $W_n$ is the  uniformly chosen weight with
\be\nn
\expec [W_n]= \sum_{k=1}^K a_k \hat{p}_k\, ,\quad \expec [W^2_n]= \sum_{k=1}^K a^2_k \hat{p}_k\, ,
\ee
and $W$ is the limit weight assuming values $a_k$  with probability $p_k$, so that
	\be
	\expec [W] =\sum_{k=1}^K a_k p_k\, , \quad \expec [W^2] =\sum_{k=1}^K a^2_k p_k\, .
	\ee
Assuming Condition \ref{cond-weightfinitetype},  we consider the Hamiltonian \eqref{ICWham}
and  defining  
	\be
	m_{\sN}=\frac{1}{\sN} \sum_{i\in [\sN]} \sigma_{i},\qquad\quad m^{\sss(w)}_{\sN}=\frac{1}{\sN} \sum_{i\in [\sN]}  w_{i}\sigma_{i},
	\ee
we rewrite
	\be\label{hamICWmwmn}
	H^{{\sss {\mathrm{ICW}}}}_{\sN}(\sigma)=\tb \frac{\sN}{2\expec [W]} ({\mnw})^2 + \sN B\ m_{\sN}\, .
	\ee
In the theorem below, we write $\lfloor x \rfloor$ for the integer part of $x>0$.
\begin{theorem}[LDPs for the total spin in the finite-type ICW model] 
\label{thm-LDP-ICW}
In the inhomogeneous Curie-Weiss model defined by  \eqref{ICWham}, and assuming the  finite-type setting in Condition \ref{cond-weightfinitetype}, the total spin $S_n$   
satisfies that for $m\in (-1,1)$, with ${\cal A}= \left [ \frac{1}{2}(1+m)\, a_1  ,  \frac{1}{2}(1+m)\, a_K  \right ]$,
	\begin{align}\label{ldevcomb}
	\lim_{\sN\to\infty} \frac{1}{\sN} \log \mathbb{P}_{{\mu}^{\sss {\mathrm{ICW}}}_n}(S_{\sN}= \lfloor  m\, n \rfloor )= & 
	- \inf_{x\in {\cal A}} \left [  -\frac{\tb }{2} \expec [W] - \frac{2\tb}{\expec [W]} x^2 + 2 \tb x - 
	B\, m + {\tI_m}(x) +\psi^{\sss {\mathrm{ICW}}} (\tb, B) \right] ,
	\end{align}
where $\psi^{\sss {\mathrm{ICW}}} (\tb, B)$ is the pressure of  the model
	and where
	\begin{align}\label{ratefIcomb}
	\tI_m(x)&=  \expec \left [  \frac{\e^{\lambda_1 W + \lambda_2}}{1+\e^{\lambda_1 W + \lambda_2} } \log \left ( \frac{\e^{\lambda_1 W + \lambda_2}}{1+\e^{\lambda_1 W + \lambda_2} } \right ) 
	+ \frac{1}{1+\e^{\lambda_1 W + \lambda_2} } \log \left ( \frac{1}{1+\e^{\lambda_1 W + \lambda_2} } \right )\right ],
	\end{align}		
with $\lambda_1=\lambda_1(x,m)$,  $\lambda_2=\lambda_2(x,m)$ defined implicitly by 
	\be
	\label{eqlambda12}
	\left \{
	\begin{array}{l}
	\expec \left [   \frac{\displaystyle \e^{\lambda_1 W +\lambda_2}}{\displaystyle 1+ \e^{\lambda_1 W +\lambda_2}}    \right ]  = \frac{\displaystyle 1+m}{\displaystyle 2},\\ \\
 	\expec \left [  W   \frac{\displaystyle \e^{\lambda_1 W +\lambda_2}}{ \displaystyle 1+ \e^{\lambda_1 W +\lambda_2}} \right ] = {\displaystyle x}.
	\end{array}
	\right.
	\ee
\end{theorem}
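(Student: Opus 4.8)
The plan is to bypass the G\"artner--Ellis machinery (which is obstructed by non-convexity at low temperature) and instead carry out a direct Laplace-type asymptotic, exploiting that in the finite-type setting a configuration only matters through its weight-resolved magnetisations. For $\sigma\in\Omega_{\sN}$ let $\hat N^+_k(\sigma)$ be the number of $i\in[\sN]$ with $w_i=a_k$ and $\sigma_i=+1$, and put $y_k=\hat N^+_k(\sigma)/\sN\in[0,\hat p_k]$. Then $m_{\sN}=\sum_{k=1}^K(2y_k-\hat p_k)$ and $\mwn=\sum_{k=1}^K a_k(2y_k-\hat p_k)$, so that the Hamiltonian~\eqref{hamICWmwmn} is a function of $\by=(y_1,\dots,y_K)$ alone, and the number of configurations with a prescribed $\by$ equals $\prod_{k=1}^K\binom{\hat N_k}{\sN y_k}$. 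Hence one writes
\be
\prob_{{\mu}^{\sss {\mathrm{ICW}}}_n}(S_{\sN}=\lfloor m\sN\rfloor)
=\frac{1}{Z^{\sss \mathrm{ICW}}_{\sN}(\tb,B)}\sum_{\by}\ \prod_{k=1}^K\binom{\hat N_k}{\sN y_k}\,\e^{\tb\frac{\sN}{2\expec[W]}(\mwn)^2+\sN B\,m_{\sN}},\nn
\ee
the sum running over the $O(\sN^{K-1})$ admissible $\by$ with $\sum_k(2\sN y_k-\hat N_k)=\lfloor m\sN\rfloor$.

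Next I would apply Stirling's formula, $\tfrac1{\sN}\log\binom{\hat N_k}{\sN y_k}=\hat p_k\,\mathcal H(y_k/\hat p_k)+o(1)$ with $\mathcal H(q)=-q\log q-(1-q)\log(1-q)$, uniformly on compact subsets of the open interval, and use $\hat p_k\to p_k$ together with $\lfloor m\sN\rfloor/\sN\to m$ to identify the exponential rate of each summand as $\Phi(\by)+o(1)$, where
\be
\Phi(\by)=\sum_{k=1}^K p_k\,\mathcal H(y_k/p_k)+\frac{\tb}{2\expec[W]}\Big(2\sum_{k=1}^K a_k y_k-\expec[W]\Big)^2+Bm.\nn
\ee
Since there are only polynomially many terms, sandwiching the sum between its largest term and (number of terms)$\times$(largest term) yields
\be
\lim_{\sN\to\infty}\frac{1}{\sN}\log\prob_{{\mu}^{\sss {\mathrm{ICW}}}_n}(S_{\sN}=\lfloor m\sN\rfloor)=\sup_{\by}\Phi(\by)-\psi^{\sss {\mathrm{ICW}}}(\tb,B),\nn
\ee
the supremum over $y_k\in[0,p_k]$ with $\sum_k(2y_k-p_k)=m$, and using $\tfrac1\sN\log Z^{\sss \mathrm{ICW}}_{\sN}(\tb,B)\to\psi^{\sss {\mathrm{ICW}}}(\tb,B)$.

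It then remains to recast $\sup_{\by}\Phi(\by)$ into the stated form. I would split the optimisation by the value $x:=\sum_{k}a_k y_k$, which for admissible $\by$ lies in the feasible interval contained in ${\cal A}=[\tfrac12(1+m)a_1,\tfrac12(1+m)a_K]$, since $a_1\sum_k y_k\le x\le a_K\sum_k y_k$ and $\sum_k y_k=\tfrac12(1+m)$. For fixed $x$, maximising $\sum_k p_k\mathcal H(y_k/p_k)$ under the two linear constraints $\sum_k y_k=\tfrac12(1+m)$ and $\sum_k a_k y_k=x$ is a strictly concave program; Lagrange multipliers give the unique maximiser $y^\star_k=p_k\,\e^{\lambda_1 a_k+\lambda_2}/(1+\e^{\lambda_1 a_k+\lambda_2})$ with $(\lambda_1,\lambda_2)$ fixed by~\eqref{eqlambda12}, and maximal value $-\tI_m(x)$ with $\tI_m$ as in~\eqref{ratefIcomb}; for $x$ outside the feasible interval there is no solution and $\tI_m(x)=+\infty$, so replacing that interval by ${\cal A}$ leaves the infimum unchanged. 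Thus $\sup_{\by}\Phi(\by)=Bm+\sup_{x\in{\cal A}}\big(-\tI_m(x)+\tfrac{\tb}{2\expec[W]}(2x-\expec[W])^2\big)$, and expanding $(2x-\expec[W])^2=4x^2-4x\expec[W]+\expec[W]^2$ and rearranging the resulting terms produces exactly~\eqref{ldevcomb}.

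The main obstacle is the uniformity in the second step: controlling the boundary strata $y_k\in\{0,p_k\}$ where the Stirling estimate degenerates (handled via strict concavity of the entropy and finiteness of $\Phi$, which force the maximisers into the interior), and checking that the lattice effects — the discretisation of $\by$, the replacement of $\hat p_k$ by $p_k$, and the parity constraint $S_{\sN}\equiv \sN\pmod 2$ (so that $\{S_{\sN}=\lfloor m\sN\rfloor\}$ is non-degenerate only along the appropriate sublattice of $\sN$) — do not affect the exponential rate. The Lagrange-duality computation and the concluding algebra are routine.
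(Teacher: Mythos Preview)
Your proposal is correct and follows essentially the same route as the paper: parametrise configurations by the type-resolved plus-spin frequencies $y_k$ (the paper's $q_k$), apply Stirling to the multinomial count, and optimise under the two linear constraints $\sum_k y_k=\tfrac12(1+m)$ and $\sum_k a_k y_k=x$ via Lagrange multipliers, which yields exactly the $(\lambda_1,\lambda_2)$ of~\eqref{eqlambda12} and the entropy $-\tI_m(x)$. The only organisational difference is that the paper first conditions on $m$ under the uniform spin measure, derives the conditional LDP for $r_n^{(w)}=\sum_k a_k q_k$, and then invokes Varadhan's lemma to absorb the Hamiltonian, whereas you work directly with the ICW measure and use the ``largest term among polynomially many'' argument; both lead to the same constrained optimisation and the same algebraic reduction to~\eqref{ldevcomb}.
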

\medskip
\begin{remark} 
\col{The expression for the large deviation rate function of the total spin in the Theorem \ref{thm-LDP-ICW} coincides with the one that is obtained from Theorem \ref{thm-ldp-weighted} by application of the contraction principle and the relation between the annealed Ising model and the inhomogeneous Curie-Weiss model. Indeed, recalling that the annealed measure $\muan_{\sN}$ at inverse temperature $\beta$ is close 
to the Boltzmann-Gibbs measure ${\mu}^{\sss {\mathrm{ICW}}}_n$ of the  inhomogeneous Curie-Weiss model at inverse temperature $\tilde{\beta}=\sinh(\beta)$ (in the sense of equation \eqref{approx})
and by using  $\psi^{\sss {\mathrm{ICW}}}(\tb,B) =  - \alpha(\beta)+ \psi^\an(\beta,B)$,
one finds that the large deviation function of the total spin in the inhomogeneous Curie-Weiss model obtained from \eqref{I-contra} reads
\be
\label{pippo}
I(m) = \inf_{x_2} \left[I(m,x_2) - \frac{\tb}{2\E[W]}x_2^{2}-  B m - \log(2) + \psi^{\sss {\mathrm{ICW}}}(\tb,B) \right]
\ee
To see that \eqref{pippo} is equal to the r.h.s of \eqref{ldevcomb} one employs the substitution
$x_2 = 2x - \mathbb{E}(W)$. In doing so clearly the energetic contribution are equal
since
$$
-\frac{\tb}{2\E[W]}x_2^{2}  = -\frac{\tb }{2} \expec [W] - \frac{2\tb}{\expec [W]} x^2 + 2 \tb x 
$$
It remains to prove that 
$$
I(m,x_2) - \log(2)  =  {\tI_m}(x)
$$ 
This can be shown by changing the spin variables $\sigma_i$ to the variables $y_i=\frac 1 2 (\sigma_i +1)$ and introducing
	\be
	\hat{S}_{\sN}=\sum_{i\in [\sN]} y_{i},\quad \hat{S}^{\sss(w)}_{\sN}=\sum_{i\in [\sN]}  w_{i}y_{i}.\nn
	\ee
Observe that 
	\be
	{S}_{\sN}=2\hat{S}_{\sN}-\sN,\quad {S}^{\sss(w)}_{\sN}=2 \hat{S}^{\sss(w)}_{\sN}-\sN\,\E[W_{\sN}],\nn
	\ee
so that we can write
	\be
	\E_{P_{\sN}} [\exp ( t_1 S_{\sN}+ t_2 S^{\sss(w)}_{\sN})] = \exp(-\sN(t_1+ t_2 \E[W_{\sN}]))\,  \E_{P_{\sN}} [\exp (2 t_1 \hat{S}_{\sN}+2 t_2 \hat{S}^{\sss(w)}_{\sN})].\nn
	\ee
Since
	\be
	\E_{P_{\sN}} [\exp ( 2t_1 \hat{S}_{\sN}+ 2t_2 \hat{S}^{\sss(w)}_{\sN})] =  \E_{P_{\sN}} [ \Pi_{i\in [\sN]} \exp (2 t_1 + 2 w_i t_2)y_i  ] 
	=  \frac{1}{2^{\sN}} \Pi_{i\in [\sN]}  (1+\e^{2(t_1+w_i t_2)}),\nn
	\ee
we obtain that the moment generating function of $(S_{\sN},{S}^{\sss(w)}_{\sN})$ w.r.t.\ the product measure \eqref{proddelta} can be expressed as
	\be
	{c}_{\sN}({\bf t})= -\log 2 - (t_1+ t_2 \E[W_{\sN}])+\E[\log (1+\e^{2(t_1+W_n t_2)}) ].\nn
	\ee
Thus, arguing as in the proof  of Theorem~\ref{thm-ldp-weighted}, we obtain that the limit of ${c}_{\sN}({\bf t})$ exists and equals
	\be
	{c}({\bf t})=  -\log 2 - (t_1+ t_2 \E[W])+\E[\log (1+\e^{2(t_1+W t_2)}) ].\nn
	\ee
By applying the G\"artner-Ellis theorem we get the expression for the rate function
	\be
	{I}(x_1,x_2)= \sup_{(t_1,t_2)} \left ( t_1 x_1 + t_2 x_2  +\log 2 + (t_1+ t_2 \E[W])-\E[ \log (1+\e^{2(t_1+W t_2)})]\right ).\nn
	\ee
The stationarity conditions read as
	\be
	\label{eqlambda12bis}
	\left \{
	\begin{array}{l}
	\expec \left [   \frac{\displaystyle \e^{2(t_1 + W  t_2)}}{\displaystyle 1+ \e^{2(t_1 +W t_2)}}    \right ]  = \frac{\displaystyle 1+x_1}{\displaystyle 2},\\ \\
 	\expec \left [  W   \frac{\displaystyle \e^{2(t_1+ W t_2)}}{ \displaystyle 1+ \e^{2(t_1 + W t_2)}} \right ] = {\frac{\displaystyle \E[W] +x_2}{\displaystyle 2}}.
	\end{array}
	\right.
	\ee
Since $x_1$ represents the magnetization $m$ and $x_2$ represents  the weighted magnetization $m^{(w)}$, and using again the substitution $x = (x_2 + \mathbb{E}(W))/2$
we obtain that \eqref{eqlambda12bis} is identical to \eqref{eqlambda12} provided that  $\lambda_1$ is identified with  $2 t_2$ and $\lambda_2$  with  $2 t_1$.
}

\end{remark}


\begin{proof}[Proof of Theorem~\ref{thm-LDP-ICW}] Given a configuration  $\sigma$, we denote  by $n_+$ and $n_-$ the number of its positive resp. negative spins. 
We can group the spins in $\sigma$ according to either  $m_n$ or  $n_+$,  since these quantities are  related by  $n_+=\sN (1+m_{\sN})/2$. We can also identify each configuration of
spins $\sigma$ with the set $I_+\subset [\sN]$ of vertices in which $\sigma_i=1$, obviously the cardinality of this set is $|I_+|=n_+$. 
Given any $I_+ \subset [\sN]$, we  define
	\be
	q_k=\frac{1}{\sN}\# \{ i\in I_+ \mid w_i = a_k \}, \qquad  k=1,\ldots, K
	\ee
to be the frequency of type $a_k$ in $I_+$. Then
	\be
	\label{sumq}
	|I_+|=n_+= \sN \sum_{k=1}^K q_k
	\ee
and
	\be
	\label{adotq}
	r_n^{(w)}:=\frac{1}{\sN} \sum_{i \in I_+} w_i  = \sum_{k=1}^K a_k q_k \equiv  a \cdot q.
	\ee
Moreover, given $n$, we define the set
	\be
	{\cal Q}_n:=\left\{ \left (\frac{\ell_1}{n},\ldots,\frac{\ell_K}{n} \right) \mid  \ell_k\in  \bbN,\, \ell_k \le  \hat{\sN}_k (n),\,  k=1,\ldots, K \right\}\nn
	\ee
of the possible frequency vectors  $q=(q_1,q_2,\ldots, q_K)$.
\paragraph{\bf Exponential estimate for the conditional probability of $q=(q_1,q_2,\ldots, q_K)$.}
We start by counting the number of sets $I_+$ with $\lfloor \sN(1+m)/2\rfloor$ elements  and a given 
$q=(q_1,q_2,\ldots, q_K) \in {\cal Q}_n$ that satisfies the condition  $\sN \sum_{k=1}^K q_k= \lfloor \sN(\frac {1+m}{2})\rfloor =|I_+|=n_+$.  
For any $k=1,\ldots, K$, in $[n]$ there are $\sN \hat{p}_k$ sites corresponding to $a_k$, and we choose $\sN q_k$ out of them to form $I_+$. On the other hand, there are 
${\sN \choose n_+}\equiv {\sN\choose \lfloor \sN( {1+m})/{2} \rfloor }$ possible ways 
to form a set $I_+$ with $n_+$ elements. Thus, the conditional  distribution  of   $q=(q_1,q_2,\ldots, q_K)$ given $m$ is multi-hypergeometric, i.e.,
	\be\label{pob_q_cond_m}
	\prob_n(q_1,q_2,\ldots, q_K \mid m)=\frac{\displaystyle  \prod_{k=1}^K {\sN \hat{p}_k\choose \sN q_k}}{\displaystyle {\sN\choose \lfloor \sN(\frac {1+m}{2}) \rfloor  }}\, \indic{ q\in {\cal D}_n, \sum_{k=1}^K q_k= \frac 1 n \left \lfloor n \left ( \frac {1+m}{2} \right ) \right \rfloor}.
	\ee
The asymptotic behavior of this probability as $n\to \infty$, can be obtained by using  the Stirling's approximation $n! = \e^{-n}n^n \sqrt{2\pi n}(1+o(1)) $ to estimate of the binomial coefficient as
	\be
	{\sN b \choose  \sN a} = \e^{\sN[ b \log b - a \log a - (b-a) \log (b-a)]}\cdot  \frac{\sqrt{b}(1+o(1))}{\sqrt{a}\sqrt{b-a} \sqrt{2\pi \sN}},\nn	
	\ee
where $0<a<b$. Then, generalizing  the previous formula to a set of variables $a_k<b_k,\, k=1,\ldots, K$, we obtain

%
	\be\label{asymptnum}
	\prod_{k=1}^K {\sN b_k\choose \sN a_k} =  {\mathscr C}_n (a_k,b_k)(1+o(1)),
	\ee 
where 
	\be\label{asymptnumC}
	{\mathscr C}_n ({ a},{ b}): =  c_1\,  (2\pi n)^{- K/2} \exp \left (  \sN \sum_{k=1}^K [ b_k \log b_k - a_k \log a_k -(b_k - a_k)\log (b_k - a_k)] \right )
	\ee 
with $c_1= \prod_{k=1}^K \sqrt{\frac{b_k}{a_k(b_k-a_k)}}$ and the function is defined on the set 
	$$
	\{ (a_1,\ldots, a_K,b_1,\ldots, b_K) \in \R^{2K} |\, 0<a_k<b_k,\, k=1, \ldots, K\}.
	$$
We  now compute  the asymptotics of the numerator in \eqref{pob_q_cond_m}. Recalling that $ \hat{p}_k=p_k+ e_k$ and Taylor expanding the sum in \eqref{asymptnumC} and  $c_1$ as a function of  $b_k$'s, we obtain
	\begin{align}
 	\prod_{k=1}^K {\sN(p_k+ e_k)\choose \sN q_k} =&\, {\mathscr C}_n ({ q},{ p})\, [1+\sum_{k=1}^K c^{\sss(1)}_k e_k  + \sum_{k=1}^K O(  e^2_k)]\\
 	&\times  \exp \left (n \sum_{k=1}^K c^{\sss(2)}_k e_k(1+O(e_k))\right)(1+o(1)),\nn
	\end{align}
for some constants $ c^{\sss(1)}_k$ and $  c^{\sss(2)}_k$. The  second factor in the r.h.s.\ comes from the substitution $p_k  \to p_k+e_k$ in the factor $c_1$  of \eqref{asymptnumC},  and the third form the sum in the same equation.
From Condition \ref{cond-weightfinitetype} we have that these terms are both $(1+o(1))$. 
Then, we conclude that the numerator in \eqref{pob_q_cond_m}  is
\be
 \prod_{k=1}^K {\sN \hat{p}_k\choose \sN q_k} = {\mathscr C}_n ({ q},{ p}) (1+o(1)).
\ee
We can deal with the denominator in \eqref{pob_q_cond_m} in a similar fashion, obtaining:
	\be
	\label{asymptbin}
	{\sN\choose  \lfloor \sN (\frac {1+m}{2}) \rfloor } =c_2\, s_n (2\pi n)^{-\frac 1 2} \exp \left (\sN \left [ - \frac{1+m}{2} \log ( \frac{1+m}{2}) - \frac{1-m}{2} \log ( \frac{1-m}{2}) \right ] \right )(1+o(1)),
	\ee
with $c_2= \frac{2}{\sqrt{1-m^2}}$ and $s_n=s_n(m)=\exp [ (   n\frac{1+ m}{2} -  \lfloor n\frac{1+ m}{2} \rfloor) \log (\frac{1-m^2}{4})]$.
By plugging this estimate  and  \eqref{asymptnum}  in \eqref{pob_q_cond_m}, we finally obtain that
	\be
	\label{asymptp}
	\prob_n(q_1,q_2,\ldots, q_K \mid m) = \frac{c_1}{c_2}\, s_n\, (2\pi n)^{\frac{1-K}{2}} \e^{\sN g(q,m)} (1+o(1)),
	\ee
with
	\be
	\label{defg}
	g(q,m)= \left \{
	\begin{array}{ll}
	h(q,m), & \mbox{if}\; \; q_k \le p_k,\;  \sum_k q_k = \frac 1 n \lfloor  n (\frac{1+m}{2}) \rfloor,\\ \\
	-\infty, & \mbox{otherwise},
	\end{array}
	\right.
	\ee
and
	\be
	\label{hsmall}
	 h(q,m)=  \frac{1+m}{2} \log ( \frac{1+m}{2}) + \frac{1-m}{2} \log ( \frac{1-m}{2}) +\sum_{k=1}^K [ p_k \log p_k - q_k \log q_k -(p_k - q_k)\log (p_k - q_k)] .
	\ee
\paragraph{\bf Exponential estimate for the conditional probability of $r_n^{(w)}$.}
Let use introduce
	\be\nn
	{\cal A}_n(m)= \left \{ q_1 a_1+\cdots+q_K a_K\, |\,  q \in {\cal Q}_n,\, \sum_k q_k = \frac 1 n \left  \lfloor  n \left (\frac{1+m}{2} \right) \right \rfloor\right\},
	\ee
which are the sets of values of $r_n^{(w)}=\frac{1}{\sN} \sum_{i \in I_+} w_i $ corresponding to subsets  $I_+ \subset [n]$ with $  \lfloor  n(1+m)/2\rfloor $ elements.  

We have  that 
	\be
 	\frac{1}{2}(1+m)\, a_1 - \frac{\rho_n(m)}{n}   \le \inf  {\cal A}_n(m), \qquad   \sup  {\cal A}_n(m) \le  \frac{1}{2}(1+m)\, a_K - \frac{\rho_n(m)}{n},
	\label{bounds-Anm}
	\ee
with $\rho_n(m)=n (1+m)/2- \lfloor n (1+m)/2\rfloor $. Obviously  $\frac{\rho_n(m)}{n}=O(n^{-1})$, since $0\le \rho_n(m)<1$. 
Moreover, by \eqref{bounds-Anm} and the fact that $\inf  {\cal A}_n(m)\geq a_1 \sum_k q_k$ and $\sup  {\cal A}_n(m)\leq a_K \sum_k q_k$,
	\be
 	\inf  {\cal A}_n(m) \to  \frac{1}{2}(1+m)\, a_1, \quad \mbox \quad  \sup  {\cal A}_n(m) \to  \frac{1}{2}(1+m)\, a_K
	\ee
as $n \to \infty$.
The previous remark imply that   $r^{(w)}_n$ is close to some $x\in   \left [ \frac{1}{2}(1+m)\, a_1  ,  \frac{1}{2}(1+m)\, a_K  \right ]$ for large $n$. Therefore, we claim that 
	\be
	\label{probsumwi_bis}
	\lim_{\sN\to \infty} \frac{1}{\sN}  \log \prob_{\pi_\sN} \left (\sum_{i \in I_+} w_i  =  \sN\, x   ~\bigg|~  |I_+ | = \left \lfloor \sN(1+m)/2\right\rfloor \right) =
	\left \{ 
	\begin{array}{ll}
	\tS_m(x), & x\in   \left [ \frac{1}{2}(1+m)\, a_1  ,  \frac{1}{2}(1+m)\, a_K  \right ],\\ \\
	-\infty, & \mbox{otherwise},
	\end{array}
	\right.
	\ee 
where $\tS_m(x)$ has to be computed. 
To this end, we observe now that  the probability in \eqref{probsumwi_bis} can be written as 
	\be
 	\prob_{\pi_n} \left ( \sum_{i \in I_+} w_i  = \sN \, x ~\bigg|~ |I_+ | = \left \lfloor \sN {(1+m)}{/2} \right \rfloor \right )  
	= \sum_{q \in {\cal Q}_n\, \atop \,  a \cdot  q =x } \prob_n (q_1,q_2,\ldots, q_K \mid m) ,\nn
	\ee
where the sum is extended to those $k$-tuples $q \in  {\cal Q}_n $ for which the event $ \sum_{i \in I_+} w_i  =\sN\, x $ is realized. In the previous sum the term that corresponds to the larger value of the exponent $g(q,m)$ in \eqref{asymptp} controls the behavior  in the limit, the remaining terms being sub-leading.  The quantity  depending on $m$  in the definition  of $h(q,m)$,  see \eqref{hsmall}, is negative and the sum on $k$ is positive, while $h(q,m)$ is negative in the range  defined in the first line of \eqref{defg}.  Thus, defining
	\be
	\tilde{h}(q_1,q_2,\ldots, q_K)= \sum_{k=1}^K [ p_k \log p_k - q_k \log q_k -(p_k - q_k)\log (p_k - q_k)] ,\nn
	\ee
we have to find
	\be
	\mathscr{S}_n=\sup_{a\cdot q= x, \atop  \sum_k q_k=\frac 1 n \lfloor n (m+1)/2 \rfloor} \tilde{h}(q_1,q_2,\ldots, q_K).\nn
	\ee
In the previous equation the notation $\mathscr{S}_n$ emphasizes the fact that due to the constraints, the sup depends on $n$. As a consequence, the optimization point $q^\star=(q_1^\star,\ldots, q_k^\star)$ will depend on $n$. In order to find $q^\star$ we introduce the multipliers $\lambda_1$ and $\lambda_2$ conjugate to $x$ and $m$, and write the Lagrangian function as
	\be
	L(q_1,q_2,\ldots, q_K; \lambda_1,\lambda_2)= \tilde{h}(q_1,q_2,\ldots, q_K) +  \lambda_1 (\sum_{k=1}^K a_k q_k -x) + \lambda_2 ( \sum_k q_k-\widetilde{m}_n),\nn
	\ee
where we set
	\be\nn
	\widetilde{m}_n:= \frac 1 n \lfloor n (m+1)/2 \rfloor= \frac{1+m}{2} -\frac{\rho_n(m)}{n}=  \frac{1+m}{2} +O(n^{-1}).
	\ee
By imposing that   $\partial L/\partial q_k=0,\, k=1,\ldots, K$, we obtain that the stationarity  point $q^\star(n)=(q_1^\star(n),\ldots, q_k^\star(n))$ of the function $\tilde{h}$ satisfies
	\be\label{qstarn}
	q^\star_k(n)=\frac{p_k \e^{\lambda_1(n) a_k +\lambda_2(n)}}{ 1+ \e^{\lambda_1(n) a_k +\lambda_2(n)}}, \quad k=1,\ldots, K,\nn
	\ee
with $\lambda_1(n)=\lambda_1(x,\widetilde{m}_n)$,  $\lambda_2(n)=\lambda_2(x,\widetilde{m}_n)$.
By introducing the notation
	\be
	u_k(n)= \frac{q^\star_k(n)}{p_k}=\frac{\e^{\lambda_1(n) a_k +\lambda_2(n)}}{ 1+ \e^{\lambda_1(n) a_k +\lambda_2(n)}}\, ,\nn
	\ee	
we write 
	\begin{align}
	\mathscr{S}_n&=\tilde{h}(q^\star_1(n),q^\star_2(n),\ldots, q^\star_K(n))=-\sum_{k=1}^K p_k [u_k(n) \log u_k(n) + (1-u_k(n)) \log (1-u_k(n)) ]  \nonumber \\
	 &= \expec \left [  \frac{\e^{\lambda_1(n) W + \lambda_2(n)}}{1+\e^{\lambda_1(n) W + \lambda_2(n)} } \log \left ( \frac{\e^{\lambda_1(n) W + \lambda_2(n)}}{1+\e^{\lambda_1(n) W + \lambda_2(n)} } \right ) 
	+ \frac{1}{1+\e^{\lambda_1(n) W + \lambda_2(n)} } \log \left ( \frac{1}{1+\e^{\lambda_1(n) W + \lambda_2(n)} } \right )\right ].\label{sn}
	\end{align}
The relation between the multipliers $\lambda_1$, $\lambda_2$ and the parameters $x$, $\widetilde{m}_n$ can be made explicit by recalling that, since the probability vector $(p_1,\ldots, p_K)$ is the distribution of $W$, 
 from \eqref{adotq} we have
	\be\label{eqxn}
	x= \sum_{k=1}^K a_k p_k \frac{\e^{\lambda_1(n) a_k +\lambda_2(n)}}{ 1+ \e^{\lambda_1(n) a_k +\lambda_2(n)}}
	= \expec \left [  W   \frac{\e^{\lambda_1(n) W +\lambda_2(n)}}{ 1+ \e^{\lambda_1(n) W +\lambda_2(n)}} \right ],
	\ee
and, from \eqref{sumq},
	\be\label{eqmn}
	\widetilde{m}_n=\frac{1+m}{2}+O(n^{-1})=  \sum_{k=1}^K p_k  \frac{\e^{\lambda_1(n) a_k +\lambda_2(n)}}{ 1+ \e^{\lambda_1(n) a_k +\lambda_2(n)}}
	=  \expec \left [   \frac{\e^{\lambda_1 (n)W +\lambda_2(n)}}{ 1+ \e^{\lambda_1(n) W +\lambda_2(n)}}    \right ] ,
	\ee
where  \eqref{adotq} and \eqref{sumq} have been used.  By taking the limit of \eqref{eqxn} and \eqref{eqmn} as $n\to \infty$, we see that $\lambda_1(n)$ and $\lambda_2(n)$ converge to $\lambda_1$ and $\lambda_2$ that solve 
	\be
	x = \sum_{k=1}^K a_k p_k \frac{\e^{\lambda_1 a_k +\lambda_2}}{ 1+ \e^{\lambda_1 a_k +\lambda_2}}
	= \expec \left [  W   \frac{\e^{\lambda_1 W +\lambda_2}}{ 1+ \e^{\lambda_1 W +\lambda_2}} \right ],\nn
	\ee
and 
	\be
	\frac{1+m}{2}=  \sum_{k=1}^K p_k  \frac{\e^{\lambda_1 a_k +\lambda_2}}{ 1+ \e^{\lambda_1 a_k +\lambda_2}}
	=  \expec \left [   \frac{\e^{\lambda_1 W +\lambda_2}}{ 1+ \e^{\lambda_1 W +\lambda_2}}    \right ] ,\nn
	\ee	 
that is \eqref{eqlambda12}. From this fact it follows that in the same limit $n\to \infty$,
 	\be\nn
 	q^\star_k(n) \to q^\star_k =\frac{p_k \e^{\lambda_1 a_k +\lambda_2}}{ 1+ \e^{\lambda_1a_k +\lambda_2}}\, 
	\quad \mbox{and}\quad u_k(n)\to u_k=\frac{\e^{\lambda_1 a_k +\lambda_2}}{ 1+ \e^{\lambda_1 a_k +\lambda_2}}\, 
	\ee
and, thus,
 	\be
 	\mathscr{S}_n\to \mathscr{S}:=  \expec \left [  \frac{\e^{\lambda_1 W + \lambda_2}}{1+\e^{\lambda_1 W + \lambda_2} } 
	\log \left ( \frac{\e^{\lambda_1 W + \lambda_2}}{1+\e^{\lambda_1 W + \lambda_2} } \right ) 
	+ \frac{1}{1+\e^{\lambda_1 W + \lambda_2} } \log \left ( \frac{1}{1+\e^{\lambda_1 W + \lambda_2} } \right )\right ]
 	\ee
Then, from  \eqref{asymptp}, \eqref{hsmall}, and the previous display, we obtain the limit in \eqref{probsumwi_bis} with
	\begin{align}\label{ratefS}
	\tS_m(x)&= \frac{1+m}{2} \log ( \frac{1+m}{2}) + \frac{1-m}{2} \log ( \frac{1-m}{2})  \nonumber \\
	&\qquad+ \expec \left [  \frac{\e^{\lambda_1 W + \lambda_2}}{1+\e^{\lambda_1 W + \lambda_2} } \log \left ( \frac{\e^{\lambda_1 W + \lambda_2}}{1+\e^{\lambda_1 W + \lambda_2} } \right ) 
	+ \frac{1}{1+\e^{\lambda_1 W + \lambda_2} } \log \left ( \frac{1}{1+\e^{\lambda_1 W + \lambda_2} } \right )\right ].
	\end{align}
\paragraph{\bf Moment generating function of the Hamiltonian $H^{{\sss {\mathrm{ICW}}}}_{\sN}(\sigma)$.}
Our next step to compute  the cumulant generating function of the Hamiltonian \eqref{hamICWmwmn} that we rewrite as a function of
\be
\rnw= \frac{1}{\sN} \sum_{i\in I_+}  w_{i},\nn
\ee
for which we have proven \eqref{probsumwi_bis}. In this way we obtain
\be
H^{\sss {\mathrm{ICW}}}_{\sN}( \rnw ,m_n)= \sN \left [ \frac{ 2 \tb}{\expec [W]}  (\rnw)^2-2\tb  \frac{\expec{[W_{\sN}]}}{\expec [W]} \rnw + \frac{\tb}{2} \frac{\expec{[W_{\sN}]}^2}{\expec [W]} +  B m_n\right ].\nn
\ee
Now, writing $\expec[W_n] = \expec[W] + \epsilon_n$ and defining 
\be
 h^{\sss {\mathrm{ICW}}}_{\sN}( \rnw ,m_n):= \frac{2 \tb}{\expec [W]}  (\rnw)^2-2\tb \rnw + \frac{\tb}{2} \expec{[W]} +  B m_n ,\nn
\ee
we have
\be
H^{\sss {\mathrm{ICW}}}_{\sN}( \rnw ,m_n)= \sN\, h^{\sss {\mathrm{ICW}}}_{\sN}( \rnw ,m_n) + n\, \epsilon_n  \left [ - \frac{2 \tb}{\expec [W]} \rnw  +\tb  + \frac{\tb}{2 \expec [W]} \epsilon_n\right ].\nn
\ee
Since Condition \ref{cond-weightfinitetype} implies that $ \epsilon_n=o(1)$ the last addend  in the previous display is $o(n)$. Now we can finally write the 
cumulant generating function and  apply Varadhan's lemma  to compute
	\begin{align}\label{cumgenfhcond}
	&\lim_{\sN\to \infty } \frac{1}{\sN} \log  \expec_{\pi_n}\left [  \e^{H^{{\sss {\mathrm{ICW}}}}_{\sN}(\rnw, m_n)} \mid m_{\sN}=m\right ]
        =\lim_{\sN\to \infty } \frac{1}{\sN} \log  \expec_{\pi_n}\left [ \e^{ \sN\, [h^{\sss {\mathrm{ICW}}}_{\sN}( \rnw ,m_n) +o(1)]} \mid m_{\sN}=m\ \right ]\nonumber \\
	&\qquad= \frac{\tb }{2} \expec [W]+ B\, m + \sup_{x\in {\cal A}} \left [ \frac{2 \tb}{\expec [W]} x^2 -2 \tb  x - \tS_m(x) \right],
	\end{align}
where the large deviation property \eqref{probsumwi_bis} has been used and  ${\cal A}= \left [ \frac{1}{2}(1+m)\, a_1  ,  \frac{1}{2}(1+m)\, a_K  \right ]$. We can now move to the final step.

\paragraph{\bf Asymptotic behavior of $\mathbb{P}_{{\mu}^{\sss {\mathrm{ICW}}}_n}(m_{\sN}=m)$.}  Let us observe that, since the conditional  average on the left hand side to the previous display is computed with respect to the uniform measure $\pi_n(\sigma)=2^{-n}$ on the spins $\sigma$,
	\be
	\expec_{\pi_n} \left [  \e^{H^{\sss {\mathrm{ICW}}}_{\sN}(\sigma)} | m_{\sN}=m \right ]=\frac{\sum_\sigma \indic{m_{\sN}(\sigma)=m}  \e^{H^{\sss {\mathrm{ICW}}}_{\sN}(\sigma)}  \pi_n(\sigma)}{\mathbb{P}_{\pi_n}(m_{\sN}=m)}
	=\frac{ 2^{-n}\, Z^{\sss {\mathrm{ICW}}}_{\sN}}{\mathbb{P}_{\pi_n}(m_{\sN}=m) }\,\mathbb{P}_{{\mu}^{\sss {\mathrm{ICW}}}_n}(m_{\sN}=m),\nn
	\ee
with $ Z^{\sss {\mathrm{ICW}}}_{\sN}=\sum_\sigma  \e^{H^{\sss {\mathrm{ICW}}}_{\sN}(\sigma)}$, the partition function of the ICW model.
Thus,
	\be
	\frac{1}{\sN} \log \mathbb{P}_{{\mu}^{\sss {\mathrm{ICW}}}_n}(m_{\sN}=m)= \frac{1}{\sN} \log  \expec_{\pi_n} \left [  \e^{H^{\sss {\mathrm{ICW}}}_{\sN}(\sigma)} | m_{\sN}=m \right ]+ \frac{1}{\sN} \log {\mathbb{P}_{\pi_n}(m_{\sN}=m) } 
	-\frac{1}{\sN} \log Z^{\sss {\mathrm{ICW}}}_{\sN}\, + \log 2.\nn
	\ee
Since $ {\mathbb{P}_{\pi_n}(m_{\sN}=m) } =2^{-\sN} {\sN\choose \sN(\frac {1+m}{2}) }$, by \eqref{asymptbin},
	\be
	\lim_{\sN\to \infty} \frac{1}{\sN} \log {\mathbb{P}_{\pi_n}(m_{\sN}=m) } = -\log 2 - \frac{1+m}{2} \log ( \frac{1+m}{2}) - \frac{1-m}{2} \log ( \frac{1-m}{2}),\nn
	\ee
and
	\be
	\lim_{\sN\to \infty} \frac{1}{\sN} \log Z^{\sss {\mathrm{ICW}}}_{\sN}= \psi^{\sss {\mathrm{ICW}}} (\tb, B),\nn
	\ee
is the pressure of the Inhomogeneous Curie-Weiss model \cite{DomGiaGibHofPri16}. Thus, by \eqref{cumgenfhcond},
	\begin{align}
	\lim_{\sN\to\infty} \frac{1}{\sN} \log \mathbb{P}_{{\mu}^{\sss {\mathrm{ICW}}}_n}(m_{\sN}=m)= & \frac{\tb }{2} \expec [W]+ 
	B\, m + \sup_{x\in {\cal A}} \left [ \frac{2 \tb}{\expec [W]} x^2 -2 \tb x - \tS_m(x) \right]  -\psi^{\sss {\mathrm{ICW}}} (\tb, B) \nonumber \\
	&\quad  - \frac{1+m}{2} \log ( \frac{1+m}{2}) - \frac{1-m}{2} \log ( \frac{1-m}{2}),\nn
	\end{align}
from which, recalling \eqref{ratefS}, we obtain \eqref{ldevcomb} and \eqref{ratefIcomb}.
\end{proof}

\vspace{0.4cm}
\noindent
{\small
{\bfseries Acknowledgments.}
SD has been supported by the Deutsche Forschungsgemeinschaft (DFG) via RTG 2131 {\em High-dimensional Phenomena in Probability -- Fluctuations and Discontinuity}. 
We acknowledge financial support from the Italian Research Funding Agency (MIUR) through FIRB project 
``Stochastic processes in interacting particle systems: duality, metastability and their applications'', 
grant n.\ RBFR10N90W.
The work of RvdH is supported in part by the Netherlands
Organisation for Scientific Research (NWO) through VICI grant 639.033.806 and the Gravitation {\sc Networks} grant 024.002.003.
C. Giberti and C. Giardin\`a acknowledge financial supports from ``Fondo di Ateneo per la Ricerca 2015'' and ``Fondo di Ateneo per la Ricerca 2016'',
Universit\`a di Modena e Reggio Emilia. }


\begin{thebibliography}{[99]}


\bibitem{Boll01}
{\sc B.~Bollob{\'a}s.}
\newblock {\em Random graphs}, volume~{73} of {\em Cambridge Studies in
  Advanced Mathematics}.
\newblock Cambridge University Press, Cambridge, second edition, (2001).

\bibitem{BolJanRio07}
{\sc B.~Bollob{\'a}s, S.~Janson, and O.~Riordan.}
\newblock The phase transition in inhomogeneous random graphs.
\newblock {\em Random Structures Algorithms}, {31}(1):3--122, (2007).

\bibitem{BriDeiMar-Lof05}
{\sc T.~Britton, M.~Deijfen, and A.~Martin-L{\"o}f.}
\newblock Generating simple random graphs with prescribed degree distribution.
\newblock {\em J. Stat. Phys.}, {124}(6):1377--1397, (2006).

\bibitem{Can17a}
{\sc V.~Can.}
\newblock Annealed limit theorems for the {I}sing model on random regular graphs.
\newblock Available at arXiv: 1701.08639 [math.PR], Preprint (2017).

\bibitem{Can17b}
{\sc V.~Can.}
\newblock Critical behavior of the annealed {I}sing model on random regular graphs.
\newblock Available at arXiv: 1701.08628 [math.PR], Preprint (2017).


\bibitem{ChuLu02b}
{\sc F.~Chung and L.~Lu.}
\newblock Connected components in random graphs with given expected degree sequences.
\newblock {\em Ann. Comb.}, {6}(2):125--145, (2002).

\bibitem{ChuLu06c}
{\sc F.~Chung and L.~Lu.}
\newblock {\em Complex graphs and networks}, volume~{107} of {\em CBMS Regional Conference Series in Mathematics}.
\newblock Published for the Conference Board of the Mathematical Sciences, Washington, DC, (2006).

\bibitem{DM}
\newblock {\sc A.~Dembo and A.~Montanari.}
\newblock  Ising models on locally tree-like graphs.
\newblock Annals of Applied Probability,  20, 565--592, (2010).

\bibitem{DM2}
\newblock {\sc A.~Dembo and A.~Montanari.}
\newblock Gibbs measures and phase transitions on sparse random graphs.
\newblock {\em Braz. J. Probab. Stat.}, {24}(2):137--211, (2010).


\bibitem{DemZei09}
\newblock {\sc A.~Dembo and O.~Zeitoni.}
\newblock {\em Large Deviations Techniques and Applications.}
\newblock Springer, (2009).

\bibitem{DomGiaGibHofPri16}
{\sc S.~Dommers, C.~Giardin\`a, C.~Giberti, R.~v.~d. Hofstad, and M.~Prioriello.}
\newblock Ising critical behavior of inhomogeneous {C}urie-{W}eiss models and
  annealed random graphs.
\newblock {\em Comm. Math. Phys.}, {348}(1):221--263, (2016).

\bibitem{DGH}
\newblock {\sc S.~Dommers, C.~Giardin\`a and R.~van der Hofstad.}
\newblock  {Ising models on power-law random graphs.}
\newblock Journal of Statistical Physics, 141(4), 638--660, (2010).

\bibitem{DomGiaHof12}
{\sc S.~Dommers, C.~Giardin{\`a}, and R.~van der Hofstad.}
\newblock Ising critical exponents on random trees and graphs.
\newblock {\em Comm. Math. Phys.}, {328}(1):355--395, (2014).


\bibitem{DorGolMen08}
{\sc S.~Dorogovtsev, A.~Goltsev, and J.~Mendes.}
\newblock {Critical phenomena in complex networks}.
\newblock {\em Reviews of Modern Physics}, {80}(4):1275--1335, (2008).


\bibitem{Ellis}
\newblock {\sc R.S.~Ellis.}
\newblock \emph{The theory of large deviations: from Boltzmann's 1877 calculation to equilibrium macro states in 2D turbulence.}
\newblock Physica D, 133: 106--136 (1999).

\bibitem{Ell06}
\newblock {\sc R.S.~Ellis.}
\newblock \emph{Entropy, Large Deviations and Statistical Mechanics.}
\newblock Springer, (2006).

\bibitem{GGvdHP2}
\newblock {\sc C.~Giardin\`a, C.~Giberti, R.~van~der Hofstad and M.L.~Prioriello.}
\newblock Quenched central limit theorems for the Ising model on random graphs.
\newblock Journal of Statistical Physics, 160: 1623--1657, (2015).

\bibitem{GGvdHP}
\newblock {\sc C.~Giardin\`a, C.~Giberti, R.~van~der Hofstad and M.L.~Prioriello.}
\newblock \emph{Annealed central limit theorems for the Ising model on random graphs.}
\newblock ALEA, Latin American Journal of Probability and Mathematical Statistics, 13: 121--161, (2016).

\bibitem{vdH}
\newblock {\sc R.~van der Hofstad.}
\newblock \emph{Random graphs and complex networks. Vol. 1.}
\newblock Cambridge Series in Statistical and Probabilistic Mathematics, Cambridge University Press, Cambridge, (2017).

\bibitem{SaintFlour-vdH}
\newblock {\sc R.~van der Hofstad.}
\newblock \emph{Stochastic processes on random graphs.}
\newblock Lecture notes for Saint-Flour Summer School 2017. In preparation (2018+).

\bibitem{Jans08a}
{\sc S.~Janson.}
\newblock Asymptotic equivalence and contiguity of some random graphs.
\newblock {\em Random Structures Algorithms}, {36}(1):26--45, (2010).

\bibitem{JanLucRuc00}
\newblock {\sc S.~Janson, T.~{\L}uczak, and A.~Rucinski.}
\newblock {\em Random graphs}.
\newblock Wiley-Interscience Series in Discrete Mathematics and Optimization. Wiley-Interscience, New York, (2000).

\bibitem{LeoVazVesZec02}
{\sc M.~Leone, A.~V{\'a}zquez, A.~Vespignani, and R.~Zecchina.}
\newblock {Ferromagnetic ordering in graphs with arbitrary degree
  distribution}.
\newblock {\em The European Physical Journal B-Condensed Matter and Complex
  Systems}, {28}(2):191--197, (2002).


\bibitem{NorRei06}
{\sc I.~Norros and H.~Reittu.}
\newblock On a conditionally {P}oissonian graph process.
\newblock {\em Adv. in Appl. Probab.}, {38}(1):59--75, (2006).

\end{thebibliography}
\end{document}